\documentclass[12pt,a4paper]{amsart}
\usepackage{amsmath,amssymb,a4wide,amsthm,ifthen,hyperref,xcolor,enumerate,enumitem}
\usepackage[utf8]{inputenc}

\usepackage{graphicx}
\usepackage{url}
\usepackage{ulem}
\usepackage{epstopdf}
\usepackage[section]{placeins}
\usepackage{natbib}
\usepackage{flafter} 
\usepackage{multirow}

\usepackage[linesnumbered,ruled,vlined]{algorithm2e}

\SetKwInput{KwInput}{Input}                
\SetKwInput{KwOutput}{Output}

\newtheorem{theo}{Theorem}
\newtheorem{prop}{Proposition}
\newtheorem{lemma}{Lemma}
\theoremstyle{remark}

\begin{document}

\title[Variable selection in sparse GLARMA models]{Variable selection in sparse GLARMA models}

\date{\today}

\author{M. Gomtsyan}
\address{UMR MIA-Paris, AgroParisTech, INRAE, Universit\'e Paris-Saclay, 75005, Paris, France}
\email{mgomtsian@gmail.com}
\author{C. L\'evy-Leduc}
\address{UMR MIA-Paris, AgroParisTech, INRAE, Universit\'e Paris-Saclay, 75005, Paris, France}
\email{celine.levy-leduc@agroparistech.fr}
\author{S. Ouadah}
\address{UMR MIA-Paris, AgroParisTech, INRAE, Universit\'e  Paris-Saclay, 75005, Paris, France}
\email{sarah.ouadah@agroparistech.fr}
\author{L. Sansonnet}
\address{UMR MIA-Paris, AgroParisTech, INRAE, Universit\'e Paris-Saclay, 75005, Paris, France}
\email{laure.sansonnet@agroparistech.fr}

\keywords{GLARMA models; sparse; discrete-valued time series}

\maketitle

\begin{abstract}
  In this paper, we propose a novel and efficient two-stage variable selection approach for sparse GLARMA models, which are pervasive for
  modeling discrete-valued time series. Our approach consists in iteratively combining the estimation of the autoregressive moving average (ARMA) coefficients of
  GLARMA models with regularized methods designed for performing variable selection
  in regression coefficients of Generalized Linear Models (GLM). We first establish the consistency of the ARMA part coefficient estimators in a specific case.
Then, we explain how to efficiently implement our approach.  Finally, we assess the performance of
our methodology using synthetic data and compare it with alternative methods. 
Our approach is very attractive since it benefits from a low computational load and is able to outperform the other methods in terms of coefficient estimation,
particularly
in recovering the non null regression coefficients.

  

\end{abstract}



\section{Introduction}

\textcolor{black}{Discrete-valued time series arise in a wide variety of fields ranging from finance to molecular biology and public health.
For instance, we can mention the number of transactions in stocks in the finance field, see \cite{brannas:quoreshi:2010}.
In the field of molecular biology, modeling RNA-Seq kinetics data is a challenging issue, see \cite{Thorne:2018} and in the public health
context, there is an interest in the modeling of daily asthma presentations in a given hospital, see \cite{SOUZA:2014}.}


The literature on modeling discrete-valued time series is becoming increasingly abundant, see \cite{handbook:2016} for a review. 
Different classes of models have been proposed such as the Integer Autoregressive Moving Average 
(INARMA) models and the generalized state space models. 

The Integer Autoregressive process of order 1 (INAR(1)) was first introduced by \cite{McKenzie:1985} and the Integer-valued Moving Average 
(INMA) process is described in \cite{Al-Osh:1988}. One of the attractive features of INARMA processes is that their autocorrelation structure is similar 
to the one of autoregressive moving average (ARMA) models. However, it has to be noticed that statistical inference in these models is generally complicated and requires to develop 
intensive computational approaches such as the 
efficient MCMC algorithm devised by \cite{Neal:rao:2007} for INARMA processes of known AR and MA orders. This strategy was extended 
to unknown AR and MA orders by \cite{enciso:nea:rao:2009}.
 For further references on INARMA models, we refer the reader to \cite{weiss:dts}. 





The other important class of models for discrete-valued time series is the one of generalized state space models which can have a parameter-driven and an observation-driven version, 
see \cite{davis:1999} for a review. 
The main difference between \textcolor{black}{these two versions} is that in parameter-driven models, the state vector evolves independently of the past history
of the observations whereas the state vector depends on the past observations in observation-driven models. More precisely, in parameter-driven models, let $(\nu_t)$ be a stationary process,
the observations $Y_t$ are thus modeled as follows: conditionally on $(\nu_t)$, $Y_t$ has a Poisson distribution of parameter $\exp(\beta_0^\star+\sum_{i=1}^p\beta_i^\star x_{t,i}+\nu_t)$,
where the $x_{t,i}$'s are the $p$ regressor variables (or covariates). Estimating the parameters in such models has a very high computational load, see \cite{jung:2001}.

Observation-driven models initially proposed by \cite{cox:1981} and further studied in \cite{zeger:qaqish:1988} do not have this computational drawback and are thus considered as a promising alternative to parameter-driven models.  Different kinds of observation-driven models can be found in the literature: 
the Generalized Linear Autoregressive Moving Average (GLARMA) models introduced by \cite{davis:1999}
 and further studied in \cite{davis:dunsmuir:streett:2003}, \cite{davis:dunsmuir:street:2005}, \cite{dunsmuir:2015} and the (log-)linear Poisson autoregressive models studied in 
\cite{fokianos:2009}, \cite{fokianos:2011} and \cite{fokianos:2012}. Note that GLARMA models cannot be seen as a particular case of the log-linear Poisson autoregressive models.

\textcolor{black}{In the following, we shall consider the GLARMA model introduced in \cite{davis:dunsmuir:street:2005} with additional covariates. More precisely,} 
given the past history $\mathcal{F}_{t-1}=\sigma(Y_s,s\leq t-1)$, \textcolor{black}{we assume that}
\begin{equation}\label{eq:Yt}
Y_t|\mathcal{F}_{t-1}\sim\mathcal{P}\left(\mu_t^\star\right),
\end{equation}
where $\mathcal{P}(\mu)$ denotes the Poisson distribution with mean $\mu$. In (\ref{eq:Yt}),
\begin{equation}\label{eq:mut_Wt}
\mu_t^\star=\exp(W_t^\star) \textrm{ with } W_t^\star=\beta_0^\star+\sum_{i=1}^p\beta_i^\star x_{t,i}+Z_t^\star,
\end{equation}
where the $x_{t,i}$'s are the $p$ regressor variables ($p\geq 1$),
\begin{equation}\label{eq:Zt}
Z_t^\star=\sum_{j=1}^q \gamma_j^\star E_{t-j}^\star \textrm{ with } E_t^\star=\frac{Y_t-\mu_t^\star}{\mu_t^\star}=Y_t\exp(-W_t^\star)-1,
\end{equation}
with $1\leq q\leq\infty$
and $E_t^\star=0$ for all $t\leq 0$. \textcolor{black}{Here, the $E_t^\star$'s correspond to the working residuals in classical Generalized Linear Models (GLM), which means that we limit ourselves to the case 
$\lambda=1$ in the more general definition:
$
 E_t^\star=(Y_t-\mu_t^\star){\mu_t^{\star}}^{-\lambda}
$. Note that in the case where $q=\infty$, $(Z_t^\star)$ satisfies the ARMA-like recursions given in Equation (4) of \cite{davis:dunsmuir:street:2005}.
The model defined by (\ref{eq:Yt}), (\ref{eq:mut_Wt}) and (\ref{eq:Zt}) is thus referred as a GLARMA model.}

The main goal of this paper is to introduce a novel variable selection approach in the deterministic part \textcolor{black}{(covariates)} of
sparse GLARMA models that is in (\ref{eq:Yt}), (\ref{eq:mut_Wt}) and (\ref{eq:Zt})
where the vector of the $\beta_i^\star$'s is sparse meaning many $\beta_i^\star$'s are null. \textcolor{black}{The novel approach that we propose consists in} combining a procedure 
for estimating the ARMA part coefficients with regularized methods designed for GLM.

The paper is organized as follows. Firstly, in Section \ref{sec:estim}, we describe the classical estimation procedure in GLARMA models
and in Section \ref{sec:consistency}, establish a consistency result in a specific case.
Secondly, we propose a novel two-stage estimation procedure which is described in Section \ref{sec:our_estim}. It consists
in first estimating the ARMA coefficients and then in estimating the regression coefficients 
by using a regularized approach.
\textcolor{black}{The practical implementation of our approach is given in Section \ref{sec:practical}.}
Thirdly, in Section \ref{sec:num}, we provide some numerical experiments to illustrate our method and to compare its performance to alternative approaches
on finite sample size data. Finally, we give the proofs of the theoretical results in Section \ref{sec:proofs}.

\section{Statistical inference}\label{sec:stat_inf}

\subsection{Classical estimation procedure in GLARMA models}\label{sec:estim}


Classically, for estimating the parameter $\boldsymbol{\delta}^\star=(\boldsymbol{\beta}^{\star\prime},\boldsymbol{\gamma}^{\star\prime})$
where $\boldsymbol{\beta}^\star=(\beta_0^\star,\beta_1^\star,\dots,\beta_p^\star)'$ is the vector of regressor coefficients defined in (\ref{eq:mut_Wt})
and $\boldsymbol{\gamma}^\star=(\gamma_1^\star,\dots,\gamma_q^\star)'$ is the vector of the ARMA part coefficients defined in (\ref{eq:Zt}), the following criterion,
based on the conditional log-likelihood, is maximized with respect to $\boldsymbol{\delta}=(\boldsymbol{\beta}',\boldsymbol{\gamma}')$, with
$\boldsymbol{\beta}=(\beta_0,\beta_1,\dots,\beta_p)'$ and $\boldsymbol{\gamma}=(\gamma_1,\dots,\gamma_q)'$:
\begin{equation}\label{eq:likelihood}
L(\boldsymbol{\delta})=\sum_{t=1}^n\left(Y_t W_t(\boldsymbol{\delta})-\exp(W_t(\boldsymbol{\delta}))\right).
\end{equation}
In (\ref{eq:likelihood}),
\begin{equation}\label{eq:Wt}
W_t(\boldsymbol{\delta})=\boldsymbol{\beta}'x_t+Z_t(\boldsymbol{\delta})=\beta_0+\sum_{i=1}^p\beta_i x_{t,i}+\sum_{j=1}^q \gamma_j E_{t-j}(\boldsymbol{\delta}),
\end{equation}
with $x_t=(x_{t,0},x_{t,1},\dots,x_{t,p})'$, $x_{t,0}=1$ for all $t$ and

\begin{eqnarray}
E_t(\boldsymbol{\delta})=Y_t\exp(-W_t(\boldsymbol{\delta}))-1,\mbox{ if }t>0\mbox{ and }E_t(\boldsymbol{\delta})=0\mbox{, if }t\leq 0.
\label{eq:Et}
\end{eqnarray}
%
For further details on the choice of this criterion, we refer the reader to \cite{davis:dunsmuir:street:2005}.

To obtain $\widehat{\boldsymbol{\delta}}$ defined by
\begin{equation*}
\widehat{\boldsymbol{\delta}}=\textrm{Argmax}_{\boldsymbol{\delta}} \; L(\boldsymbol{\delta}),
\end{equation*}
the first derivatives of $L$ are considered:
\begin{equation}\label{eq:def:grad}
\frac{\partial L}{\partial \boldsymbol{\delta}}(\boldsymbol{\delta})=\sum_{t=1}^n(Y_t-\exp(W_t(\boldsymbol{\delta}))\frac{\partial W_t}{\partial \boldsymbol{\delta}}(\boldsymbol{\delta}),
\end{equation}
where 
\begin{equation*}
\frac{\partial W_t}{\partial \boldsymbol{\delta}}(\boldsymbol{\delta})=\frac{\partial\boldsymbol{\beta}' x_t}{\partial \boldsymbol{\delta}}+\frac{\partial Z_t}{\partial \boldsymbol{\delta}}
(\boldsymbol{\delta}),
\end{equation*}
$\boldsymbol{\beta}$, $x_t$ and $Z_t$ being given in (\ref{eq:Wt}). 
The computations of the first derivatives of $W_t$ are detailed in Section \ref{subsub:first_derive}. 

Based on Equation (\ref{eq:def:grad}) which is non linear in $\boldsymbol{\delta}$ and which has to be recursively computed, it is not possible to
obtain a closed-form formula for $\widehat{\boldsymbol{\delta}}$. 
Thus $\widehat{\boldsymbol{\delta}}$ is computed by using the Newton-Raphson algorithm. 
More precisely, starting from an initial value for $\boldsymbol{\delta}$ denoted by
$\boldsymbol{\delta}^{(0)}$, the following recursion for $r\geq 1$ is used: 
\begin{equation}\label{eq:newton_raphson}
\boldsymbol{\delta}^{(r)}=\boldsymbol{\delta}^{(r-1)}-\frac{\partial^2 L}{\partial \boldsymbol{\delta}'\partial \boldsymbol{\delta}}(\boldsymbol{\delta}^{(r-1)})^{-1}\frac{\partial L}{\partial \boldsymbol{\delta}}(\boldsymbol{\delta}^{(r-1)}),
\end{equation}
where $\frac{\partial^2 L}{\partial \boldsymbol{\delta}'\partial \boldsymbol{\delta}}$ corresponds to the Hessian matrix of $L$
and is defined in (\ref{eq:def:hess}) given below.
Hence, it requires the computation of the first and second derivatives of $L$. 
We already explained how to compute the first derivatives of $L$. As for the second derivatives of $L$, it can be obtained as follows:

\begin{equation}\label{eq:def:hess}
\frac{\partial^2 L}{\partial \boldsymbol{\delta}'\partial \boldsymbol{\delta}}(\boldsymbol{\delta})
=\sum_{t=1}^n(Y_t-\exp(W_t(\boldsymbol{\delta}))\frac{\partial^2 W_t}{\partial \boldsymbol{\delta}'\partial\boldsymbol{\delta}}(\boldsymbol{\delta})
-\sum_{t=1}^n\exp(W_t(\boldsymbol{\delta}))\frac{\partial W_t}{\partial \boldsymbol{\delta}'}(\boldsymbol{\delta})\frac{\partial W_t}{\partial \boldsymbol{\delta}}(\boldsymbol{\delta}).
\end{equation}
The computations of the second derivatives of $W_t$ are detailed in Section \ref{subsub:second_derive}. 

However, in our sparse framework where many components of $\boldsymbol{\beta}^\star$ are null,
this procedure provides poor estimation results, see Section \ref{sec:sparse_estim} for numerical illustration.
This is the reason why we devised a novel estimation procedure described in the next section.

\subsection{Our estimation procedure}\label{sec:our_estim}


For selecting the most relevant components of $\boldsymbol{\beta}^\star$, we propose the following two-stage procedure:  Firstly, we
estimate $\boldsymbol{\gamma}^\star$ by using the Newton-Raphson algorithm described in Section \ref{sec:estim_gamma}
and secondly, we estimate  $\boldsymbol{\beta}^\star$
by using the regularized approach detailed in Section \ref{sec:variable}.

\subsubsection{Estimation of $\boldsymbol{\gamma}^\star$}\label{sec:estim_gamma}

To estimate $\boldsymbol{\gamma}^\star$, we propose using
\begin{equation*}
\widehat{\boldsymbol{\gamma}}=\textrm{Argmax}_{\boldsymbol{\gamma}} \; L({\boldsymbol{\beta}^{(0)}}',\boldsymbol{\gamma}'),
\end{equation*}
where $L$ is defined in (\ref{eq:likelihood}), $\boldsymbol{\beta}^{(0)}=(\beta_{0}^{(0)},\dots,\beta_{p}^{(0)})'$ is a given initial value for
$\boldsymbol{\beta}^\star$ and $\boldsymbol{\gamma}=(\gamma_1,\dots,\gamma_q)'$.
Similar to the approach proposed in Section \ref{sec:estim}, we use the Newton-Raphson algorithm
to obtain $\widehat{\boldsymbol{\gamma}}$ based on the following recursion for $r\geq 1$ starting from the initial value
$\boldsymbol{\gamma}^{(0)}=(\gamma_1^{(0)},\dots,\gamma_q^{(0)})'$:
\begin{equation}\label{eq:newton_raphson:gamma}
  \boldsymbol{\gamma}^{(r)}=\boldsymbol{\gamma}^{(r-1)}-\frac{\partial^2 L}{\partial \boldsymbol{\gamma}'\partial
    \boldsymbol{\gamma}}({\boldsymbol{\beta}^{(0)}}',{\boldsymbol{\gamma}^{(r-1)}}')^{-1}
  \frac{\partial L}{\partial \boldsymbol{\gamma}}({\boldsymbol{\beta}^{(0)}}',{\boldsymbol{\gamma}^{(r-1)}}'),
\end{equation}
where the first and second derivatives of $L$ are obtained using the same strategy as the one used
for deriving Equations (\ref{eq:def:grad}) and (\ref{eq:def:hess}) in Section \ref{sec:estim}.

\subsubsection{Variable selection: Estimation of $\boldsymbol{\beta}^\star$}\label{sec:variable}

To perform variable selection in the $\beta_i^\star$ of Model (\ref{eq:mut_Wt}) aimed to obtain a sparse estimator of $\beta_i^\star$, 
we shall use a methodology inspired by \cite{friedman:hastie:tibshirani:2010} 
for fitting generalized linear models with $\ell_1$ penalties. It consists in penalizing a quadratic approximation to the log-likelihood obtained by a Taylor expansion. Using $\boldsymbol{\beta}^{(0)}$ and $\widehat{\boldsymbol{\gamma}}$ defined in Section \ref{sec:estim_gamma}, 
the quadratic approximation is obtained as follows:
\begin{align*}
  \widetilde{L}(\boldsymbol{\beta})&:=L(\beta_0,\dots,\beta_p,\widehat{\gamma})\\
  &=\widetilde{L}(\boldsymbol{\beta}^{(0)})
+\frac{\partial L}{\partial \boldsymbol{\beta}}(\boldsymbol{\beta}^{(0)},\widehat{\boldsymbol{\gamma}})(\boldsymbol{\beta}-\boldsymbol{\beta}^{(0)})
+\frac12 (\boldsymbol{\beta}-\boldsymbol{\beta}^{(0)})'
\frac{\partial^2 L}{\partial \boldsymbol{\beta}\partial \boldsymbol{\beta}'}(\boldsymbol{\beta}^{(0)},\widehat{\boldsymbol{\gamma}})
(\boldsymbol{\beta}-\boldsymbol{\beta}^{(0)}),
\end{align*}
where
$$\frac{\partial L}{\partial \boldsymbol{\beta}}=\left(\frac{\partial L}{\partial \beta_0},\dots,\frac{\partial L}{\partial \beta_p}\right)
\textrm{ and }
\frac{\partial^2 L}{\partial \boldsymbol{\beta}\partial \boldsymbol{\beta}'}=\left(\frac{\partial^2 L}{\partial \beta_j \partial \beta_k}\right)_{0\leq j,k\leq p}.$$
Thus,
\begin{align}\label{eq:Ltilde}
\widetilde{L}(\boldsymbol{\beta})=\widetilde{L}(\boldsymbol{\beta}^{(0)})+\frac{\partial L}{\partial \boldsymbol{\beta}}(\boldsymbol{\beta}^{(0)},\widehat{\boldsymbol{\gamma}})
U(\boldsymbol{\nu}-\boldsymbol{\nu}^{(0)})-\frac12 (\boldsymbol{\nu}-\boldsymbol{\nu}^{(0)})' \Lambda (\boldsymbol{\nu}-\boldsymbol{\nu}^{(0)}),
\end{align}
where $U\Lambda U'$ is the singular value decomposition of the positive semidefinite symmetric matrix 
$-\frac{\partial^2 L}{\partial \boldsymbol{\beta}\partial \boldsymbol{\beta}'}(\boldsymbol{\beta}^{(0)},\widehat{\boldsymbol{\gamma}})$
and $\boldsymbol{\nu}-\boldsymbol{\nu}^{(0)}=U'(\boldsymbol{\beta}-\boldsymbol{\beta}^{(0)})$.

In order to obtain a sparse estimator of $\boldsymbol{\beta}^\star$, we propose using $\widehat{\boldsymbol{\beta}}(\lambda)$ defined by
\begin{equation}\label{eq:beta_hat}
\widehat{\boldsymbol{\beta}}(\lambda)=\textrm{Argmin}_{\boldsymbol{\beta}}\left\{-\widetilde{L}_Q(\boldsymbol{\beta})+\lambda \|\boldsymbol{\beta}\|_1\right\},
\end{equation}
for a positive $\lambda$, where $\|\boldsymbol{\beta}\|_1=\sum_{k=0}^p |\beta_k|$ and $\widetilde{L}_Q(\boldsymbol{\beta})$ denotes the quadratic approximation of the log-likelihood. 
This quadratic approximation is defined by
\begin{equation}\label{eq:LQtilde}
-\widetilde{L}_Q(\boldsymbol{\beta})=\frac12\|\mathcal{Y}-\mathcal{X}\boldsymbol{\beta}\|_2^2,
\end{equation}
with
\begin{equation}\label{eq:def_Y_X}
\mathcal{Y}=\Lambda^{1/2}U'\boldsymbol{\beta}^{(0)}
+\Lambda^{-1/2}U'\left(\frac{\partial L}{\partial \boldsymbol{\beta}}(\boldsymbol{\beta}^{(0)},\widehat{\boldsymbol{\gamma}})\right)' ,\;  \mathcal{X}=\Lambda^{1/2}U'
\end{equation}
and $\|\cdot\|_2$ denoting the $\ell_2$ norm in $\mathbb{R}^{p+1}$.
Computational details for obtaining the expression \eqref{eq:LQtilde} of $\widetilde{L}_Q(\boldsymbol{\beta})$ appearing in
Criterion (\ref{eq:beta_hat}) are provided in Section \ref{sub:var_sec}.

To obtain the final estimator $\widehat{\boldsymbol{\beta}}$ of $\boldsymbol{\beta}^\star$, we shall consider two different approaches:

\begin{itemize}
\item \textsf{Standard stability selection.} It consists in using the stability selection procedure devised by \cite{meinshausen:buhlmann:2010}
  which guarantees the robustness of the selected variables. This  approach can be described as follows.
The vector $\mathcal{Y}$ defined in (\ref{eq:def_Y_X}) is randomly split into several subsamples of size $(p+1)/2$, which corresponds to half of the length of $\mathcal{Y}$.
For each subsample $\mathcal{Y}^{(s)}$ and the corresponding design matrix $\mathcal{X}^{(s)}$,
the LASSO criterion (\ref{eq:beta_hat}) is applied with a given $\lambda$,
where $\mathcal{Y}$ and $\mathcal{X}$ are replaced by $\mathcal{Y}^{(s)}$ and $\mathcal{X}^{(s)}$, respectively.
For each subsampling, the indices $i$ of the non null $\widehat{\beta}_i$ are stored 
and, for a given threshold, we keep in the final
 set of selected variables only the ones appearing a number of times larger than this threshold. 
 Concerning the choice of $\lambda$, we shall consider the one obtained by cross-validation (Chapter 7 of \cite{hastie2009elements})
 and the smallest element of the grid of $\lambda$ provided by the R \texttt{glmnet} package.
\item \textsf{Fast stability selection.}
  It consists in applying the LASSO criterion (\ref{eq:beta_hat}) for several values of $\lambda$.
  For each $\lambda$, the indices $i$ of the non null $\widehat{\beta}_i(\lambda)$ are stored and, for a given threshold, we keep in the final
 set of selected variables only the ones appearing a number of times larger than this threshold.
\end{itemize}
These approaches will be further investigated in Section \ref{sec:num}.

\subsection{\textcolor{black}{Practical implementation}}\label{sec:practical}

In practice, the previous approach can be summarized as follows. 

\begin{itemize}
\item\textsf{Initialization.} We take for $\boldsymbol{\beta}^{(0)}$ 
the estimator of $\boldsymbol{\beta}^\star$ obtained by fitting a GLM to the observations
$Y_1,\dots,Y_n$ thus ignoring the ARMA part of the model in the case where $n>p$.
If $p$ is larger than $n$, then a regularized criterion for GLM models
can be used, see for instance \cite{friedman:hastie:tibshirani:2010}.
For $\boldsymbol{\gamma}^{(0)}$, we take the null vector.
\item\textsf{Newton-Raphson algorithm.} We use the recursion defined in (\ref{eq:newton_raphson:gamma}) with
the initialization $(\boldsymbol{\beta}^{(0)},\boldsymbol{\gamma}^{(0)})$ obtained in the previous step and
we stop at the iteration $R$ such that $\|\boldsymbol{\gamma}^{(R)}-\boldsymbol{\gamma}^{(R-1)}\|_\infty<10^{-6}$.
\item\textsf{Variable selection.} To obtain a sparse estimator of $\boldsymbol{\beta}^\star$, we use the criterion (\ref{eq:beta_hat})
where $\boldsymbol{\beta}^{(0)}$ and $\widehat{\boldsymbol{\gamma}}$ appearing in (\ref{eq:def_Y_X}) are replaced by $\boldsymbol{\beta}^{(0)}$ and $\boldsymbol{\gamma}^{(R)}$ 
obtained in the previous steps. We thus get $\widehat{\boldsymbol{\beta}}$ by using one of the three approaches described at the end of Section
\ref{sec:variable}.
\end{itemize}

This procedure can be improved by iterating the \textsf{Newton-Raphson algorithm} and \textsf{Variable selection} steps.
More precisely, let us denote by $\boldsymbol{\beta}_{1}^{(0)}$, $\gamma_{1}^{(R_1)}$ and $\widehat{\boldsymbol{\beta}}_1$ 
the values of $\boldsymbol{\beta}^{(0)}$, $\gamma^{(R)}$ and $\widehat{\boldsymbol{\beta}}$ obtained
in the three steps described above at the first iteration.
At the second iteration, $(\boldsymbol{\beta}^{(0)},\boldsymbol{\gamma}^{(0)})$ appearing in the \textsf{Newton-Raphson algorithm} step
is replaced by $(\widehat{\boldsymbol{\beta}}_1,\gamma_{1}^{(R_1)})$. At the end of this second iteration, $\widehat{\boldsymbol{\beta}}_2$ and $\gamma_{2}^{(R_2)}$
denote the obtained values of $\widehat{\boldsymbol{\beta}}$ and $\gamma^{(R)}$, respectively.
This approach is iterated until the stabilization of $\gamma_{k}^{(R_k)}$.

\subsection{Consistency results}\label{sec:consistency}

In this section, we shall establish the consistency of the parameter $\gamma_1^\star$ in the case where $q=1$
from $Y_1,\dots,Y_n$ defined in (\ref{eq:Yt}) and (\ref{eq:Zt})
where (\ref{eq:mut_Wt}) is replaced by
\begin{equation}\label{eq:mut_simple}
\mu_t^\star=\exp(W_t^\star) \textrm{ with } W_t^\star=\beta_0^\star+Z_t^\star.
\end{equation}
We limit ourselves to this framework since in the more general one 
the consistency is much more tricky to handle and is beyond the scope of this paper.
Note that some theoretical results have already been obtained in this framework (no covariates and $q=1$)
by \cite{davis:dunsmuir:streett:2003} and \cite{davis:dunsmuir:street:2005}. However, here, we provide, on the one hand, a more detailed version
of the proof of these results and on the other hand, a proof of the consistency of $\gamma_1^\star$ based on a stochastic equicontinuity result.

\begin{theo}\label{theo:MA1}
Assume that $Y_1,\dots,Y_n$ satisfy the model defined by (\ref{eq:Yt}), (\ref{eq:mut_simple}) and (\ref{eq:Zt}) with $q=1$ and $\gamma_1^\star\in\Gamma$ where $\Gamma$ is a compact set
of $\mathbb{R}$ which does not contain 0. Assume also that $(W_t^\star)$ 
starts with its stationary invariant distribution. Let $\widehat{\gamma}_1$ be defined by:
$$
\widehat{\gamma}_1=\textrm{Argmax}_{\gamma_1\in\Gamma}\; L(\beta_0^\star,\gamma_1),
$$
where 
\begin{equation}\label{eq:L:beta_0}
L(\beta_0^\star,\gamma_1)=\sum_{t=1}^n\left(Y_t W_t(\beta_0^\star,\gamma_1)-\exp(W_t(\beta_0^\star,\gamma_1)\right),
\end{equation}
with 
\begin{equation}\label{eq:W_Z}
W_t(\beta_0^\star,\gamma_1)=\beta_0^\star+Z_t(\gamma_1)=\beta_0^\star+\gamma_1 E_{t-1}(\gamma_1), 
\end{equation}

$$
E_{t-1}(\gamma_1)=Y_{t-1}\exp(-W_{t-1}(\beta_0^\star,\gamma_1))-1, \textrm{ if } t>1 \textrm{ and } E_{t-1}(\gamma_1)=0, \textrm{ if } t\leq 1.
$$

Then $\widehat{\gamma}_1\stackrel{p}{\longrightarrow}\gamma_1^\star$, as $n$ tends to infinity, where $\stackrel{p}{\longrightarrow}$ denotes the convergence in probability.
\end{theo}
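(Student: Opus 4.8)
The plan is to view $\widehat\gamma_1$ as an M-estimator maximizing $\gamma_1\mapsto H_n(\gamma_1):=n^{-1}L(\beta_0^\star,\gamma_1)$ over the compact set $\Gamma$, and to follow the classical recipe: (i) prove that $H_n$ converges, uniformly over $\Gamma$ and in probability, to a deterministic limit $\ell$; (ii) prove that $\ell$ is continuous and has a well-separated maximizer at $\gamma_1^\star$; (iii) conclude by the standard consistency theorem for M-estimators.

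For (i), I would first get rid of the artificial initialization $E_t(\gamma_1)=0$, $t\le 1$, by introducing the stationary versions $\widetilde E_t(\gamma_1)$, $\widetilde W_t(\gamma_1)$ defined by the same recursions run from $t=-\infty$ (these are measurable functions of $(Y_{t-1},Y_{t-2},\dots)$). Since $(W_t^\star)$ is assumed to start from its stationary invariant distribution and $Y_t\mid\mathcal F_{t-1}\sim\mathcal P(e^{W_t^\star})$, the process $(W_t^\star,Y_t)$ is stationary and ergodic, and so is the augmented process $\big(W_t^\star,Y_t,(\widetilde W_t(\gamma_1))_{\gamma_1\in\Gamma}\big)$. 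One then checks that $|W_t(\gamma_1)-\widetilde W_t(\gamma_1)|$ (and likewise for $E_t$) tends to $0$ fast enough in $t$, uniformly in $\gamma_1\in\Gamma$, so that $H_n(\gamma_1)$ and $\widetilde H_n(\gamma_1):=n^{-1}\sum_{t=1}^n\big(Y_t\widetilde W_t(\gamma_1)-\exp(\widetilde W_t(\gamma_1))\big)$ share the same almost sure limit; Birkhoff's ergodic theorem then gives $\widetilde H_n(\gamma_1)\to\ell(\gamma_1):=\mathbb E\big[\mu_1^\star\widetilde W_1(\gamma_1)-\exp(\widetilde W_1(\gamma_1))\big]$ pointwise in $\gamma_1$, using that $\widetilde W_t(\gamma_1)$ is $\mathcal F_{t-1}$-measurable so that $\mathbb E[Y_t\widetilde W_t(\gamma_1)\mid\mathcal F_{t-1}]=\mu_t^\star\widetilde W_t(\gamma_1)$. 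At $\gamma_1=\gamma_1^\star$ the recursion is already solved by the stationary $(W_t^\star)$, so $\widetilde W_t(\gamma_1^\star)=W_t^\star$.

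To upgrade pointwise convergence to uniform convergence I would establish stochastic equicontinuity of $\{H_n\}$ on $\Gamma$: differentiating the summand in $\gamma_1$ using the recursions for $\partial W_t/\partial\gamma_1$ from Section~\ref{subsub:first_derive}, one obtains $\sup_{\gamma_1\in\Gamma}\big|\tfrac{\partial}{\partial\gamma_1}\big(Y_tW_t(\gamma_1)-e^{W_t(\gamma_1)}\big)\big|\le A_t$ for a stationary integrable envelope $A_t$, whence $n^{-1}\sum_{t=1}^n A_t=O_P(1)$ and $\{H_n\}$ is stochastically equicontinuous; with pointwise convergence and compactness of $\Gamma$ this yields $\sup_{\gamma_1\in\Gamma}|H_n(\gamma_1)-\ell(\gamma_1)|\stackrel{p}{\longrightarrow}0$ and continuity of $\ell$. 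For (ii), strict concavity of $w\mapsto\mu w-e^w$ (maximum at $w=\log\mu$) gives, pointwise, $\mu_t^\star\widetilde W_t(\gamma_1)-e^{\widetilde W_t(\gamma_1)}\le\mu_t^\star W_t^\star-e^{W_t^\star}$ with equality iff $\widetilde W_t(\gamma_1)=W_t^\star$; taking expectations, $\ell(\gamma_1)\le\ell(\gamma_1^\star)$, with equality forcing $\widetilde W_t(\gamma_1)=W_t^\star$ a.s. for all $t$, hence $\widetilde E_{t-1}(\gamma_1)=E_{t-1}^\star$ and $(\gamma_1-\gamma_1^\star)E_{t-1}^\star=0$ a.s.; since $E_{t-1}^\star$ is non-degenerate (it equals $-1$ on $\{Y_{t-1}=0\}$, an event of positive probability, and takes other values as well) this gives $\gamma_1=\gamma_1^\star$, the exclusion of $0$ from $\Gamma$ ensuring $\gamma_1^\star\neq 0$. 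The classical argmax theorem (uniform convergence, continuity of $\ell$, compact $\Gamma$, well-separated maximizer) then yields $\widehat\gamma_1\stackrel{p}{\longrightarrow}\gamma_1^\star$.

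The main obstacle will be the integrability and geometric-stabilization estimates required above, uniformly over $\Gamma$: the recursion $E_{t-1}(\gamma_1)=Y_{t-1}\exp(-\beta_0^\star-\gamma_1E_{t-2}(\gamma_1))-1$ is not a uniform contraction because of the random factor $Y_{t-1}$, so controlling $\mathbb E[\exp(\widetilde W_1(\gamma_1))]<\infty$, the decay of $|W_t(\gamma_1)-\widetilde W_t(\gamma_1)|$, and the integrability of the Lipschitz envelope $A_t$ all hinge on exploiting the Poisson conditional structure together with the moment and geometric-ergodicity bounds for the stationary chain $(W_t^\star)$ available from (and extending) \cite{davis:dunsmuir:streett:2003}. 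Once these bounds are in place, the remaining steps are routine.
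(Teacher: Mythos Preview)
Your overall strategy---pointwise convergence, identifiability via strict concavity of $w\mapsto \mu w-e^w$, then stochastic equicontinuity to upgrade to uniform convergence, and finally the argmax theorem---matches the paper's three propositions exactly, and your identifiability argument is essentially identical to the paper's Proposition~\ref{prop2}.

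Where you diverge is in the handling of the initialization. You introduce a stationary version $\widetilde W_t(\gamma_1)$ run from $t=-\infty$ and control $|W_t(\gamma_1)-\widetilde W_t(\gamma_1)|$; the paper never does this. Instead, the paper exploits the assumption $\gamma_1^\star\neq 0$ to invert the recursion $W_t^\star=(\beta_0^\star-\gamma_1^\star)+\gamma_1^\star Y_{t-1}e^{-W_{t-1}^\star}$ and recover $Y_{t-1}$ as a measurable function of $(W_{t-1}^\star,W_t^\star)$. This lets the paper write the initialized $W_t(\beta_0^\star,\gamma_1)$, and hence each summand, as a measurable function of the stationary ergodic block $(W_2^\star,\dots,W_{t+1}^\star)$, and appeal directly to the ergodic theorem for functionals of $(W_t^\star)$; the limit is then defined as the expectation at $t=3$ rather than via a stationary $\widetilde W_t(\gamma_1)$. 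For stochastic equicontinuity the paper also proceeds more directly: it bounds $|W_t(\beta_0^\star,\gamma_1)-W_t(\beta_0^\star,\gamma_2)|\le \delta\,F(Y_{t-1},\dots,Y_1)$ by unrolling the recursion and using $|e^x-e^y|\le e^x|x-y|e^{|x-y|}$, rather than bounding $\partial W_t/\partial\gamma_1$ as you propose. Your route is the more standard time-series template and is arguably cleaner conceptually, but it imports an extra existence problem---you must show that the recursion for $\widetilde W_t(\gamma_1)$ started at $-\infty$ actually converges for every $\gamma_1\in\Gamma$, not just $\gamma_1^\star$---and you correctly flag this, together with the uniform moment bounds, as the crux. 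The paper's device of reading the $Y_t$'s off the $(W_t^\star)$ chain sidesteps that existence issue entirely at the price of a somewhat looser statement of the limit.
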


The proof of Theorem \ref{theo:MA1} is based on the following propositions which are proved in Section \ref{sec:proofs}. These propositions are the classical arguments for establishing consistency
results of maximum likelihood estimators. Note that we shall explain in the proof of Proposition \ref{prop1}
why a stationary invariant distribution for $(W_t^\star)$ does exist. The main tools used for proving Propositions \ref{prop1} and \ref{prop3} are the Markov property and the ergodicity of $(W_t^\star)$.

\begin{prop}\label{prop1}
For all fixed $\gamma_1$, under the assumptions of Theorem \ref{theo:MA1}, 
\begin{equation}\label{eq:conv}
\frac1n L(\beta_0^\star,\gamma_1)\stackrel{p}{\longrightarrow} 
\mathcal{L}(\gamma_1):=\mathbb{E}\left[Y_3 W_3(\beta_0^\star,\gamma_1)-\exp(W_3(\beta_0^\star,\gamma_1)\right], \textrm{ as $n$ tends to infinity.}
\end{equation}
\end{prop}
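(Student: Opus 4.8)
The plan is to write $n^{-1}L(\beta_0^\star,\gamma_1)$ as a time average of a functional of an ergodic Markov chain, plus a martingale term that is negligible in probability. First I would observe that $(W_t^\star)_{t\geq 1}$ is a time-homogeneous Markov chain: by \eqref{eq:mut_simple} and \eqref{eq:Zt} with $q=1$, $W_t^\star=\beta_0^\star+\gamma_1^\star\big(Y_{t-1}\exp(-W_{t-1}^\star)-1\big)$, and since $Y_{t-1}\mid\mathcal F_{t-2}\sim\mathcal P(\exp(W_{t-1}^\star))$ the conditional law of $W_t^\star$ depends on the past only through $W_{t-1}^\star$. This is where I would explain, as announced after the statement, why an invariant distribution exists: on $\{Y_{t-1}=0\}$, an event of conditional probability $\exp(-\exp(W_{t-1}^\star))>0$, the chain resets to the fixed value $\beta_0^\star-\gamma_1^\star$; when $\gamma_1^\star<0$ one has $W_t^\star\leq\beta_0^\star+|\gamma_1^\star|$ for all $t$, hence $\exp(W_{t-1}^\star)$ is bounded, the reset probability is bounded below, and the chain satisfies a Doeblin condition, so is uniformly ergodic; when $\gamma_1^\star>0$ one has $W_t^\star\geq\beta_0^\star-\gamma_1^\star$ and $Y_{t-1}\exp(-W_{t-1}^\star)$ has conditional mean $1$ and conditional variance $\exp(-W_{t-1}^\star)$, which furnishes a Foster--Lyapunov drift towards a region where the reset atom is reachable, hence positive Harris recurrence with an invariant probability measure (this is, in essence, the stationarity result of \cite{davis:dunsmuir:streett:2003}, of which we give a detailed proof). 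Since by hypothesis the chain starts from this invariant law, $(W_t^\star)$ -- and hence $(Y_t)$, being a measurable functional of it -- is strictly stationary and ergodic.

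Next I would split the $t$-th summand of \eqref{eq:L:beta_0}:
\[
Y_tW_t(\beta_0^\star,\gamma_1)-\exp(W_t(\beta_0^\star,\gamma_1))
= h\big(W_t^\star,W_t(\beta_0^\star,\gamma_1)\big) + \xi_t ,
\]
with $h(v_1,v_2)=\exp(v_1)v_2-\exp(v_2)$ and $\xi_t=\big(Y_t-\exp(W_t^\star)\big)W_t(\beta_0^\star,\gamma_1)$. Both $W_t^\star$ and $W_t(\beta_0^\star,\gamma_1)$ are $\mathcal F_{t-1}$-measurable, and $\mathbb E[Y_t\mid\mathcal F_{t-1}]=\exp(W_t^\star)$, so $(\xi_t)$ is a martingale-difference sequence for $(\mathcal F_t)$. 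By orthogonality of the increments, $\mathbb E\big[\big(n^{-1}\sum_{t=1}^n\xi_t\big)^2\big]=n^{-2}\sum_{t=1}^n\mathbb E[\xi_t^2]$ with $\mathbb E[\xi_t^2]=\mathbb E\big[\exp(W_t^\star)\,W_t(\beta_0^\star,\gamma_1)^2\big]$ because $\mathrm{Var}(Y_t\mid\mathcal F_{t-1})=\exp(W_t^\star)$; hence $n^{-1}\sum_{t=1}^n\xi_t\to0$ in probability as soon as this second moment is bounded uniformly in $t$.

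For the $h$-term, the key point is that the same reset acts on $W_t(\beta_0^\star,\gamma_1)$: since $E_{t-1}(\gamma_1)=Y_{t-1}\exp(-W_{t-1}(\beta_0^\star,\gamma_1))-1$, one has $W_t(\beta_0^\star,\gamma_1)=\beta_0^\star-\gamma_1$ whenever $Y_{t-1}=0$. Because $(Y_t)$ is stationary ergodic with $\mathbb P(Y_1=0)>0$, almost surely $Y_{t_0}=0$ for some finite $t_0$, and from $t_0+1$ on the recursion driven by $(Y_t)$ forces $W_t(\beta_0^\star,\gamma_1)$ to coincide with a stationary ergodic process $\bar W_t$ that is a measurable functional of $(W_t^\star)$; therefore $n^{-1}\sum_{t=1}^nh(W_t^\star,W_t(\beta_0^\star,\gamma_1))$ differs from $n^{-1}\sum_{t=1}^nh(W_t^\star,\bar W_t)$ by $O(t_0/n)$, and Birkhoff's ergodic theorem applied to the stationary ergodic sequence $(W_t^\star,\bar W_t)$ gives $n^{-1}\sum_{t=1}^nh(W_t^\star,\bar W_t)\to\mathbb E\big[\exp(W^\star)\bar W-\exp(\bar W)\big]=\mathbb E\big[Y\,\bar W-\exp(\bar W)\big]$, the last equality by conditioning on the past. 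This limit is exactly the constant $\mathcal L(\gamma_1)$ of \eqref{eq:conv}, the $Y_3$ and $W_3(\beta_0^\star,\gamma_1)$ there being read as the stationary versions of the two processes; equivalently one may invoke the law of large numbers for the positive Harris recurrent bivariate Markov chain $(W_t^\star,W_t(\beta_0^\star,\gamma_1))$, which does not start from its invariant law but for which it nonetheless holds. Together with the martingale bound above and the integrability of $h$ in the stationary regime -- which also makes $\mathcal L(\gamma_1)$ finite -- this yields \eqref{eq:conv}.

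The hard part will be the moment and drift bookkeeping behind these steps: exhibiting explicit Foster--Lyapunov functions for $(W_t^\star)$ and for the pair $(W_t^\star,W_t(\beta_0^\star,\gamma_1))$, and proving the uniform bounds $\sup_t\mathbb E[\exp(W_t^\star)]<\infty$, $\sup_t\mathbb E[W_t(\beta_0^\star,\gamma_1)^2]<\infty$ and $\sup_t\mathbb E[\exp(W_t(\beta_0^\star,\gamma_1))]<\infty$, together with their stationary analogues. The delicate feature is that $W_t^\star$ is bounded only on one side while on the other it is driven by the Poisson variable $Y_{t-1}$, so the exponential moment is not immediate; one exploits that $E_t^\star$ is conditionally centred with conditional variance $1/\mu_t^\star$, so that large excursions of $W_t^\star$ are strongly self-correcting -- indeed a single null observation resets the chain -- and feeds this into the drift inequality. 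Granted those estimates, the $L^2$ martingale bound, the ergodic averaging and the identification of the limit are routine.
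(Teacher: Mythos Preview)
Your approach is correct but takes a genuinely different route from the paper's.

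The paper's argument is much shorter: after establishing (in a separate lemma) that $(W_t^\star)$ is an aperiodic Markov chain satisfying Doeblin's condition---by citing Proposition~2 of \cite{davis:dunsmuir:streett:2003} rather than splitting into the cases $\gamma_1^\star<0$ and $\gamma_1^\star>0$ as you do---it invokes Theorem~16.0.2 of Meyn--Tweedie and results from Stout to conclude that $(W_t^\star)$, started from its invariant law, is strictly stationary and ergodic. It then simply observes that each summand $Y_tW_t(\beta_0^\star,\gamma_1)-\exp(W_t(\beta_0^\star,\gamma_1))$ is a measurable function of $W_{t+1}^\star,\dots,W_2^\star$ and applies the ergodic theorem for measurable functionals of a stationary ergodic sequence (Theorems~1.3.3 and~1.3.5 of Taniguchi--Kakizawa). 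There is no martingale decomposition and no explicit coupling step.

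Your route is longer but more transparent on two points the paper leaves implicit. First, the summand depends on the \emph{initial} segment $W_2^\star,\dots,W_{t+1}^\star$ with a window that grows with $t$ (because of the initialization $E_0=0$), so it is not literally a time-shift of a single functional; your reset/coupling argument makes precise why this does not matter, whereas the paper absorbs this into its reference to the cited theorems. Second, your martingale split cleanly separates integrability of the limit from the averaging itself, though it is not strictly needed: once the coupling is in place you could apply Birkhoff directly to the full summand. On the other hand, the paper's approach has the merit of being essentially reference-only, with the heavy lifting (Doeblin, moments) outsourced to \cite{davis:dunsmuir:streett:2003}, while you propose to redo those drift and moment computations by hand---correct, but more work than the paper actually carries out.
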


\begin{prop}\label{prop2}
The function $\mathcal{L}$ defined in (\ref{eq:conv}) has a unique maximum at the true parameter $\gamma_1=\gamma_1^\star$.
\end{prop}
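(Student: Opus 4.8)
The plan is to read $\mathcal{L}$ as (minus) a Kullback--Leibler-type contrast and to exploit the strict convexity of the exponential. The first step is to remove $Y_3$ by conditioning on the past (after checking that both terms of $\mathcal{L}(\gamma_1)$ are integrable, which follows from the Poisson moment bounds already used in the proof of Proposition \ref{prop1}). Since $W_3(\beta_0^\star,\gamma_1)$ is, through the recursion (\ref{eq:W_Z})--(\ref{eq:Et}), a deterministic function of $Y_1,Y_2$, it is $\mathcal{F}_2$-measurable, and by (\ref{eq:Yt}) and (\ref{eq:mut_simple}) we have $\mathbb{E}[Y_3\mid\mathcal{F}_2]=\mu_3^\star=\exp(W_3^\star)$; hence the tower property gives
\begin{equation*}
\mathcal{L}(\gamma_1)=\mathbb{E}\!\left[\exp(W_3^\star)\,W_3(\beta_0^\star,\gamma_1)-\exp\!\big(W_3(\beta_0^\star,\gamma_1)\big)\right].
\end{equation*}
At $\gamma_1=\gamma_1^\star$ the fitted residual recursion coincides with the one generating $(W_t^\star)$, so $W_3(\beta_0^\star,\gamma_1^\star)=W_3^\star$ almost surely and $\mathcal{L}(\gamma_1^\star)=\mathbb{E}[\exp(W_3^\star)W_3^\star-\exp(W_3^\star)]$.

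The second step is a pointwise comparison. For every fixed real $a$ the map $w\mapsto e^{a}w-e^{w}$ is strictly concave on $\mathbb{R}$ with unique maximum at $w=a$; applying this with $a=W_3^\star$ and $w=W_3(\beta_0^\star,\gamma_1)$ gives, almost surely,
\begin{equation*}
\exp(W_3^\star)\,W_3(\beta_0^\star,\gamma_1)-\exp\!\big(W_3(\beta_0^\star,\gamma_1)\big)\ \le\ \exp(W_3^\star)W_3^\star-\exp(W_3^\star),
\end{equation*}
with equality if and only if $W_3(\beta_0^\star,\gamma_1)=W_3^\star$. Equivalently, $\mathcal{L}(\gamma_1^\star)-\mathcal{L}(\gamma_1)$ is the expectation of the nonnegative Bregman divergence of $w\mapsto e^{w}$ between $W_3(\beta_0^\star,\gamma_1)$ and $W_3^\star$. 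Taking expectations, $\mathcal{L}(\gamma_1)\le\mathcal{L}(\gamma_1^\star)$ for every $\gamma_1$, with equality if and only if $W_3(\beta_0^\star,\gamma_1)=W_3^\star$ almost surely.

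The last step is identifiability: $W_3(\beta_0^\star,\gamma_1)=W_3^\star$ almost surely should force $\gamma_1=\gamma_1^\star$. Here I would condition on the event $\{Y_2=0\}$, which has positive probability since $\mathbb{P}(Y_2=0\mid\mathcal{F}_1)=\exp(-\mu_2^\star)>0$ almost surely. On this event $E_2(\gamma_1)=-1$ for every $\gamma_1$ and also $E_2^\star=-1$, so by (\ref{eq:W_Z}) the identity $W_3(\beta_0^\star,\gamma_1)=W_3^\star$ reduces there to $\beta_0^\star-\gamma_1=\beta_0^\star-\gamma_1^\star$, i.e.\ $\gamma_1=\gamma_1^\star$. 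Hence $\mathcal{L}$ attains its maximum over $\Gamma$ (in fact over $\mathbb{R}$) only at $\gamma_1^\star$, which is the claim.

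I expect the main obstacle to be this identifiability step together with the bookkeeping that legitimizes it: $W_3(\beta_0^\star,\gamma_1)$ depends on $\gamma_1$ not only through the explicit factor in (\ref{eq:W_Z}) but also nonlinearly through the residuals $E_1(\gamma_1)$ and $E_2(\gamma_1)$ fed into the recursion, and restricting to $\{Y_2=0\}$ is exactly what removes that nonlinearity. One must also pin down that the fitted conditional log-likelihood at $\gamma_1^\star$ genuinely corresponds to the data-generating mechanism (the initialization question) and verify the integrability needed to pass from the pointwise inequality to the inequality in expectation; both are routine under the hypotheses of Theorem \ref{theo:MA1} and the estimates behind Proposition \ref{prop1}.
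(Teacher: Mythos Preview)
Your argument is correct and follows essentially the same route as the paper: condition on $\mathcal{F}_2$ to replace $Y_3$ by $\exp(W_3^\star)$, then apply the elementary inequality (the paper writes it as $x-\mathrm{e}^x\le -1$ with $x=W_3(\beta_0^\star,\gamma_1)-W_3^\star$, which is the same strict concavity statement you use). Your identifiability step via the positive-probability event $\{Y_2=0\}$ is in fact more explicit than the paper, which simply asserts that equality in $x-\mathrm{e}^x\le -1$ forces $\gamma_1=\gamma_1^\star$ without spelling out why $W_3(\beta_0^\star,\gamma_1)=W_3^\star$ a.s.\ pins down $\gamma_1$.
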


\begin{prop}\label{prop3}
Under the assumptions of Theorem \ref{theo:MA1}
$$\sup_{\gamma_1\in\Gamma}\left|\frac{L(\beta_0^\star,\gamma_1)}{n}-\mathcal{L}(\gamma_1)\right|\stackrel{p}{\longrightarrow}0, \textrm{ as $n$ tends to infinity,}$$
where $\mathcal{L}(\gamma_1)$ is defined in (\ref{eq:conv}).
\end{prop}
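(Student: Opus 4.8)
The plan is to prove Proposition \ref{prop3} by upgrading the pointwise convergence of Proposition \ref{prop1} to a uniform convergence over the compact set $\Gamma$ via a stochastic equicontinuity argument. Write $L(\beta_0^\star,\gamma_1)=\sum_{t=1}^n f_t(\gamma_1)$ with $f_t(\gamma_1)=Y_tW_t(\beta_0^\star,\gamma_1)-\exp(W_t(\beta_0^\star,\gamma_1))$, so that Proposition \ref{prop1} reads $\frac1n\sum_{t=1}^n f_t(\gamma_1)\stackrel{p}{\longrightarrow}\mathcal{L}(\gamma_1)=\mathbb{E}[f_3(\gamma_1)]$. The first step is to record that $\gamma_1\mapsto f_t(\gamma_1)$ is continuously differentiable, with $\partial_{\gamma_1}f_t(\gamma_1)=\bigl(Y_t-\exp(W_t(\beta_0^\star,\gamma_1))\bigr)\,\partial_{\gamma_1}W_t(\beta_0^\star,\gamma_1)$, where $\partial_{\gamma_1}W_t$ satisfies the recursion obtained by differentiating \eqref{eq:W_Z} and \eqref{eq:Et} (the formulas of Section \ref{subsub:first_derive}). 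By the mean value theorem, $|f_t(\gamma_1)-f_t(\gamma_1')|\le |\gamma_1-\gamma_1'|\,D_t$ with $D_t:=\sup_{\gamma\in\Gamma}|\partial_{\gamma}f_t(\gamma)|$.

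The second step is a moment bound: I would show $\sup_{t\ge 1}\mathbb{E}[D_t]<\infty$. This requires controlling, uniformly over the compact set $\Gamma$ (which is bounded away from $0$), the quantities $\exp(W_t(\beta_0^\star,\gamma))$, $\exp(-W_t(\beta_0^\star,\gamma))$ and $\partial_{\gamma}W_t(\beta_0^\star,\gamma)$, and using that $Y_t\mid\mathcal{F}_{t-1}$ is Poisson with mean $\exp(W_t^\star)$, so that all needed conditional moments of $Y_t$ are powers of $\exp(W_t^\star)$, which are integrable because $(W_t^\star)$ is stationary with integrable invariant distribution — here the Markov property and ergodicity of $(W_t^\star)$ established in the proof of Proposition \ref{prop1} enter. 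The same bounds, by differentiation under the expectation sign, show that $\mathcal{L}$ is Lipschitz on $\Gamma$.

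The third step combines these ingredients through a finite $\varepsilon$-net. Given $\varepsilon>0$, cover $\Gamma$ by finitely many balls of radius $\delta$ with centres $\gamma^{(1)},\dots,\gamma^{(N)}$. For each centre, Proposition \ref{prop1} gives $\frac1n\sum_{t=1}^n f_t(\gamma^{(j)})\stackrel{p}{\longrightarrow}\mathcal{L}(\gamma^{(j)})$, hence $\max_{1\le j\le N}\bigl|\frac1n\sum_{t=1}^n f_t(\gamma^{(j)})-\mathcal{L}(\gamma^{(j)})\bigr|\stackrel{p}{\longrightarrow}0$. For arbitrary $\gamma\in\Gamma$ in the ball around $\gamma^{(j)}$, the Lipschitz bounds give $\bigl|\frac1n\sum_{t=1}^n f_t(\gamma)-\frac1n\sum_{t=1}^n f_t(\gamma^{(j)})\bigr|\le \delta\,\frac1n\sum_{t=1}^n D_t$ and $|\mathcal{L}(\gamma)-\mathcal{L}(\gamma^{(j)})|\le \delta\,\mathrm{Lip}(\mathcal{L})$, while Step 2 plus Markov's inequality yields $\frac1n\sum_{t=1}^n D_t=O_p(1)$. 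Taking the supremum over $\gamma$, then letting $n\to\infty$ and finally $\delta\to 0$, gives $\sup_{\gamma\in\Gamma}\bigl|\frac1nL(\beta_0^\star,\gamma)-\mathcal{L}(\gamma)\bigr|\stackrel{p}{\longrightarrow}0$.

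The main obstacle is hidden in Step 2: the fitted quantities $W_t(\beta_0^\star,\gamma)$ and $\partial_{\gamma}W_t(\beta_0^\star,\gamma)$ are generated with the initialization $E_t(\gamma)=0$ for $t\le 1$, so the sequences $(f_t(\gamma))_t$ and $(D_t)_t$ are not stationary, only asymptotically so, and the ergodic theorem does not apply directly. I would therefore introduce the stationary solutions $\widetilde W_t(\gamma)$ of the same recursions driven by the stationary $(Y_t)$, prove that $\sup_{\gamma\in\Gamma}|W_t(\beta_0^\star,\gamma)-\widetilde W_t(\gamma)|$ (and its derivative analogue) tends to $0$ geometrically fast, so that replacing $f_t,D_t$ by their stationary counterparts changes $\frac1n\sum(\cdot)$ by $o_p(1)$, and then apply the ergodic theorem to the stationary versions. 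Establishing this geometric forgetting of the initialization uniformly over $\Gamma$ — the recursion $w\mapsto\beta_0^\star+\gamma(Y_{t-1}e^{-w}-1)$ is not an obvious contraction, so one must exploit the Poisson structure (when $W_{t-1}(\gamma)$ is very negative, $Y_{t-1}$ is small with high probability) together with the moment bounds — is the delicate part, and is precisely why the model with covariates and $q>1$ is left aside.
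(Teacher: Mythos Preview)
Your outline is correct and follows the same high-level strategy as the paper: upgrade the pointwise convergence of Proposition~\ref{prop1} to uniform convergence on the compact $\Gamma$ via a stochastic Lipschitz bound, then conclude through a finite covering. The execution differs in one respect worth noting. Rather than controlling $D_t=\sup_{\gamma\in\Gamma}|\partial_\gamma f_t(\gamma)|$ via the derivative recursion, the paper bounds the increment $|W_t(\beta_0^\star,\gamma_1)-W_t(\beta_0^\star,\gamma_2)|$ directly: it unfolds the recursion \eqref{eq:W_Z} and uses the elementary inequality $|e^x-e^y|\le e^x|x-y|e^{|x-y|}$ to obtain $|f_t(\gamma_1)-f_t(\gamma_2)|\le|\gamma_1-\gamma_2|\,H(Y_t,\dots,Y_1)$ for a measurable $H$, and then invokes strict stationarity and ergodicity of $(Y_t)$ (hence of $H(Y_t,\dots,Y_1)$, via Theorems~1.3.3 and~1.3.5 of Taniguchi--Kakizawa) to get $\frac1n\sum_t H(Y_t,\dots,Y_1)=O_p(1)$. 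This is a bit lighter than your mean-value-theorem route, since it never computes $\partial_\gamma W_t$. On the other hand, the paper is informal exactly where you are careful: the function $H$ has a $t$-dependent number of arguments because of the initialization $E_0=0$, and the claimed stationarity of $(H(Y_t,\dots,Y_1))_t$ tacitly presumes an extension to the infinite past --- which is precisely the content of the stationary-solution-plus-geometric-forgetting step you isolate as the main obstacle. Neither your sketch nor the paper's proof fully carries out that step.
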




\section{Numerical experiments}\label{sec:num}

The goal of this section is to investigate the performance of our method both from a statistical and a numerical points of view, using synthetic data generated by the model defined by (\ref{eq:Yt}), (\ref{eq:mut_Wt}) and (\ref{eq:Zt}).

\subsection{Statistical performance}

\subsubsection{Estimation of the parameters when $p=0$}

In this section, we investigate the statistical performance of our methodology in the model
defined by (\ref{eq:Yt}), (\ref{eq:mut_Wt}) and (\ref{eq:Zt}) for $n$ in $\{50,100,250,500,1000\}$
in the case where $p=0$, namely when there are no covariates and for $q$ in
$\{1,2,3\}$. The performance of our approach for estimating $\beta_0^\star$ and the $\gamma_k^\star$ are displayed in Figures \ref{fig:estim_beta}, \ref{fig:estim:gam1} and \ref{fig:estim:gam2_3}. We can see from these figures that
the accuracy of the parameter estimations is improved when $n$ increases, which corroborates the consistency of $\gamma_1^\star$ 
given in Theorem \ref{theo:MA1} in the case $q=1$.

\begin{figure}[!htbp]
  \includegraphics[scale=0.28]{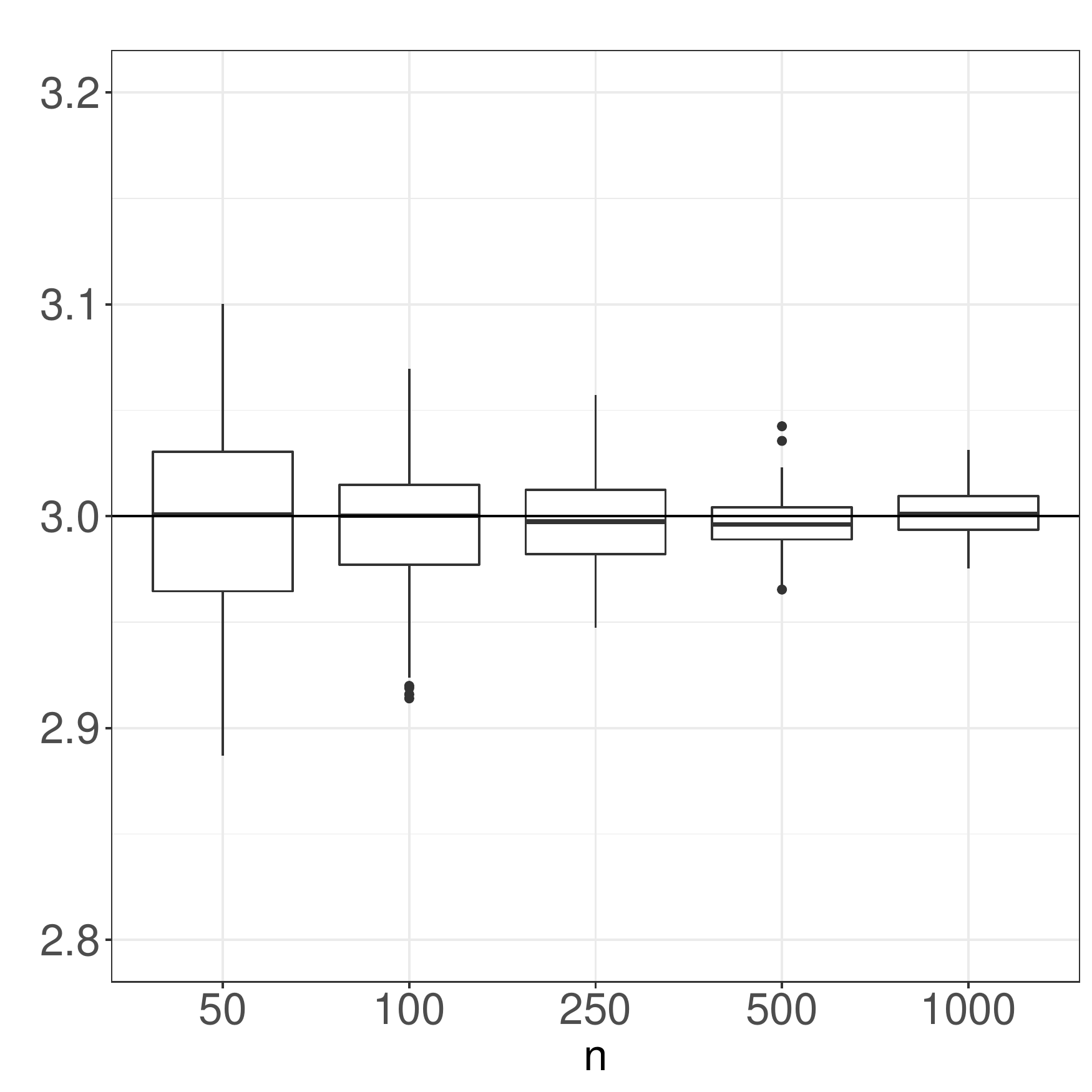}
  \includegraphics[scale=0.28]{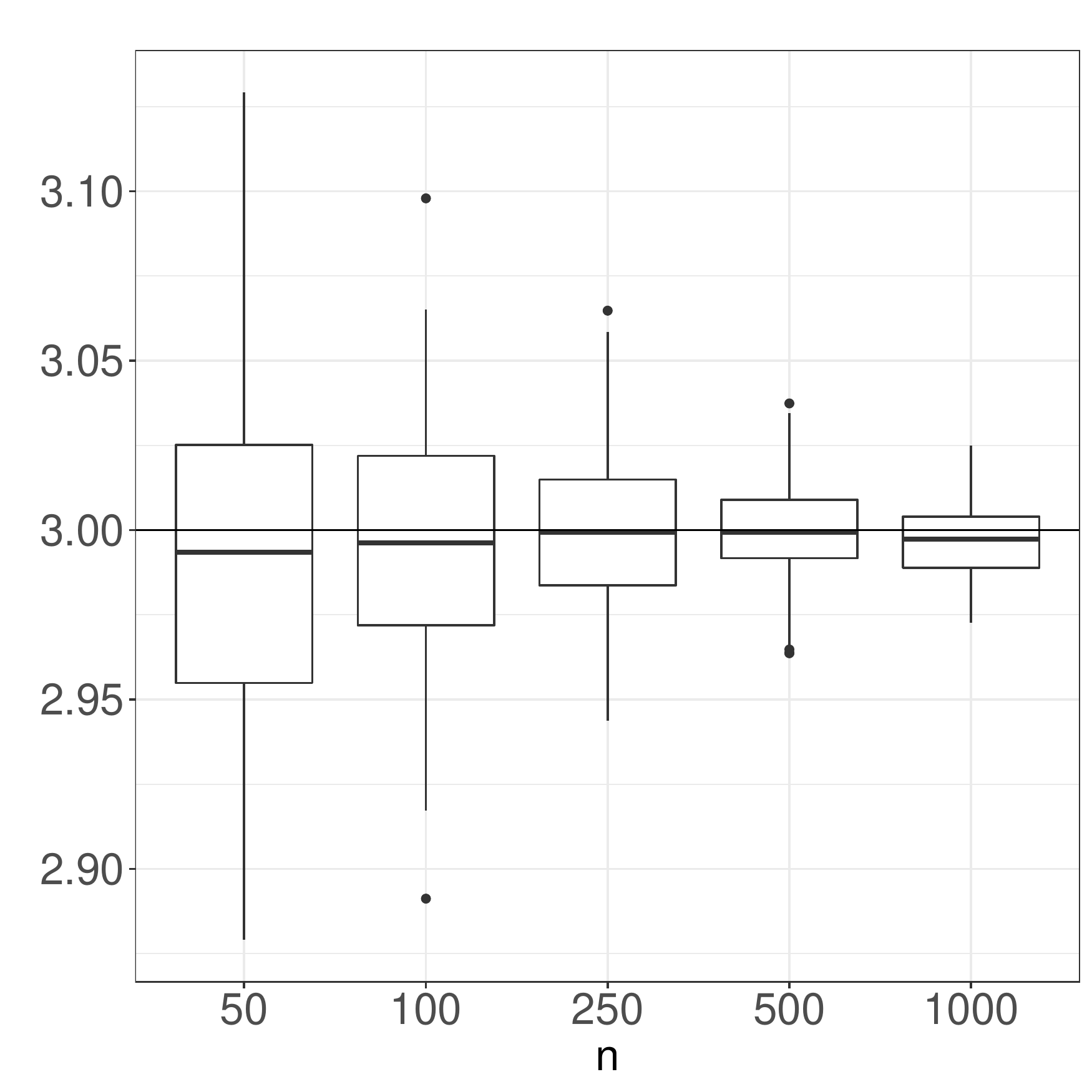}
  \includegraphics[scale=0.28]{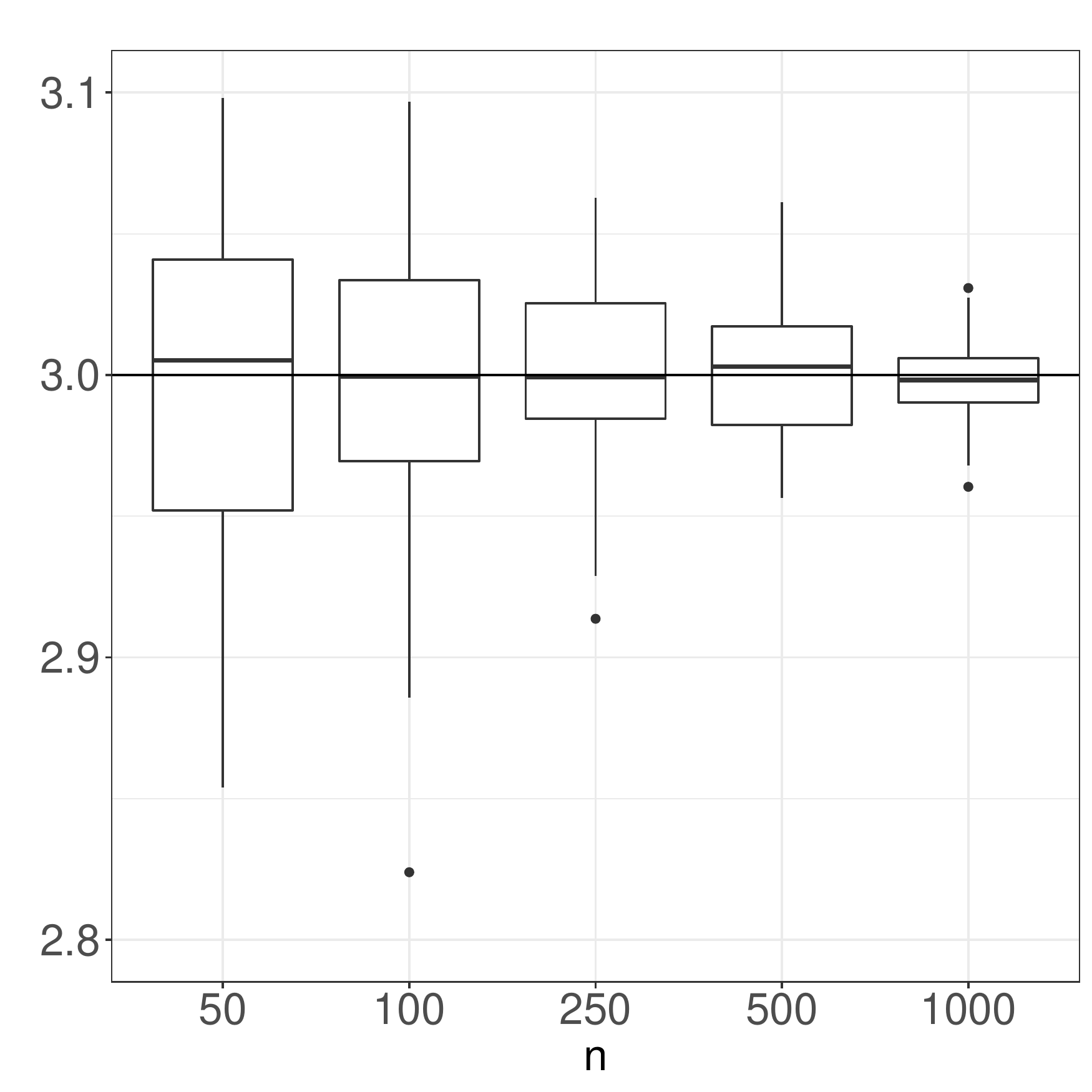}
  \caption{Boxplots for the estimations of $\beta_0^\star=3$ in Model (\ref{eq:mut_Wt}) with no regressor and $q=1$ (left), $q=2$ (middle) and $q=3$ (right). 
The horizontal lines correspond to the value of $\beta_0^\star$.\label{fig:estim_beta}}
\end{figure}

\begin{figure}[!htbp]
  \includegraphics[scale=0.28]{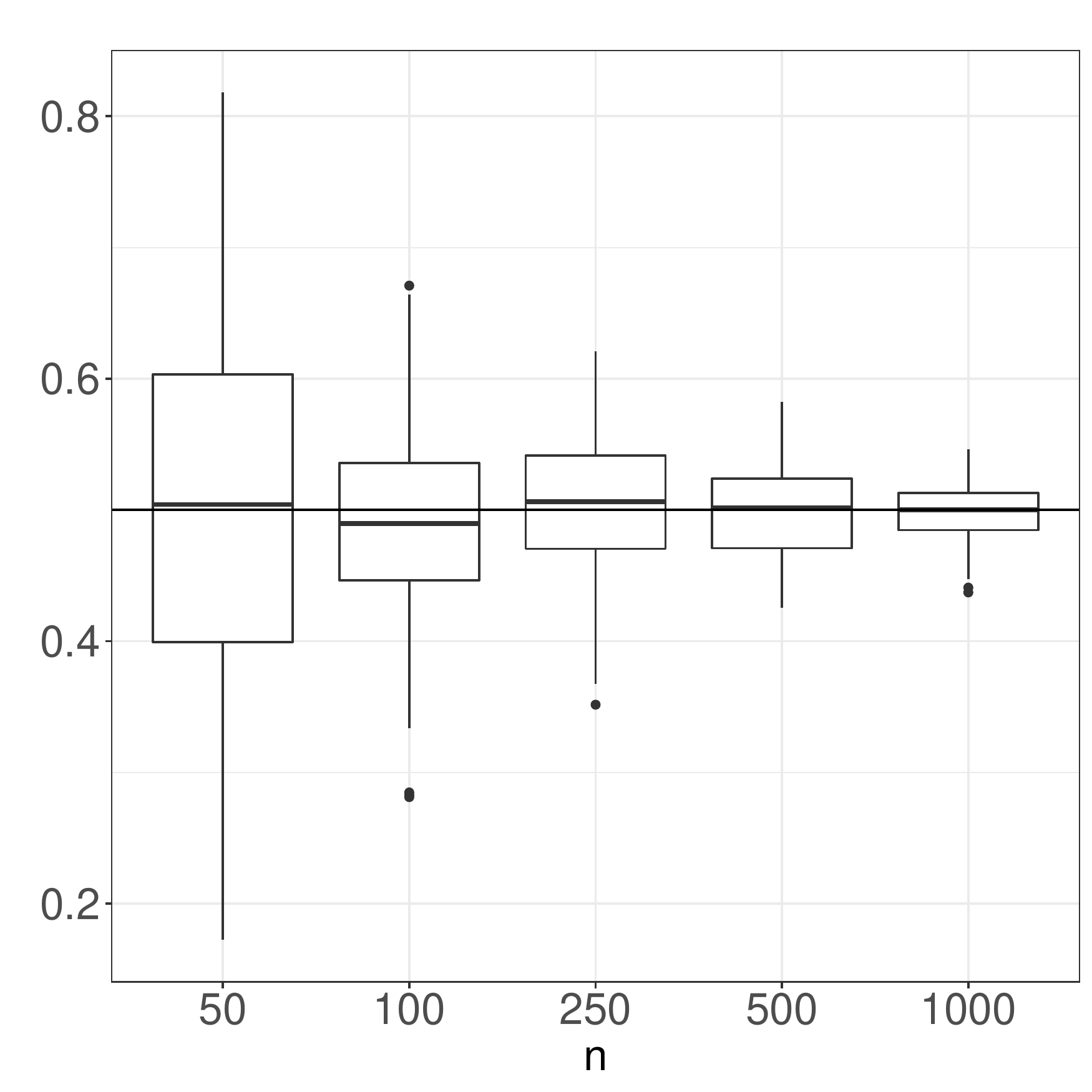}
  \includegraphics[scale=0.28]{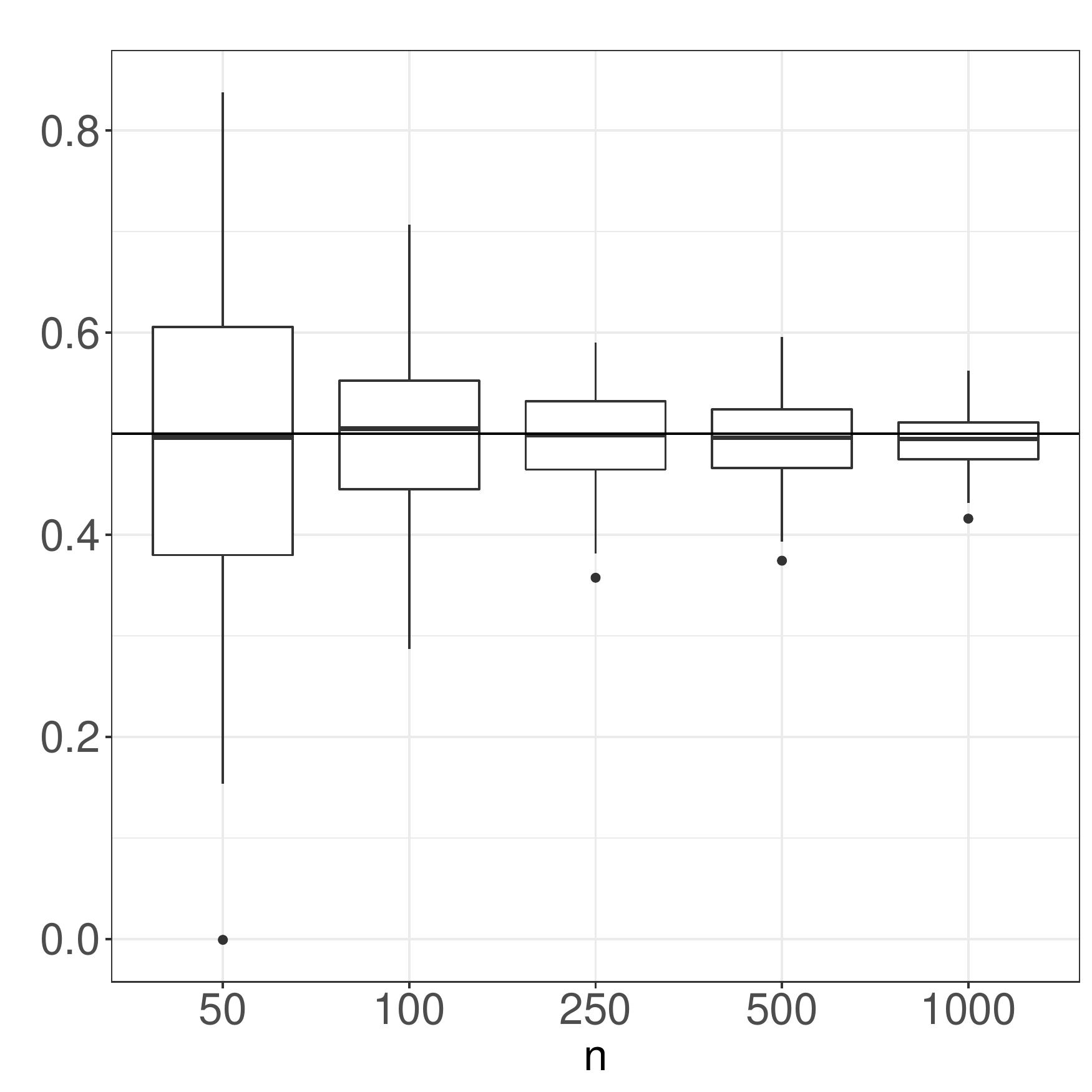}
  \includegraphics[scale=0.28]{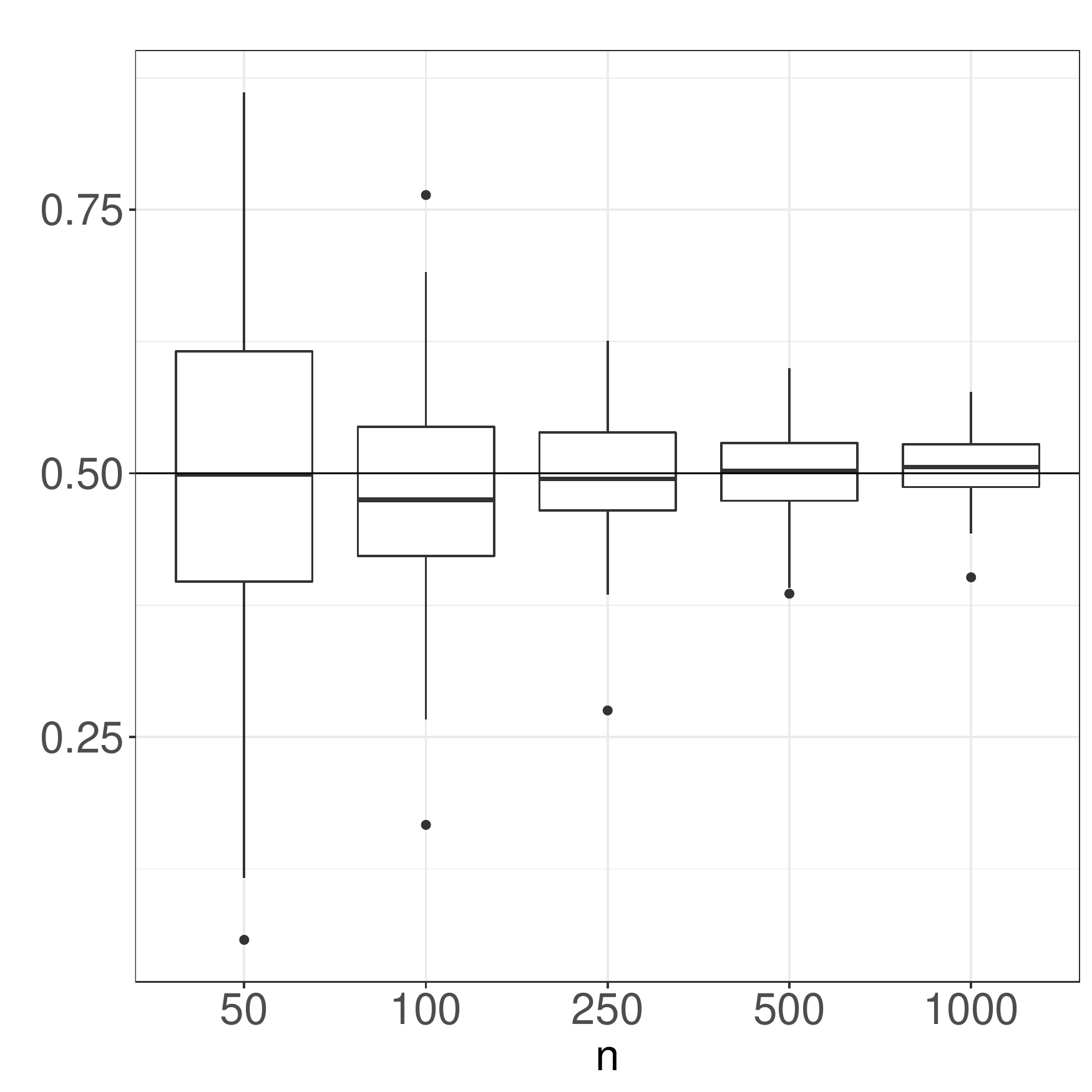}
  \caption{Boxplots for the estimations of $\gamma_1^\star=0.5$ in Model (\ref{eq:mut_Wt}) with no regressor and $q=1$ (left), $q=2$ (middle) and $q=3$ (right).
  \textcolor{black}{The horizontal lines correspond to the value of $\gamma_1^\star$.} \label{fig:estim:gam1}}
\end{figure}

  \begin{figure}[!htbp]
  \includegraphics[scale=0.28]{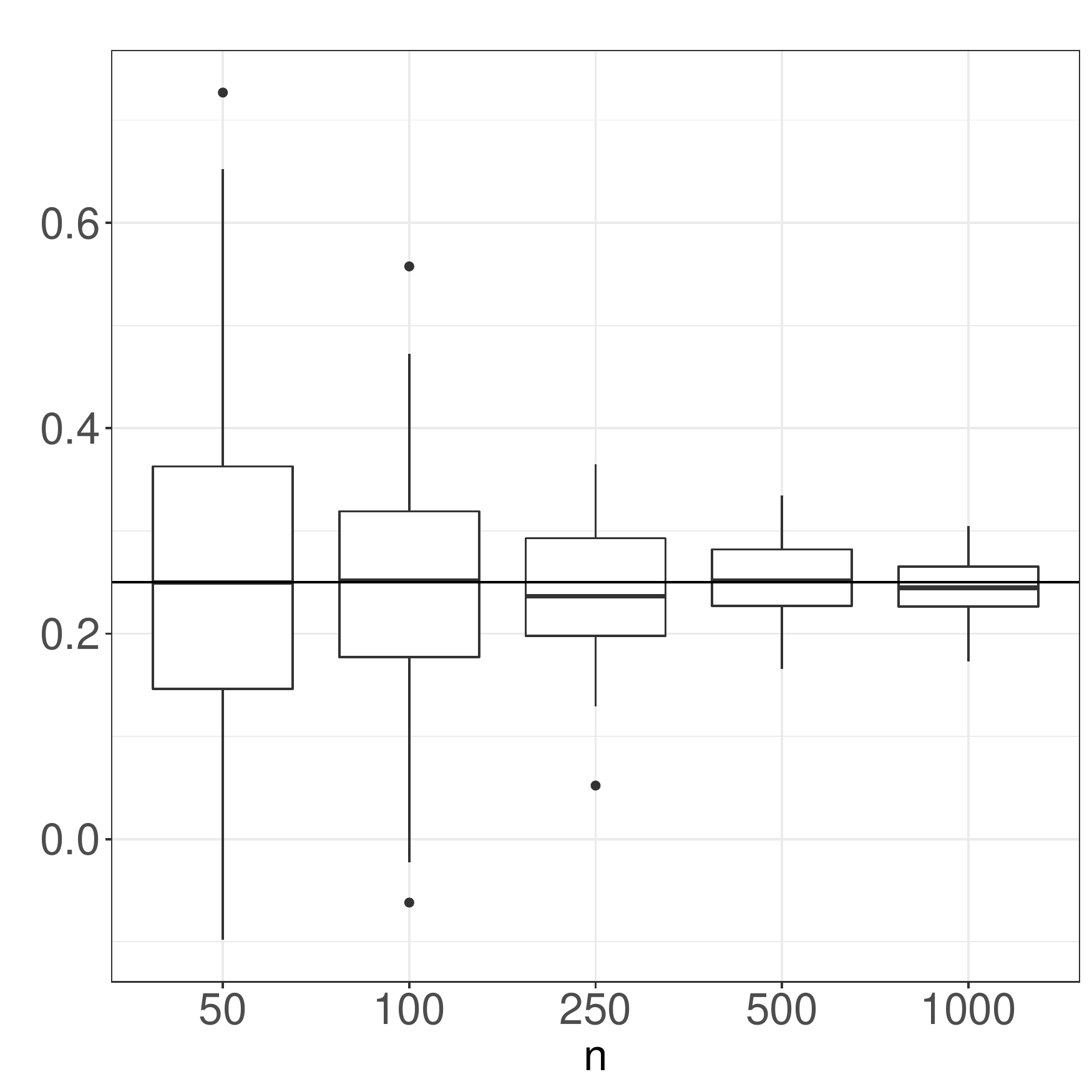}
  \includegraphics[scale=0.28]{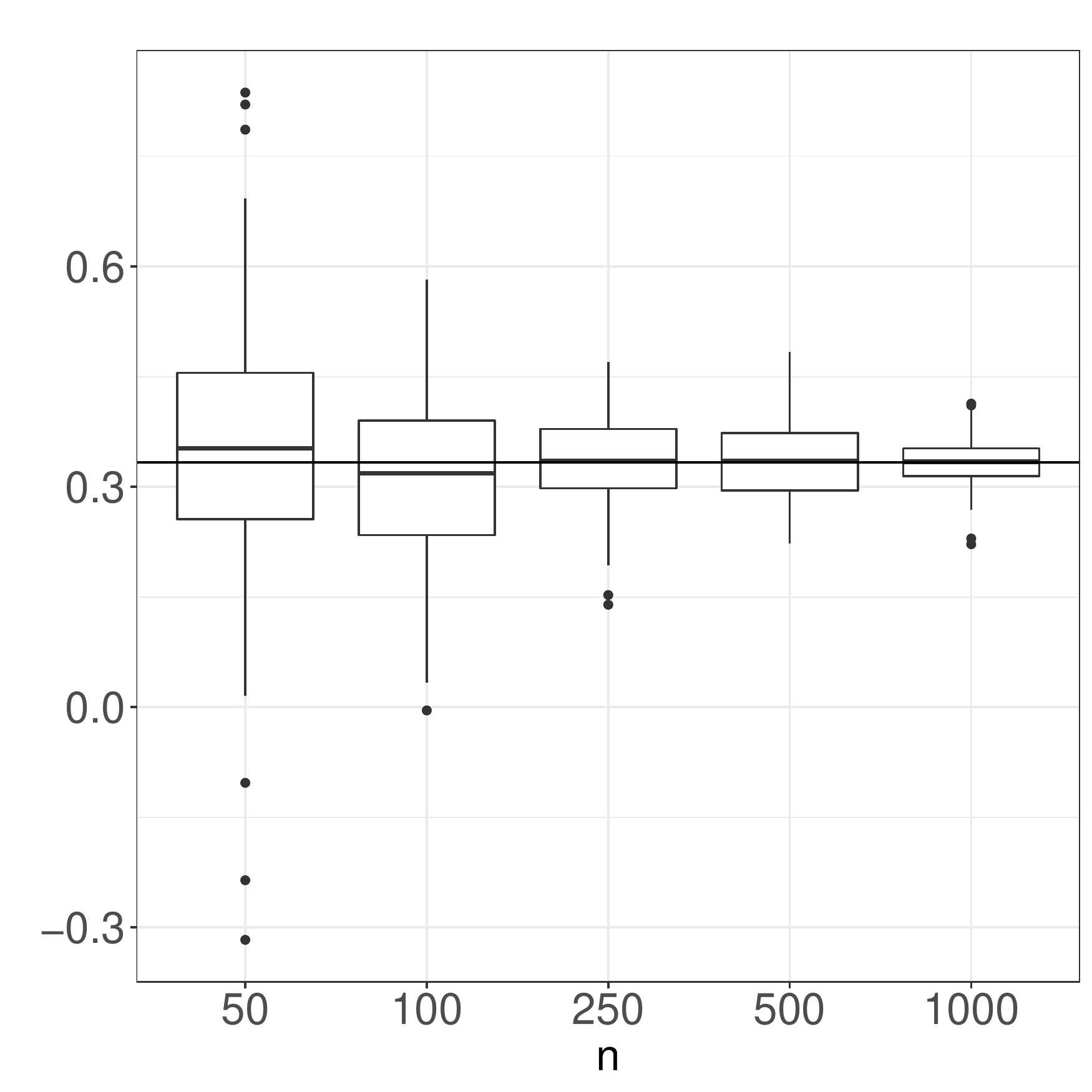}
  \includegraphics[scale=0.28]{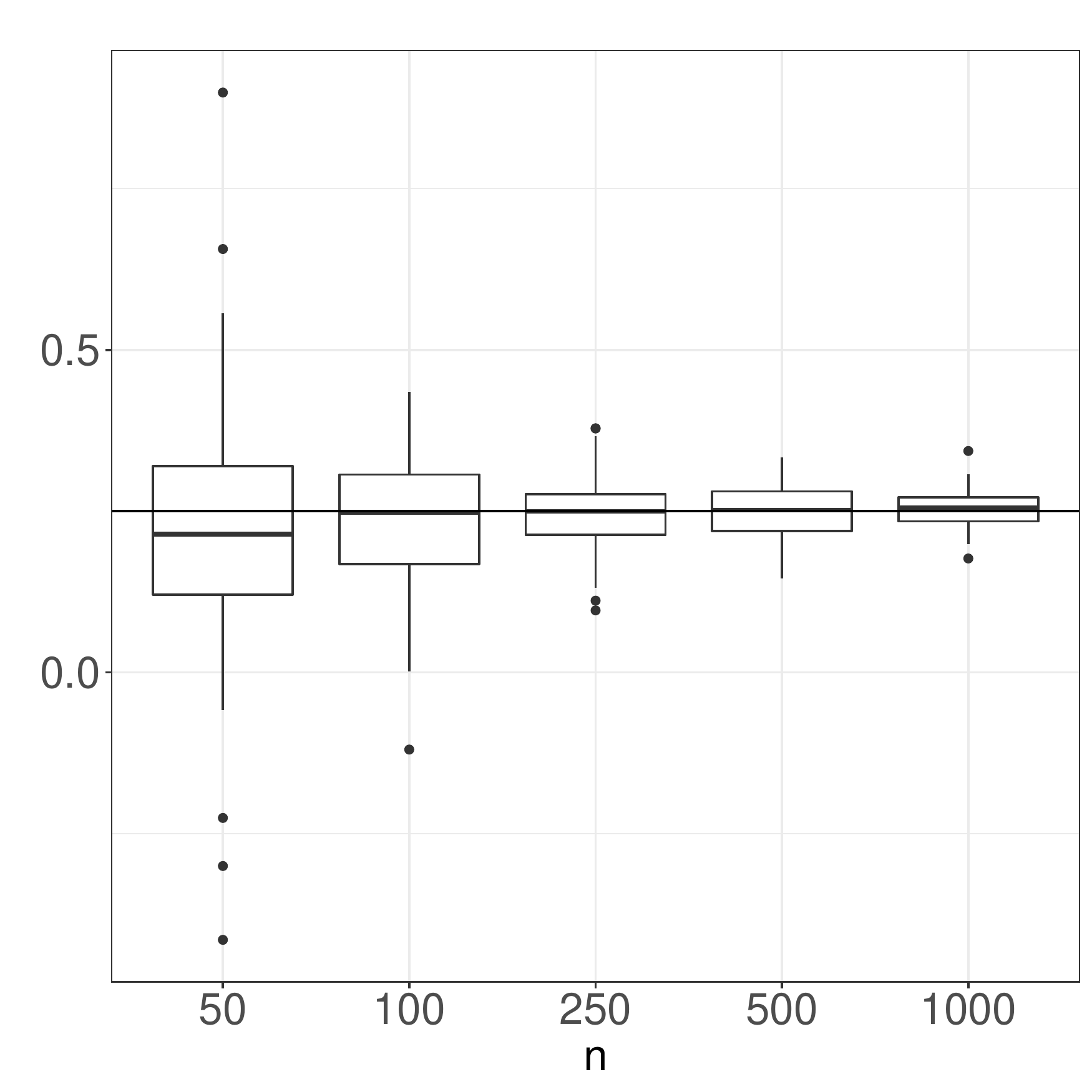}
  \caption{Boxplots for the estimations of $\gamma_2^\star=1/4$ in Model (\ref{eq:mut_Wt}) with no regressor and $q=2$ (left), $\gamma_2^\star=1/3$ in Model (\ref{eq:mut_Wt}) 
with no regressor and $q=3$ (middle) and of $\gamma_3^\star=1/4$
    in Model (\ref{eq:mut_Wt}) with no regressor and $q=3$ (right).
\textcolor{black}{The horizontal lines correspond to the true values of the parameters.}
  \label{fig:estim:gam2_3}}
\end{figure}

Moreover, it has to be noticed that in this particular context where there are no covariates ($p=0$), the
performance of our approach in terms of parameters estimation is similar to the one of the package \texttt{glarma}
described in \cite{glarma:package}.



\subsubsection{Estimation of the parameters when $p\geq 1$ and $\boldsymbol{\beta}^\star$ is sparse}\label{sec:sparse_estim}

In this section, we assess the performance of our methodology in terms of support recovery, namely the identification of the
non null coefficients of $\boldsymbol{\beta}^\star$, and of the estimation of $\boldsymbol{\gamma}^\star$. We shall consider $Y_1,\dots,Y_n$ satisfying the model
defined by (\ref{eq:Yt}), (\ref{eq:mut_Wt}) and (\ref{eq:Zt}) with covariates chosen in a Fourier basis, for $n=1000$ in the first two paragraphs,
$q\in\{1,2,3\}$, $p=100$ and two sparsity levels (5\% or 10\% of non null coefficients in $\boldsymbol{\beta}^\star$). More precisely, when the sparsity level is 5\%(resp. 10\%)  all the $\beta_i^\star$ are assumed to be equal to zero except for five (resp. ten) of them for which the values 
are given in the caption of Figure \ref{fig:TPR:FPR:1} (resp.\,in the caption of Figure \ref{fig:TPR:FPR:1:10} given in the Appendix).
Other values of $n$ (150, 200, 500, 1000) will be considered in the third paragraph to evaluate the impact of $n$ on the performance of our approach.

\textbf{Estimation of the support of $\boldsymbol{\beta}^\star$}

In this paragraph, we focus on the performance of our approach for retrieving the support of $\boldsymbol{\beta}^\star$ by computing the
True Positive Rates (TPR) and False Positive Rates (FPR). We shall consider
the two methods that are proposed in Section \ref{sec:variable}: standard stability selection (\verb|ss_cv| and \verb|ss_min|) and fast stability selection (\verb|fast_ss|). For comparison purpose, we shall
also consider the standard Lasso approach proposed by \cite{friedman:hastie:tibshirani:2010} in GLM where the parameter $\lambda$ is either chosen
thanks to the standard cross-validation (\verb|lasso_cv|) or by taking the optimal $\lambda$ which maximizes the difference between
the TPR and FPR (\verb|lasso_best|).

Figures \ref{fig:TPR:FPR:1}, \ref{fig:TPR:FPR:2} and \ref{fig:TPR:FPR:3} display the TPR and FPR of the previously mentioned approaches
with respect to the threshold defined at the end of Section \ref{sec:variable}
when $n=1000$, the sparsity level is equal to 5\% and $q=1$, 2 and 3, respectively.
We can see from these figures that when the threshold is well tuned, our approaches outperform the classical Lasso even when the parameter
$\lambda$ is chosen in an optimal way. More precisely, the thresholds 0.4, 0.7 and 0.8 achieve a satisfactory trade-off between the TPR and the FPR
for \verb|fast_ss|, \verb|ss_cv| and \verb|ss_min|, respectively.
The conclusions are similar in the case where the sparsity level is equal to 10\%, the corresponding figures (\ref{fig:TPR:FPR:1:10}, \ref{fig:TPR:FPR:2:10}
and \ref{fig:TPR:FPR:3:10}) are given in the Appendix.
We can observe from these figures that the performance of \verb|fast_ss| are slightly better than \verb|ss_cv| and \verb|ss_min| when the
sparsity level is equal to 5\% but it is the reverse when the sparsity level is equal to 10\%.

\begin{figure}[!htbp]
  \includegraphics[scale=0.28]{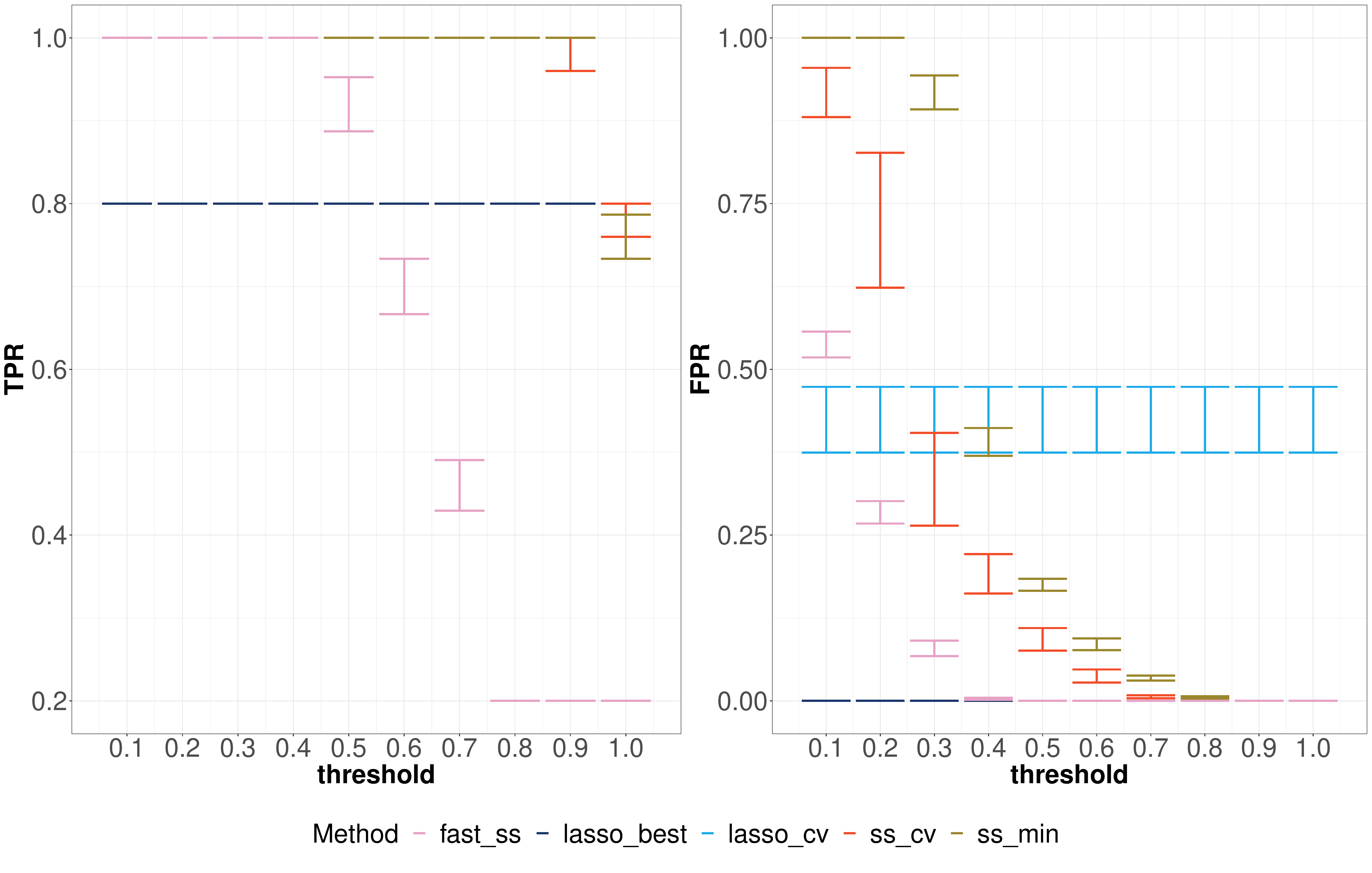}
  \caption{Error bars of the TPR and FPR associated to the support recovery of $\boldsymbol{\beta}^\star$ for five methods with respect to the thresholds when $n=1000$, $q=1$, $p=100$ and a 5\% sparsity level. All the $\beta_i^\star=0$ except for five of them: $\beta_1^\star=1.73$, $\beta_3^\star=0.38$, $\beta_{17}^\star=0.29$, $\beta_{33}^\star=-0.64$ and $\beta_{44}^\star=-0.13$.
 \label{fig:TPR:FPR:1}}
\end{figure}

\begin{figure}[!htbp]
  \includegraphics[scale=0.28]{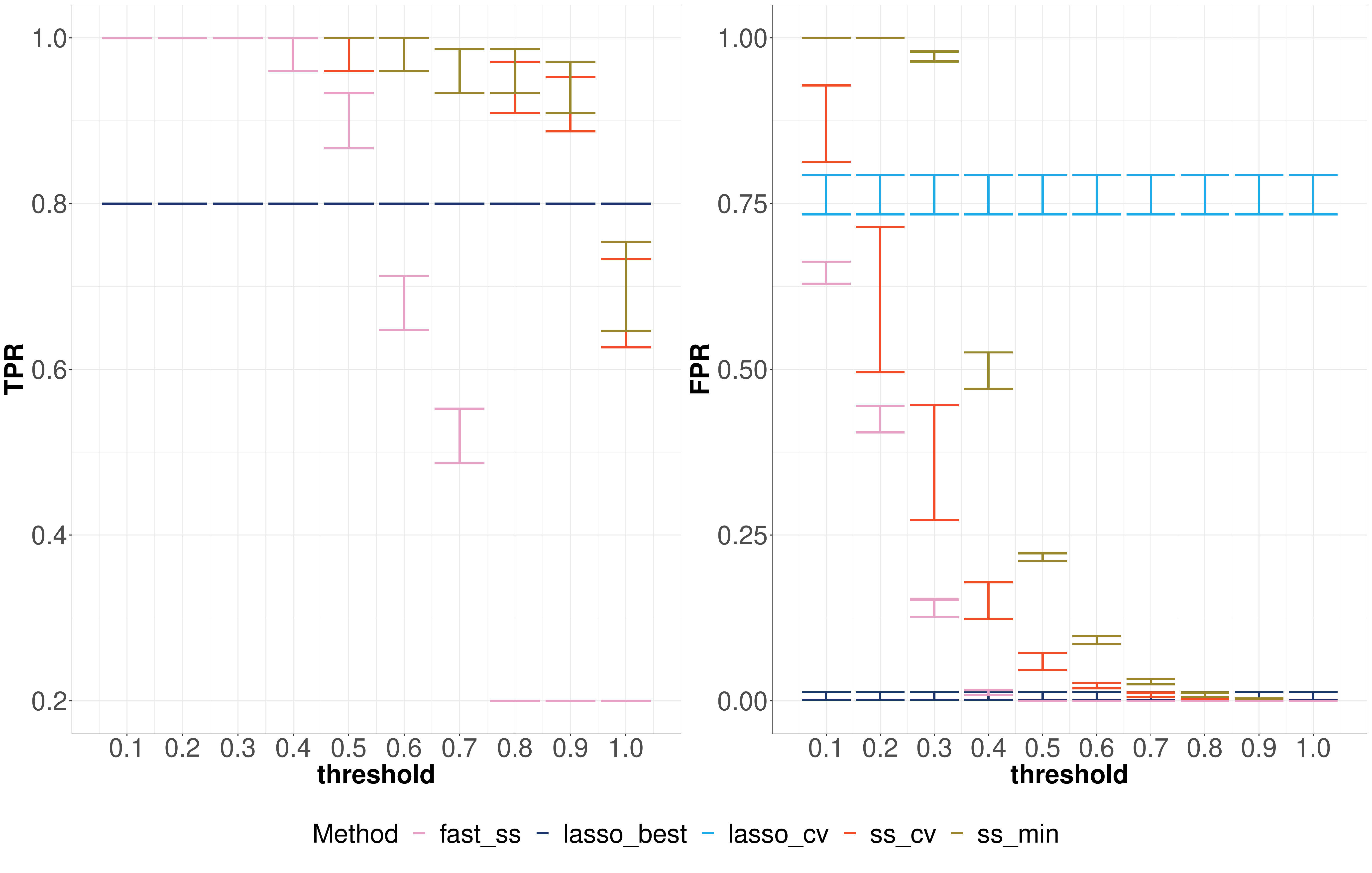}
  \caption{Error bars of the TPR and FPR associated to the support recovery of $\boldsymbol{\beta}^\star$ for five methods with respect to the thresholds when $n=1000$, $q=2$, $p=100$ and a 5\% sparsity level. All the $\beta_i^\star=0$ except for five of them: $\beta_1^\star=1.73$, $\beta_3^\star=0.38$, $\beta_{17}^\star=0.29$, $\beta_{33}^\star=-0.64$ and $\beta_{44}^\star=-0.13$.\label{fig:TPR:FPR:2}}
\end{figure}

\begin{figure}[!htbp]
  \includegraphics[scale=0.28]{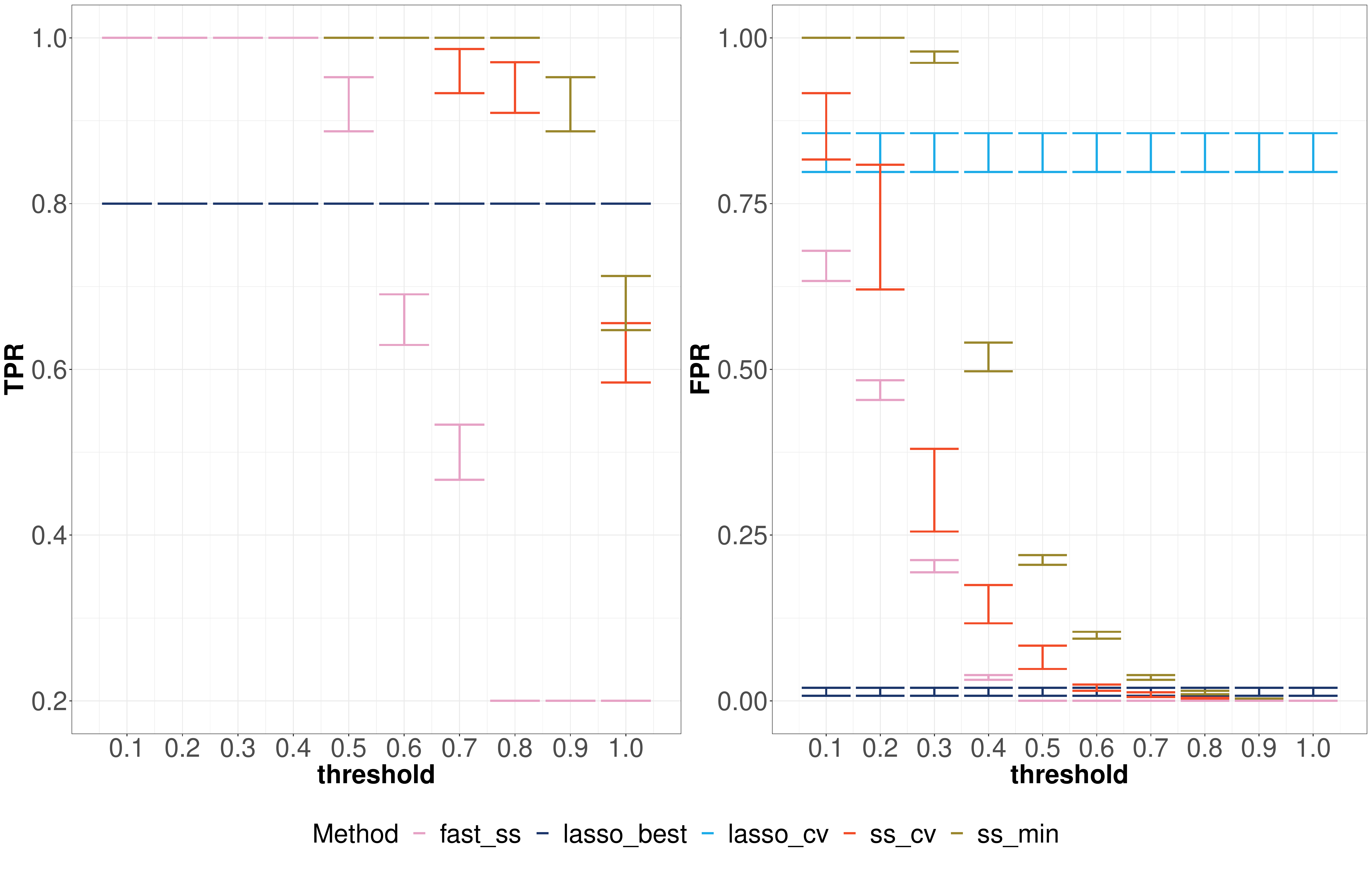}
 \caption{Error bars of the TPR and FPR associated to the support recovery of $\boldsymbol{\beta}^\star$ for five methods with respect to the thresholds when $n=1000$, $q=3$, $p=100$ and a 5\% sparsity level. All the $\beta_i^\star=0$ except for five of them: $\beta_1^\star=1.73$, $\beta_3^\star=0.38$, $\beta_{17}^\star=0.29$, $\beta_{33}^\star=-0.64$ and $\beta_{44}^\star=-0.13$.\label{fig:TPR:FPR:3}}
\end{figure}

We also compare our approach with the method implemented in the \texttt{glarma} package of \cite{glarma:package} in the case where $q=1$ and
when the sparsity level is equal to 5\%.
Since this method is not devised for performing variable selection, we consider that a given component of $\boldsymbol{\beta}^\star$ is estimated by 0
if its estimation obtained by the \texttt{glarma} package is smaller than a given threshold.
The results are displayed in Figure \ref{fig:TPR:FPR:glarma} for different thresholds ranging from $10^{-9}$ to 0.1.
We can see from this figure that for the best choice of the threshold the results of the variable selection provided by the \texttt{glarma} package
underperform our method.

\begin{figure}[!htbp]
  \includegraphics[scale=0.28]{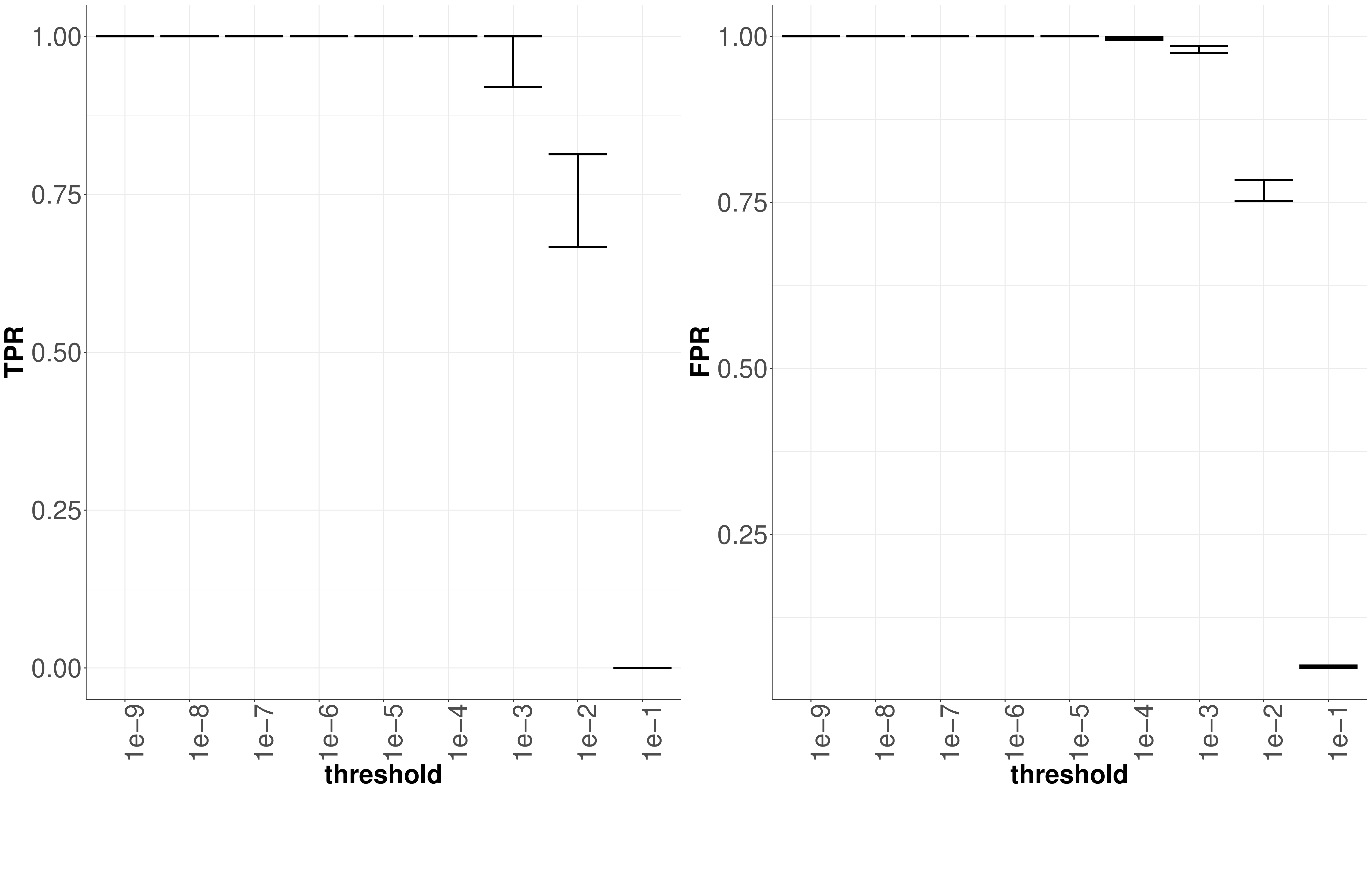}
  \caption{Error bars of the TPR and FPR associated to the support recovery of $\boldsymbol{\beta}^\star$ obtained with the \texttt{glarma} package
    for different thresholds when $n=1000$, $q=1$, $p=100$ and a 5\% sparsity level. All the $\beta_i^\star=0$ except for five of them: $\beta_1^\star=1.73$, $\beta_3^\star=0.38$, $\beta_{17}^\star=0.29$, $\beta_{33}^\star=-0.64$ and $\beta_{44}^\star=-0.13$.
 \label{fig:TPR:FPR:glarma}}
\end{figure}

\textbf{Estimation of $\boldsymbol{\gamma}^\star$}

Figures \ref{fig:gamma:5:cv}, \ref{fig:gamma:5:fast} and \ref{fig:gamma:5:min} display the boxplots for the estimations of $\boldsymbol{\gamma}^\star$ in Model
(\ref{eq:mut_Wt}) with a 5\% sparsity level and $q=1,2,3$ obtained by \verb|ss_cv|, \verb|fast_ss| and \verb|ss_min|, respectively.
The threshold chosen for each of these methods is the one achieving a satisfactory trade-off between the TPR and the FPR, namely 0.7, 0.4 and 0.8.
We can see from these figures that all these approaches provide accurate estimations of $\boldsymbol{\gamma}^\star$ from the second iteration.
The conclusions are similar in the case where the sparsity level is equal to 10\%, the corresponding figures \ref{fig:gamma:10:cv},
\ref{fig:gamma:10:fast} and \ref{fig:gamma:10:min} are given in the Appendix.

\begin{figure}[!htbp]
  \begin{center}
\begin{tabular}{ccc}
  \includegraphics[width=0.32\textwidth, height=4.5cm]{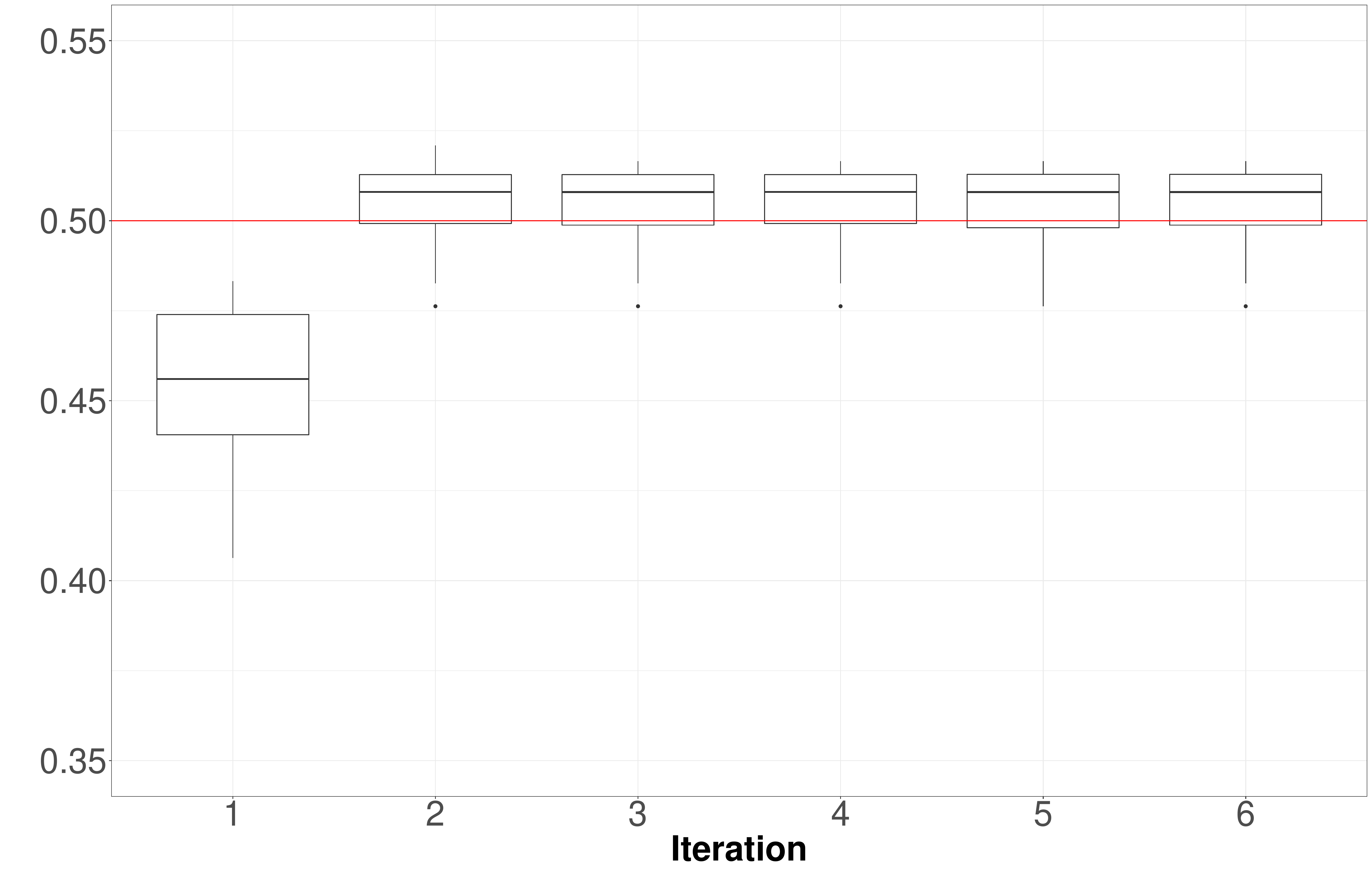}
  &  \includegraphics[width=0.32\textwidth, height=4.5cm]{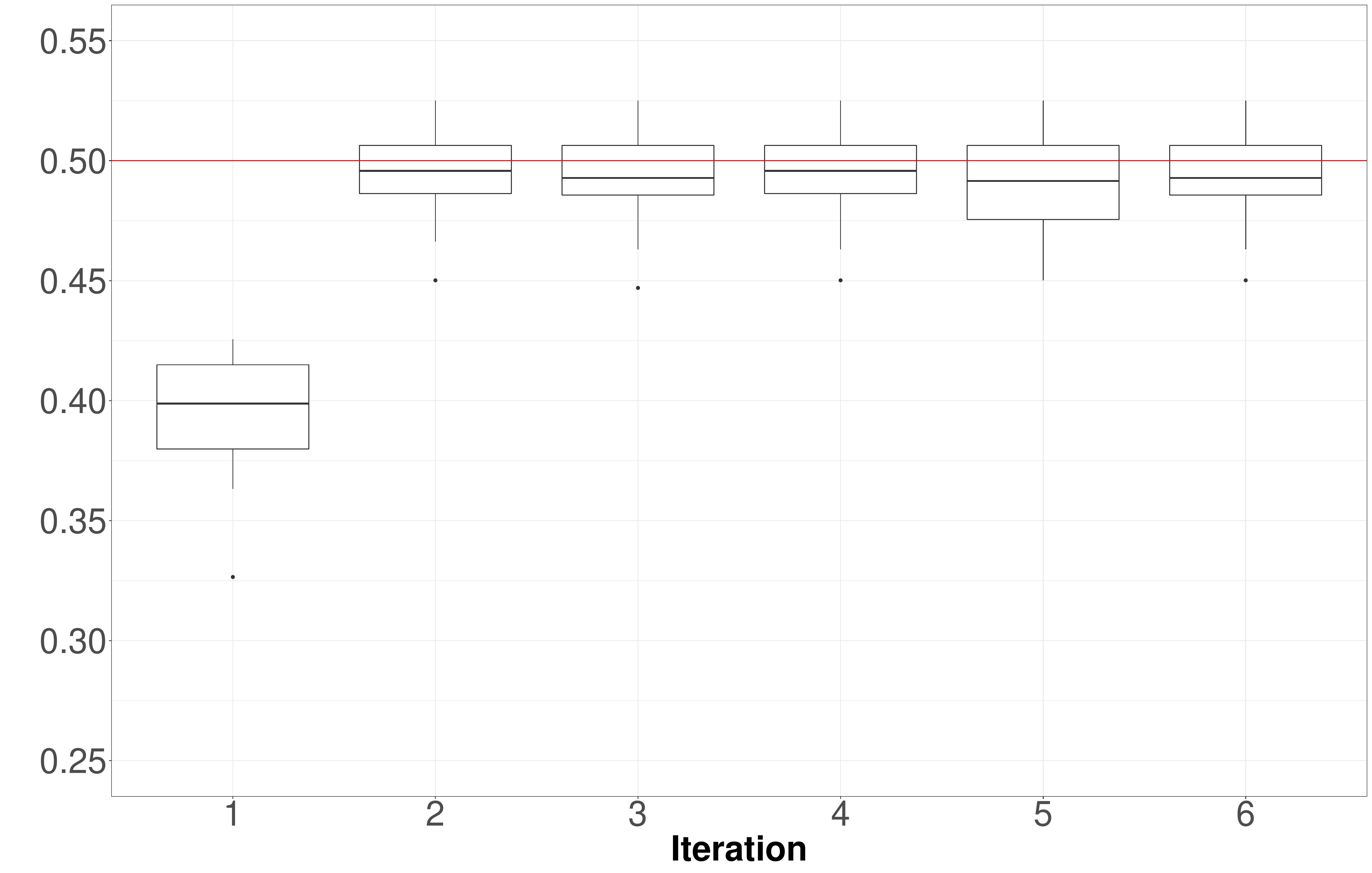} 
  & \includegraphics[width=0.32\textwidth, height=4.5cm]{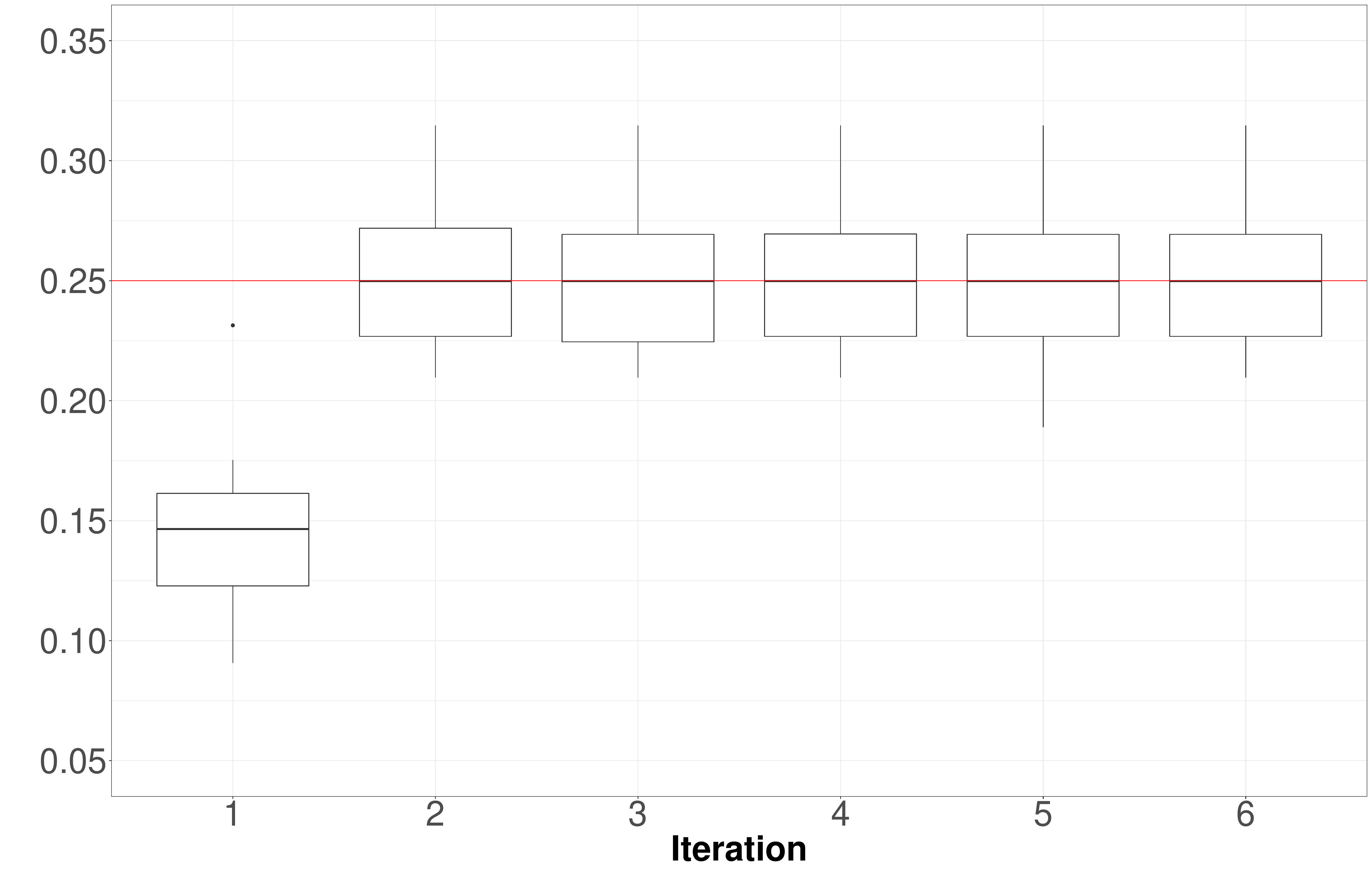}\\
\includegraphics[width=0.32\textwidth, height=4.5cm]{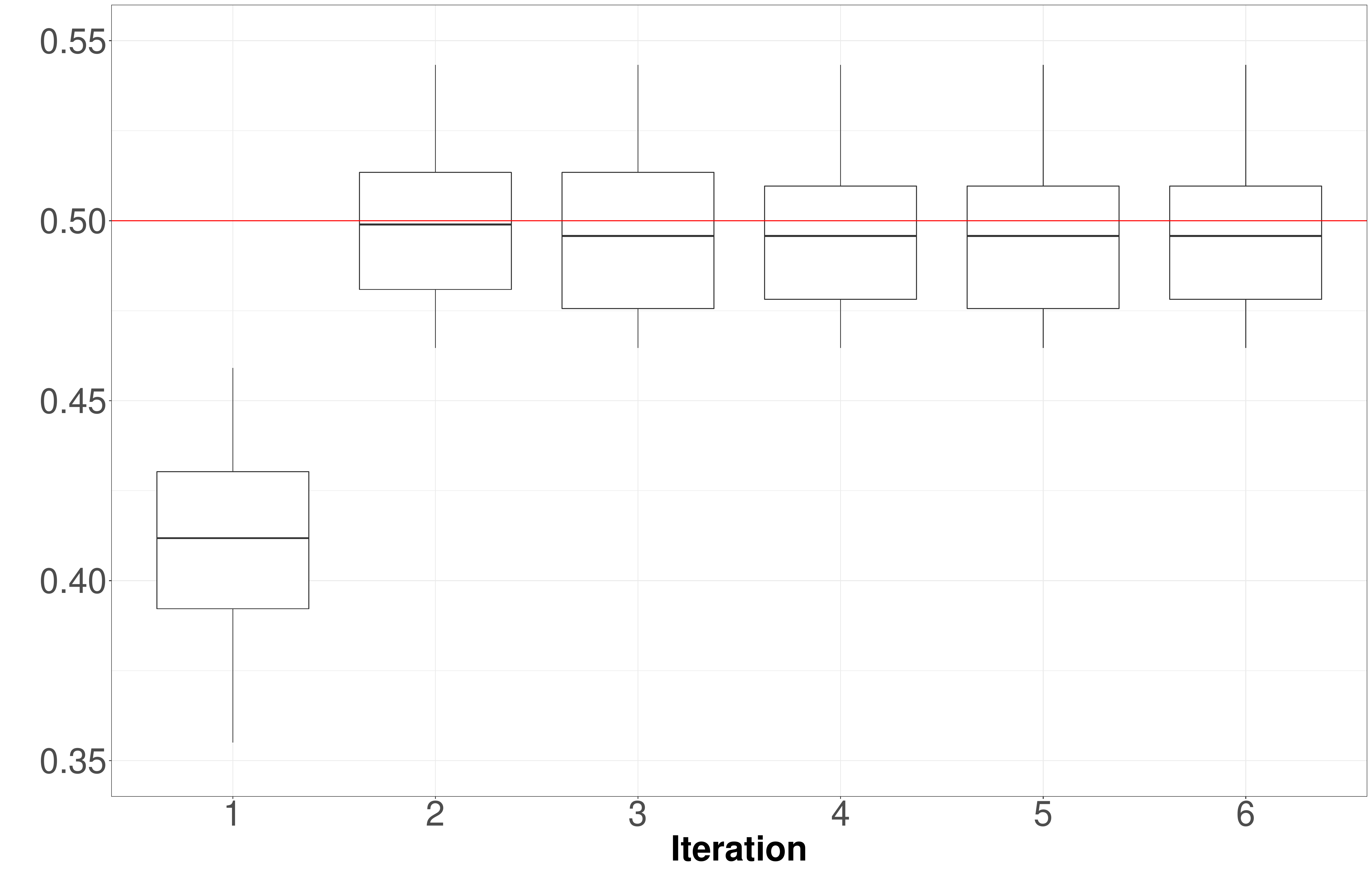}
  &  \includegraphics[width=0.32\textwidth, height=4.5cm]{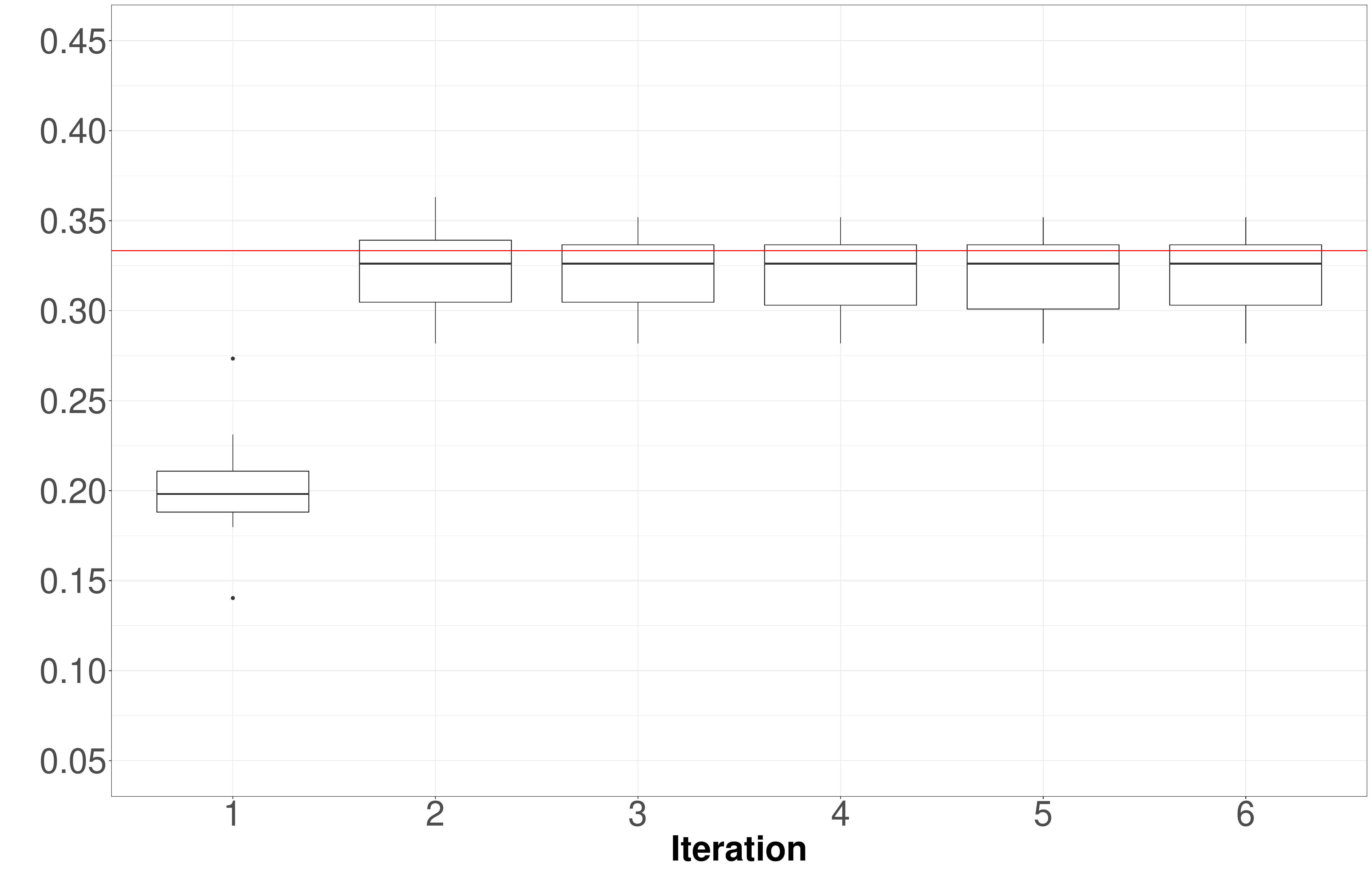} 
  & \includegraphics[width=0.32\textwidth, height=4.5cm]{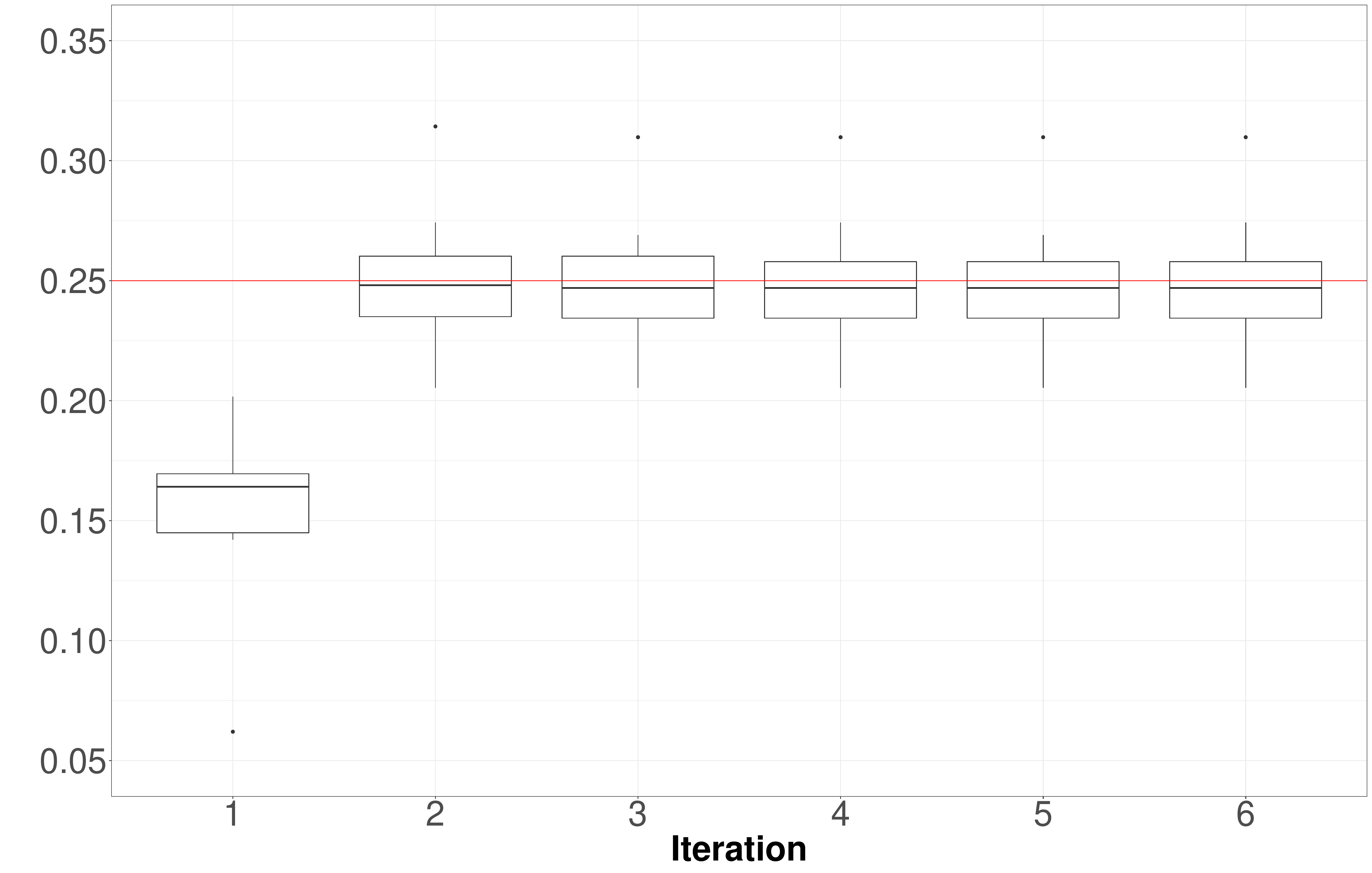}\\
\end{tabular}  
\caption{Boxplots for the estimations of $\boldsymbol{\gamma}^\star$ in Model (\ref{eq:mut_Wt}) with a 5\% sparsity level and $q=1,2,3$ obtained by
  \texttt{ss\_cv}.
Top: $q=1$ and $\gamma_1^\star=0.5$ (left), $q=2$ and $\gamma_1^\star=0.5$ (middle), $q=2$ and $\gamma_2^\star=0.25$ (right). Bottom: $q=3$ and $\gamma_1^\star=0.5$ (left), $q=3$ and  $\gamma_2^\star=1/3$ (middle), $q=3$ and $\gamma_3^\star=0.25$ (right).
The horizontal lines correspond to the values of the $\gamma_i^\star$'s. \label{fig:gamma:5:cv}}
 \end{center}
\end{figure}

\begin{figure}[!htbp]
  \begin{center}
\begin{tabular}{ccc}
  \includegraphics[width=0.32\textwidth, height=4.5cm]{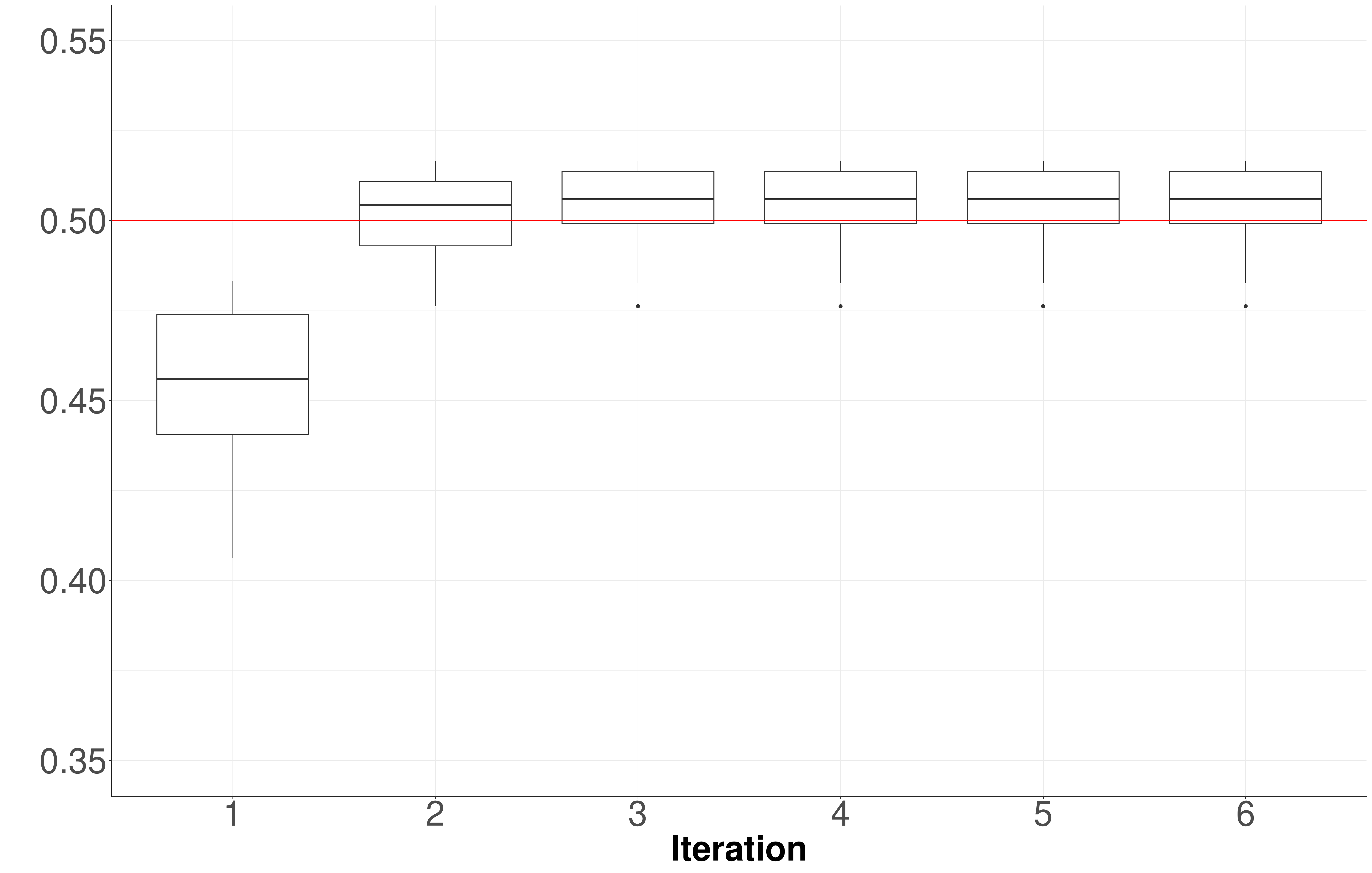}
  &  \includegraphics[width=0.32\textwidth, height=4.5cm]{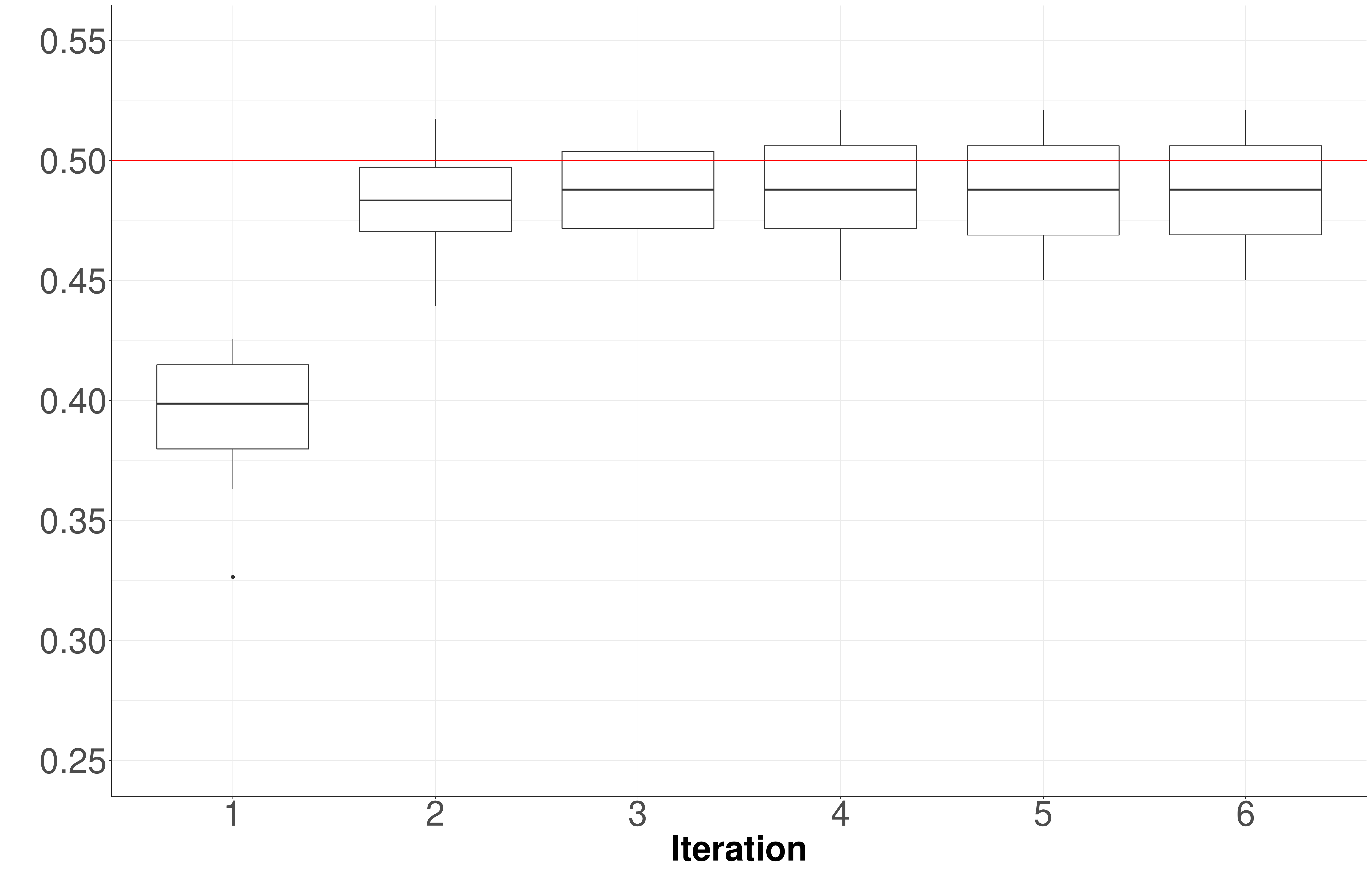} 
  & \includegraphics[width=0.32\textwidth, height=4.5cm]{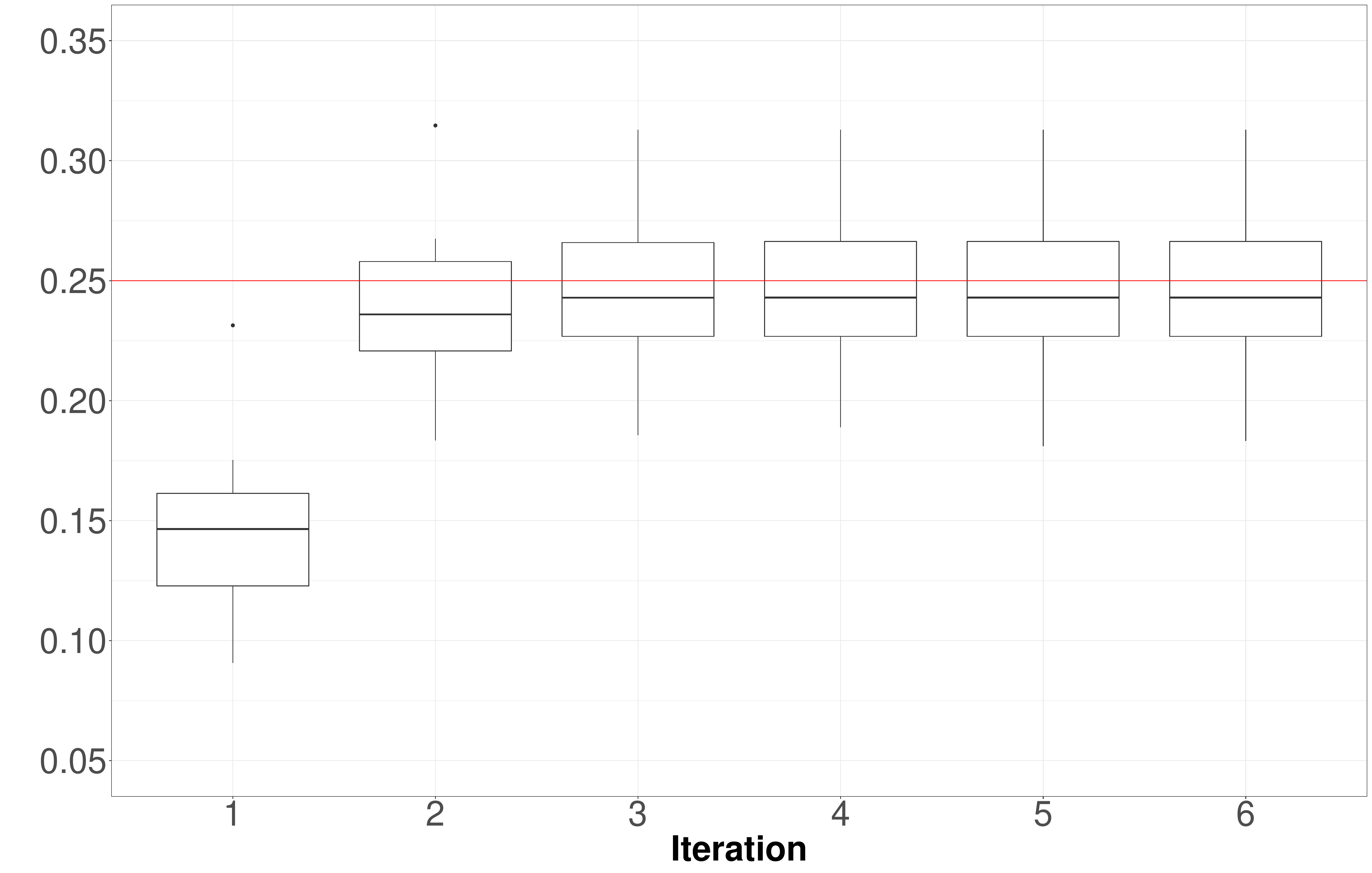}\\
\includegraphics[width=0.32\textwidth, height=4.5cm]{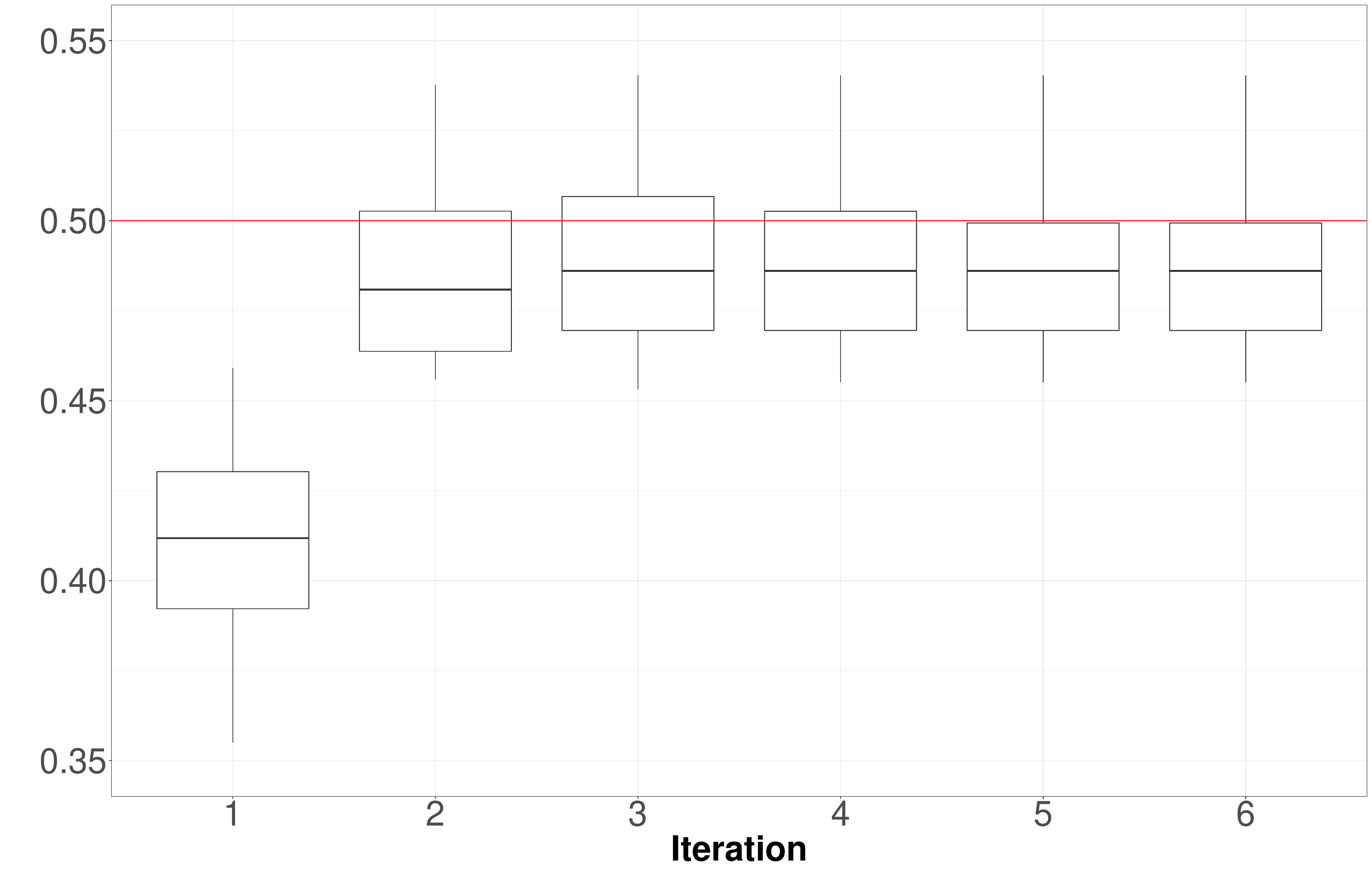}
  &  \includegraphics[width=0.32\textwidth, height=4.5cm]{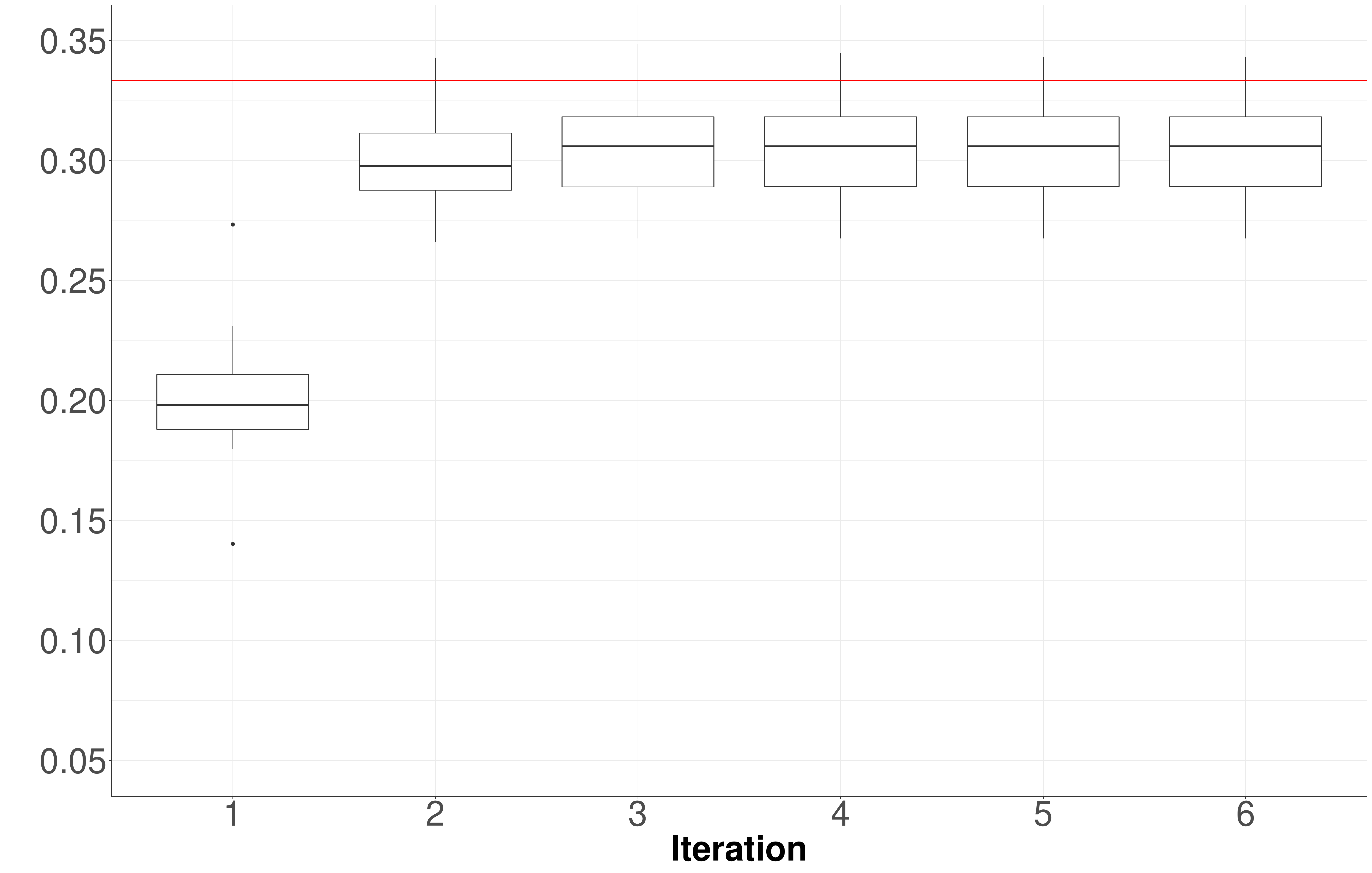} 
  & \includegraphics[width=0.32\textwidth, height=4.5cm]{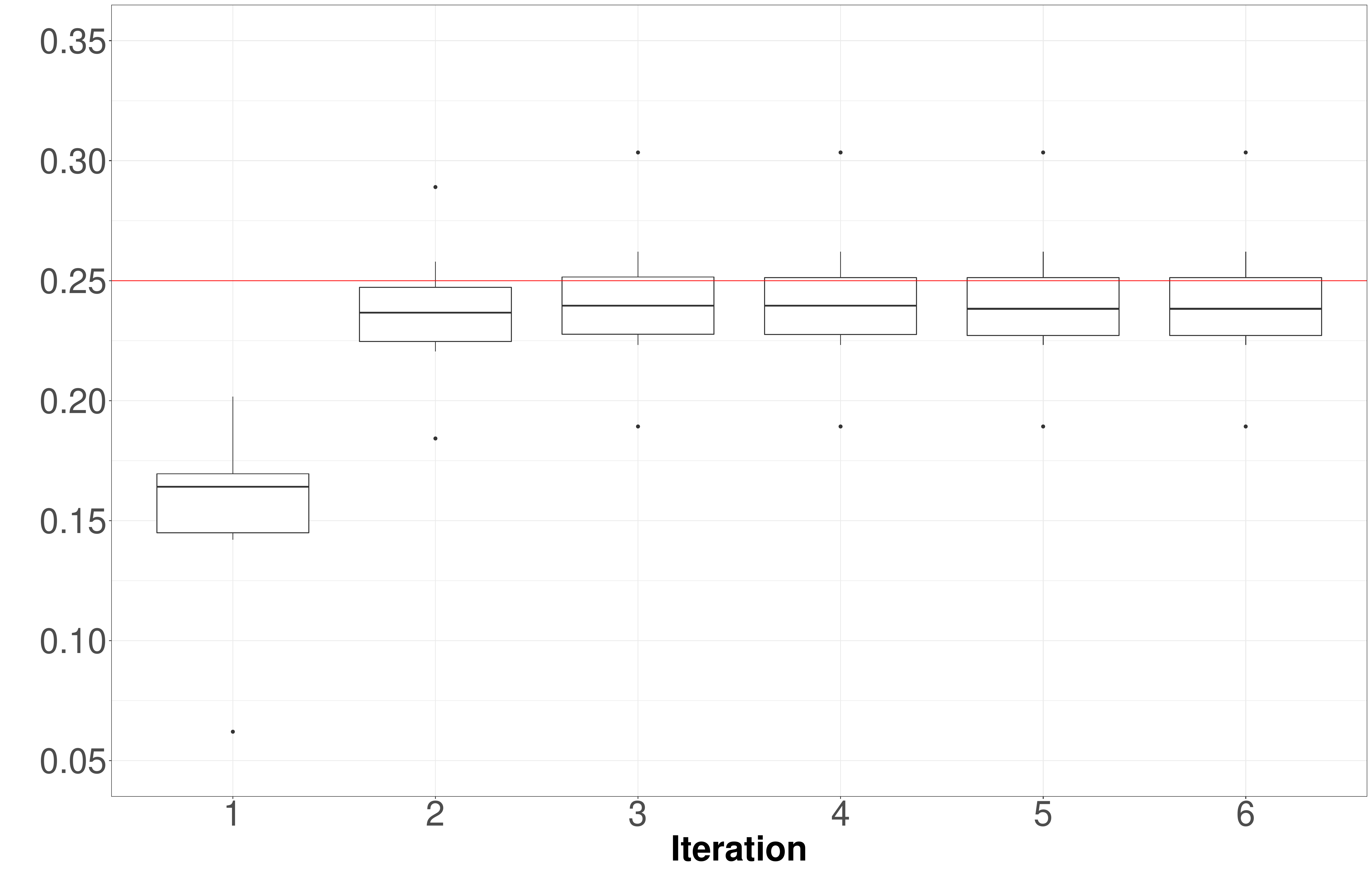}\\
\end{tabular}  
\caption{Boxplots for the estimations of $\boldsymbol{\gamma}^\star$ in Model (\ref{eq:mut_Wt}) with a 5\% sparsity level and $q=1,2,3$ obtained by
  \texttt{fast\_ss}.
 Top: $q=1$ and $\gamma_1^\star=0.5$ (left), $q=2$ and $\gamma_1^\star=0.5$ (middle), $q=2$ and $\gamma_2^\star=0.25$ (right). Bottom: $q=3$ and $\gamma_1^\star=0.5$ (left), $q=3$ and  $\gamma_2^\star=1/3$ (middle), $q=3$ and $\gamma_3^\star=0.25$ (right).
   The horizontal lines correspond to the values of the $\gamma_i^\star$'s.\label{fig:gamma:5:fast}}
 \end{center}
\end{figure}

\begin{figure}[!htbp]
  \begin{center}
\begin{tabular}{ccc}
  \includegraphics[width=0.32\textwidth, height=4.5cm]{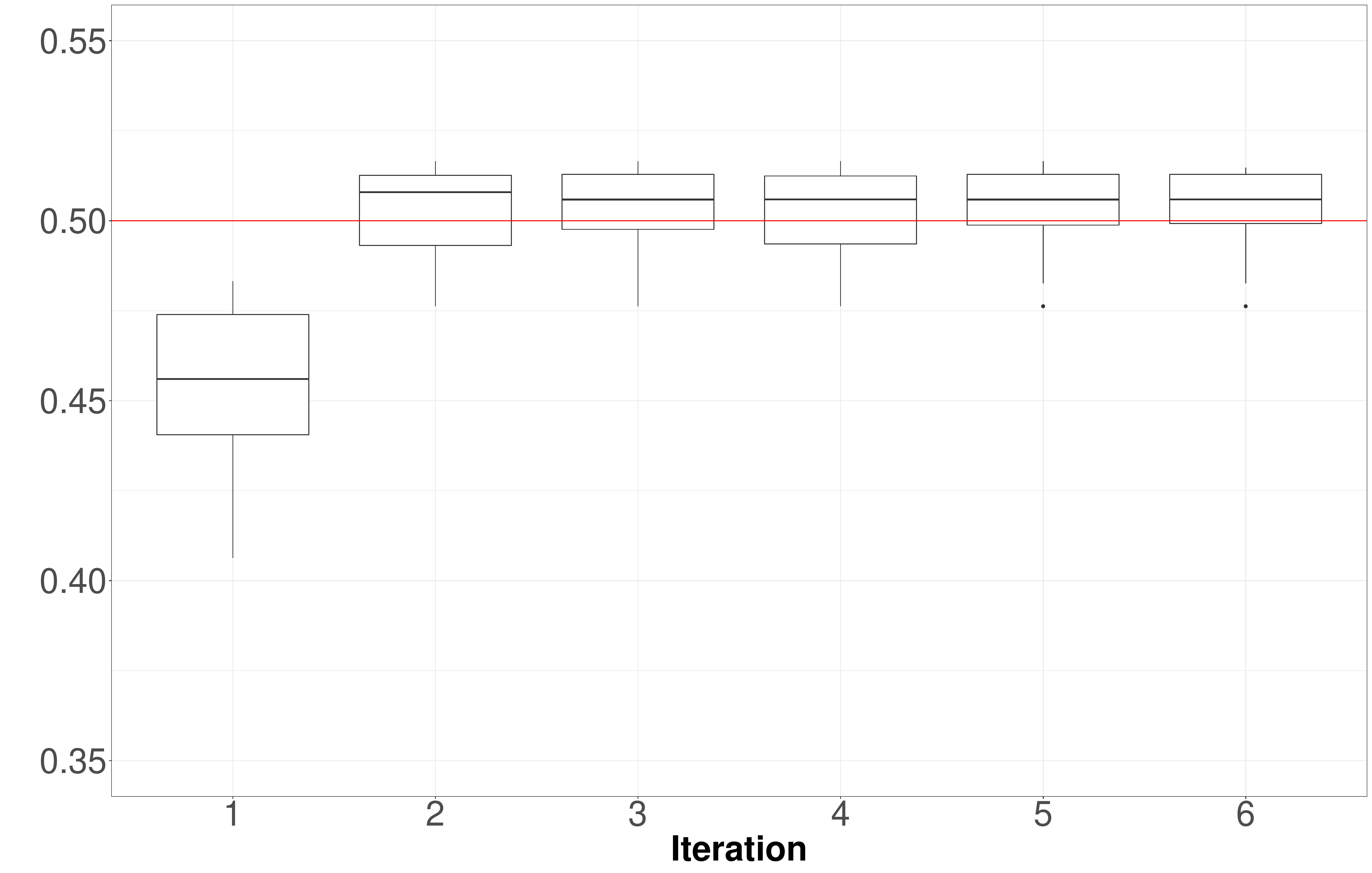}
  &  \includegraphics[width=0.32\textwidth, height=4.5cm]{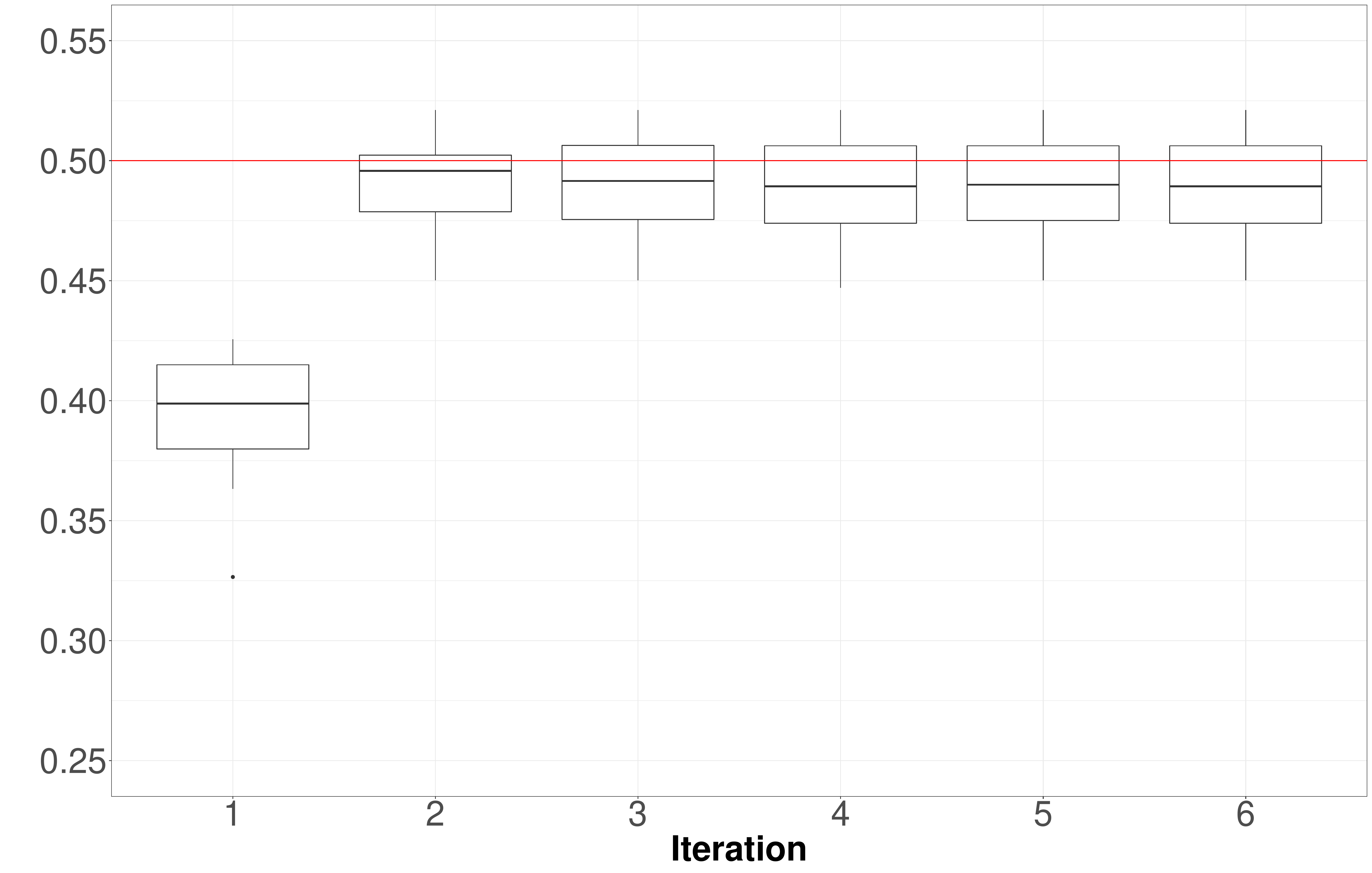} 
  & \includegraphics[width=0.32\textwidth, height=4.5cm]{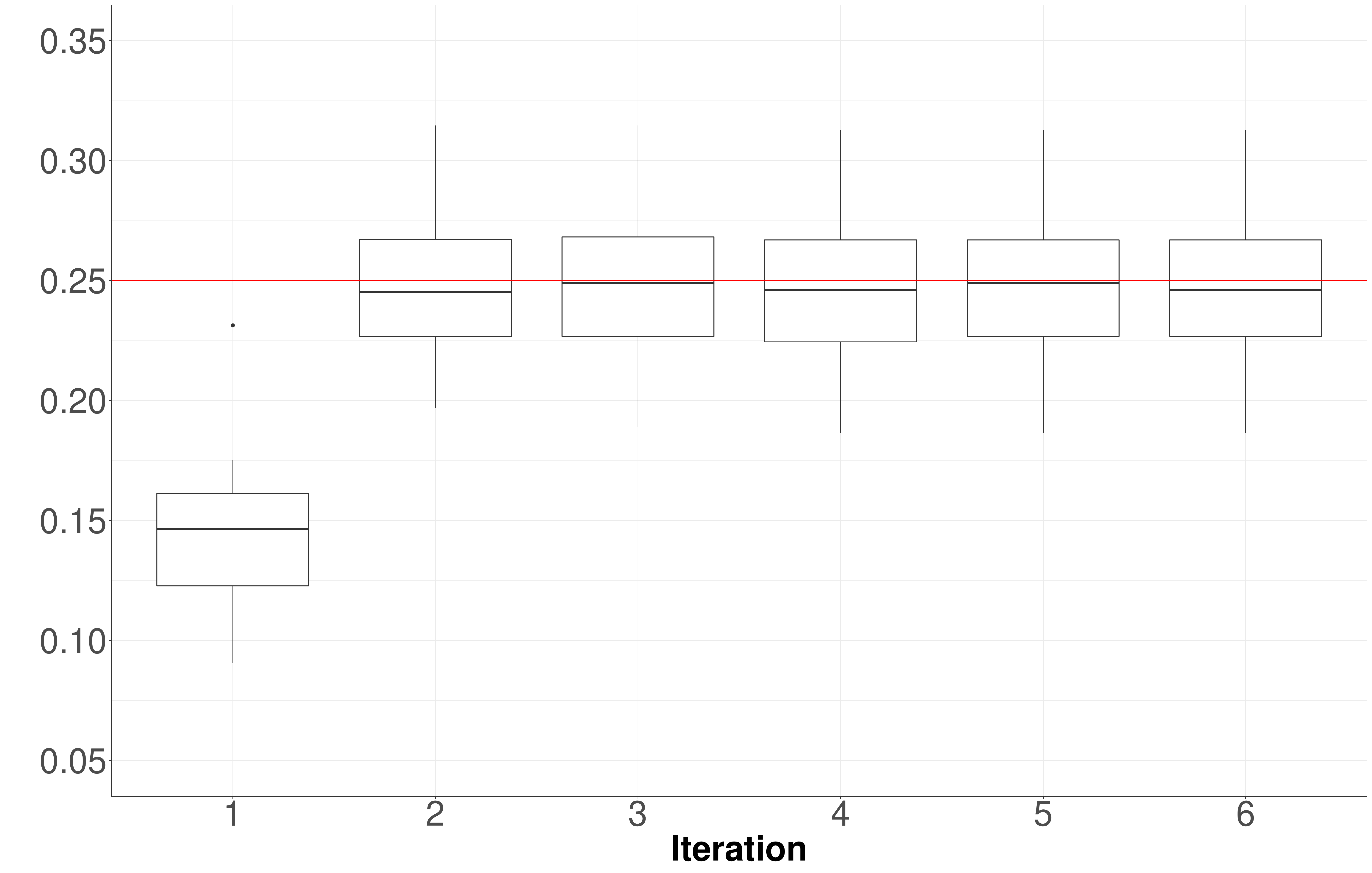}\\
\includegraphics[width=0.32\textwidth, height=4.5cm]{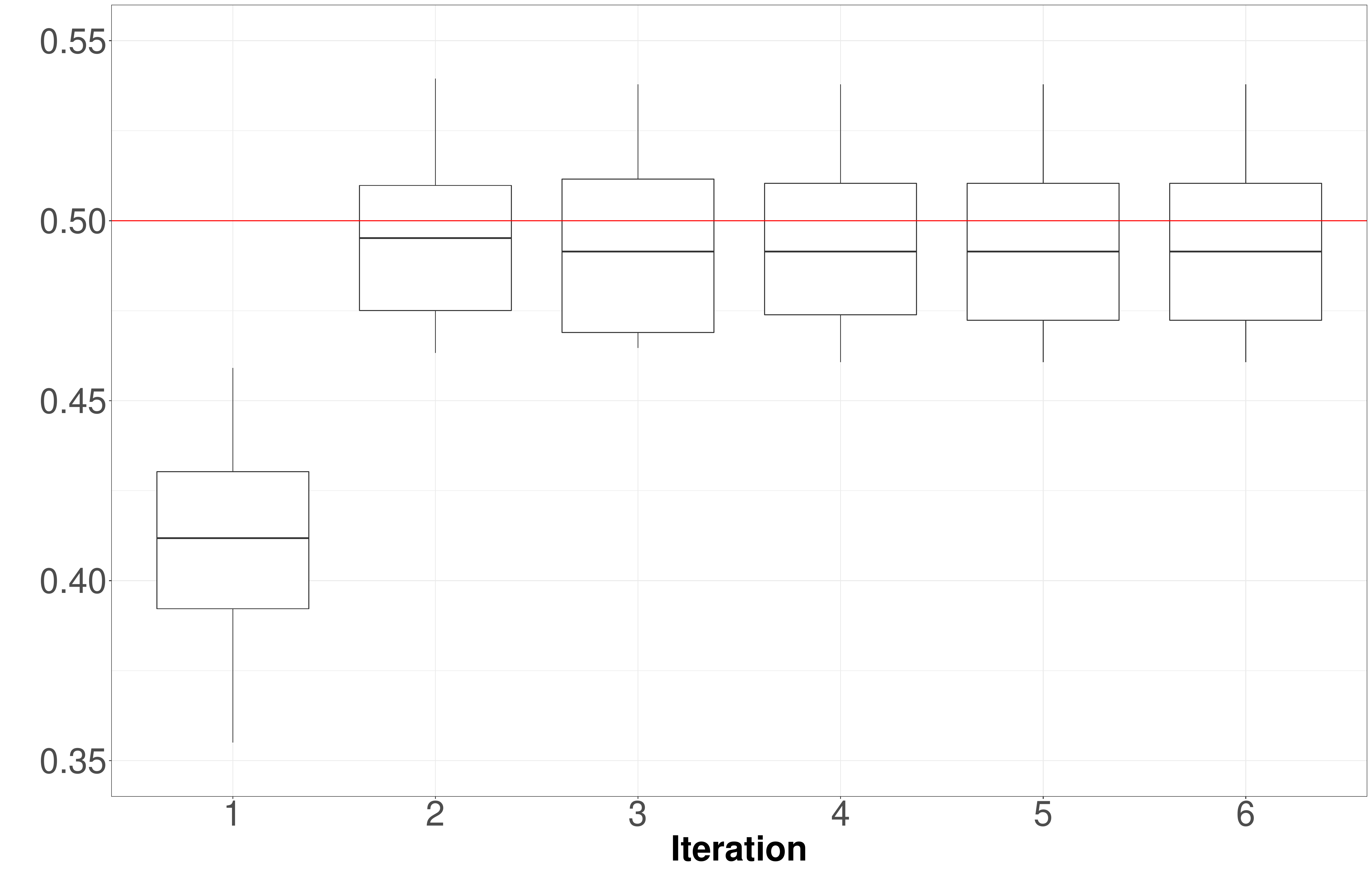}
  &  \includegraphics[width=0.32\textwidth, height=4.5cm]{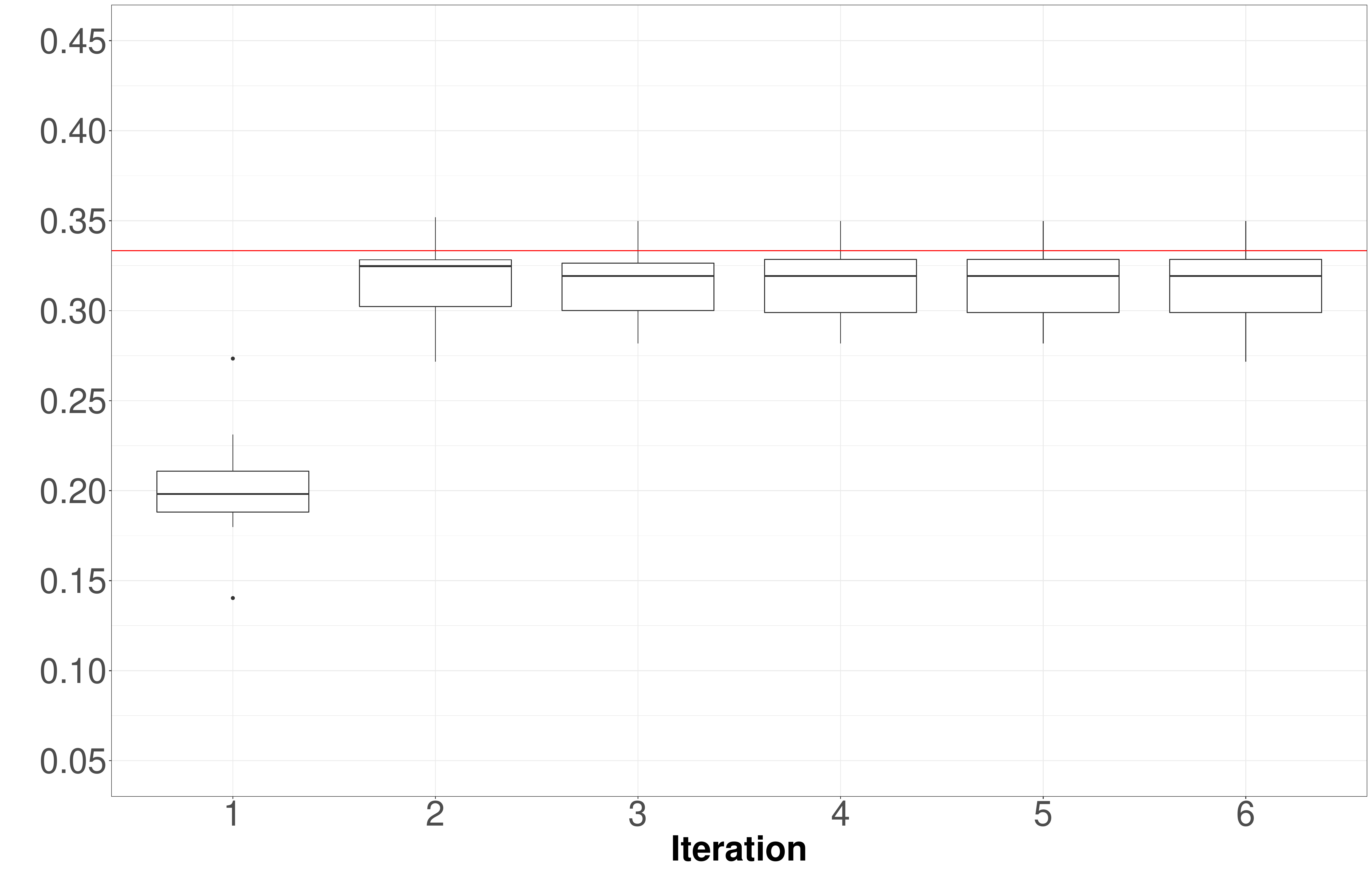} 
  & \includegraphics[width=0.32\textwidth, height=4.5cm]{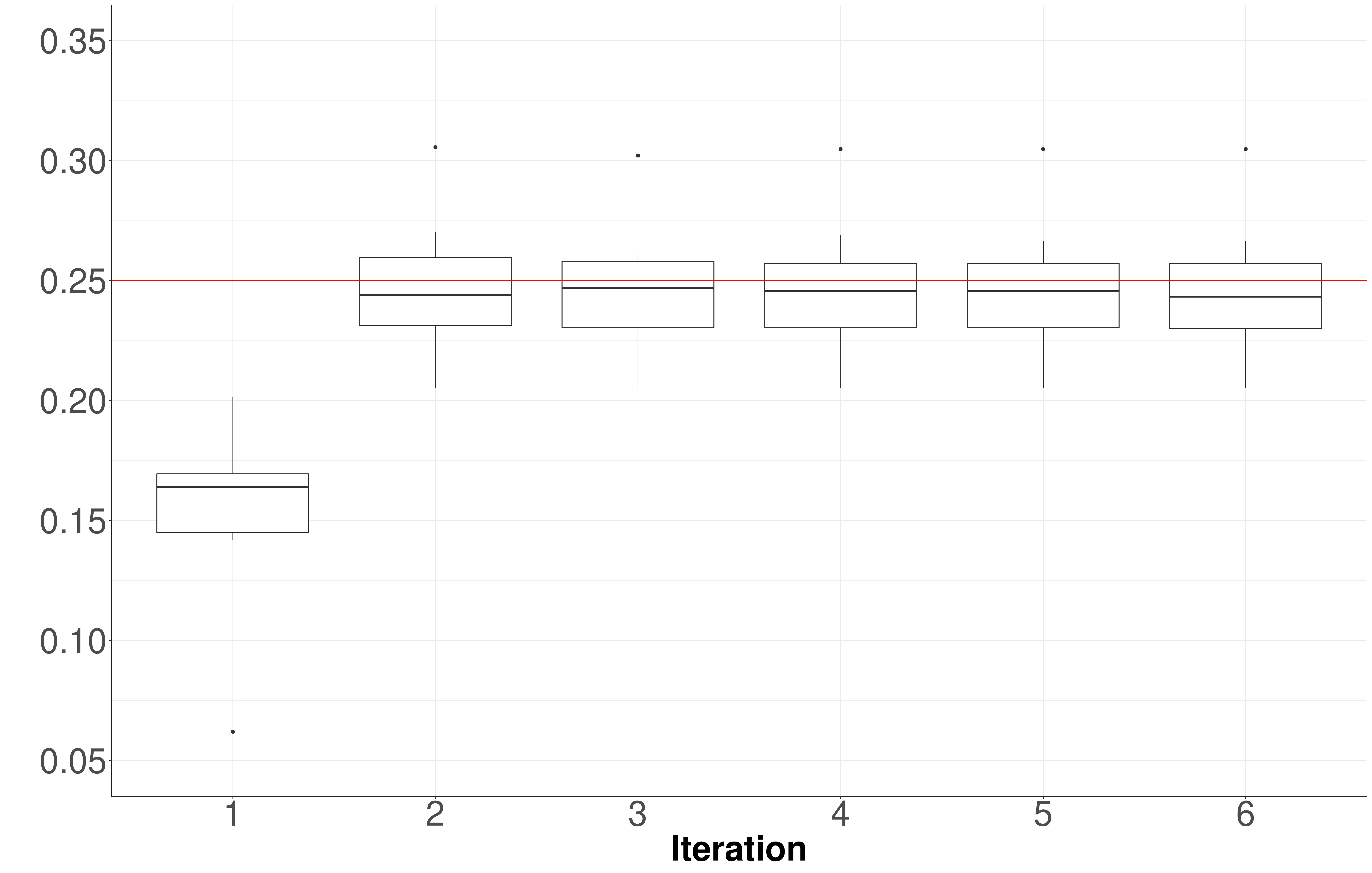}\\
\end{tabular}  
\caption{Boxplots for the estimations of $\boldsymbol{\gamma}^\star$ in Model (\ref{eq:mut_Wt}) with a 5\% sparsity level and $q=1,2,3$ obtained by
  \texttt{ss\_min}.
 Top: $q=1$ and $\gamma_1^\star=0.5$ (left), $q=2$ and $\gamma_1^\star=0.5$ (middle), $q=2$ and $\gamma_2^\star=0.25$ (right). Bottom: $q=3$ and $\gamma_1^\star=0.5$ (left), $q=3$ and  $\gamma_2^\star=1/3$ (middle), $q=3$ and $\gamma_3^\star=0.25$ (right).
   The horizontal lines correspond to the values of the $\gamma_i^\star$'s.\label{fig:gamma:5:min}}
 \end{center}
\end{figure}

\textbf{Impact of the value of $n$}

In this paragraph, we study the impact of the value of $n$ on the TPR and the FPR associated to the support recovery of $\boldsymbol{\beta}^\star$
and on the estimation of $\boldsymbol{\gamma}^\star$ for
\texttt{ss\_min}, the other approaches providing similar results.

Based on Figures \ref{fig:TPR:FPR:thresh_5} and \ref{fig:TPR:FPR:thresh_10},
we chose a threshold equal to 0.7 for both sparsity levels (5\% and 10\%) which provides a good trade-off between TPR and FPR for all values of $n$.
We can see from Figure \ref{fig:TPR:FPR:n} that \texttt{ss\_min} with this threshold
outperforms \texttt{lasso\_cv} when the sparsity level is equal to 5\% and all the values of $n$ considered. In the case where the
sparsity level is equal to 10\%, \texttt{lasso\_cv} has a slightly larger TPR for $n=150$ and $n=200$. However, the FPR of \texttt{ss\_min}
is much smaller.

\begin{figure}[!htbp]
  \includegraphics[scale=0.22]{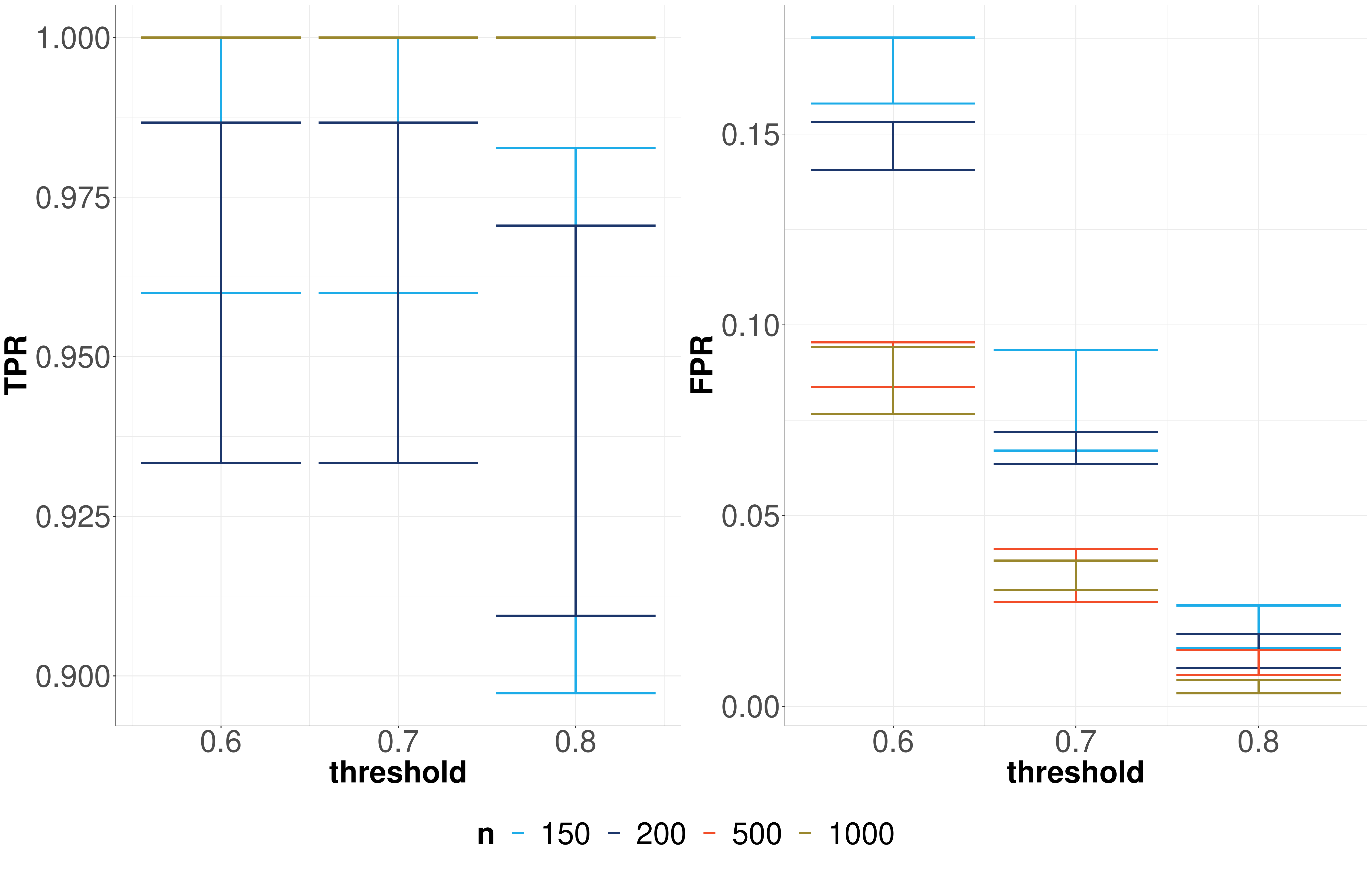}
  \caption{Error bars of the TPR and FPR associated to the support recovery of $\boldsymbol{\beta}^\star$ for \texttt{ss\_min} with respect to the thresholds
    for different values of $n$, $q=1$, $p=100$ and a 5\% sparsity level. 
 \label{fig:TPR:FPR:thresh_5}}
\end{figure}

\begin{figure}[!htbp]
  \includegraphics[scale=0.22]{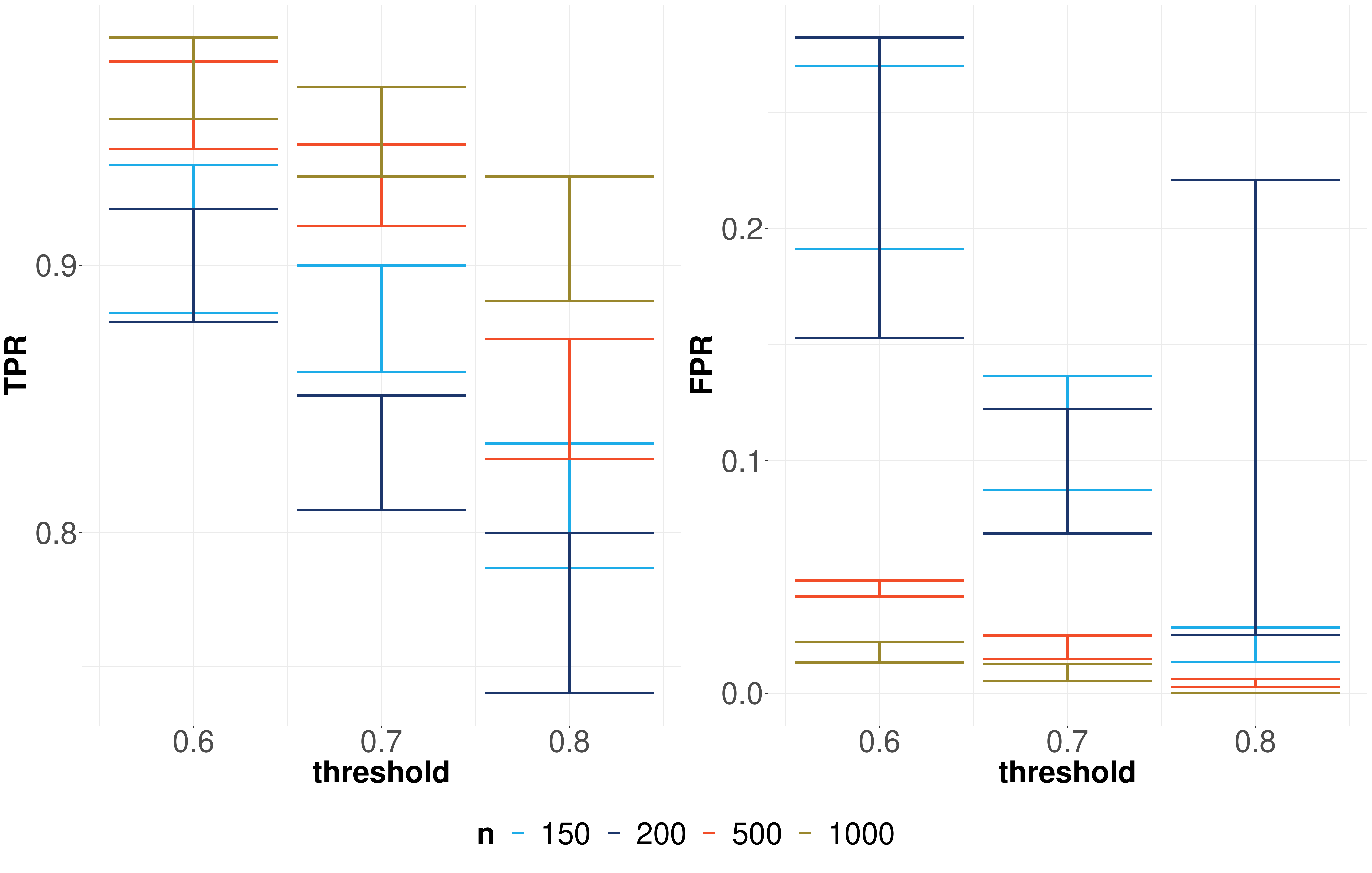}
  \caption{Error bars of the TPR and FPR associated to the support recovery of $\boldsymbol{\beta}^\star$ for \texttt{ss\_min} with respect to the thresholds
    for different values of $n$, $q=1$, $p=100$ and a 10\% sparsity level. 
 \label{fig:TPR:FPR:thresh_10}}
\end{figure}

\begin{figure}[!htbp]
  \includegraphics[scale=0.22]{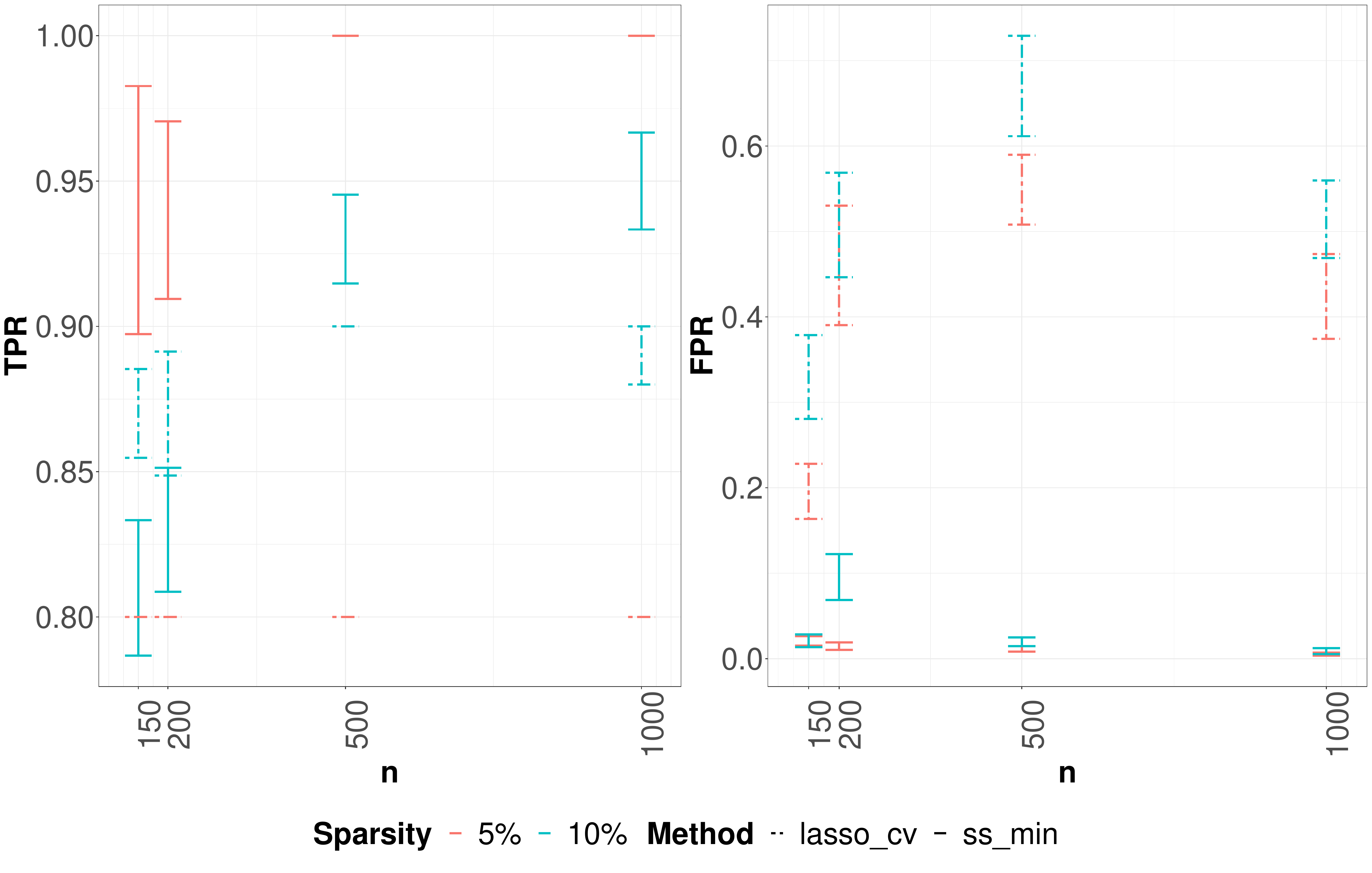}
  \caption{Error bars of the TPR and FPR associated to the support recovery of $\boldsymbol{\beta}^\star$ for \texttt{ss\_min} and \texttt{lasso\_cv}
    for different values of $n$, $q=1$, $p=100$ and different sparsity levels. \label{fig:TPR:FPR:n}}
\end{figure}

Figure \ref{fig:gamma:iter} displays the boxplots for the estimations of $\boldsymbol{\gamma}^\star$ in Model (\ref{eq:mut_Wt}) for $q=1$, $p=100$, different values of $n$ (150, 200, 500, 1000) and sparsity levels (5\% and 10\%) obtained by \texttt{ss\_min} with a threshold of 0.7 for six iterations.
We can see from this figure that this approach provides accurate estimations of $\gamma_1^\star$ from Iteration 2 especially when 
$n$ is larger than 200.

\begin{figure}[!htbp]
  \includegraphics[scale=0.22]{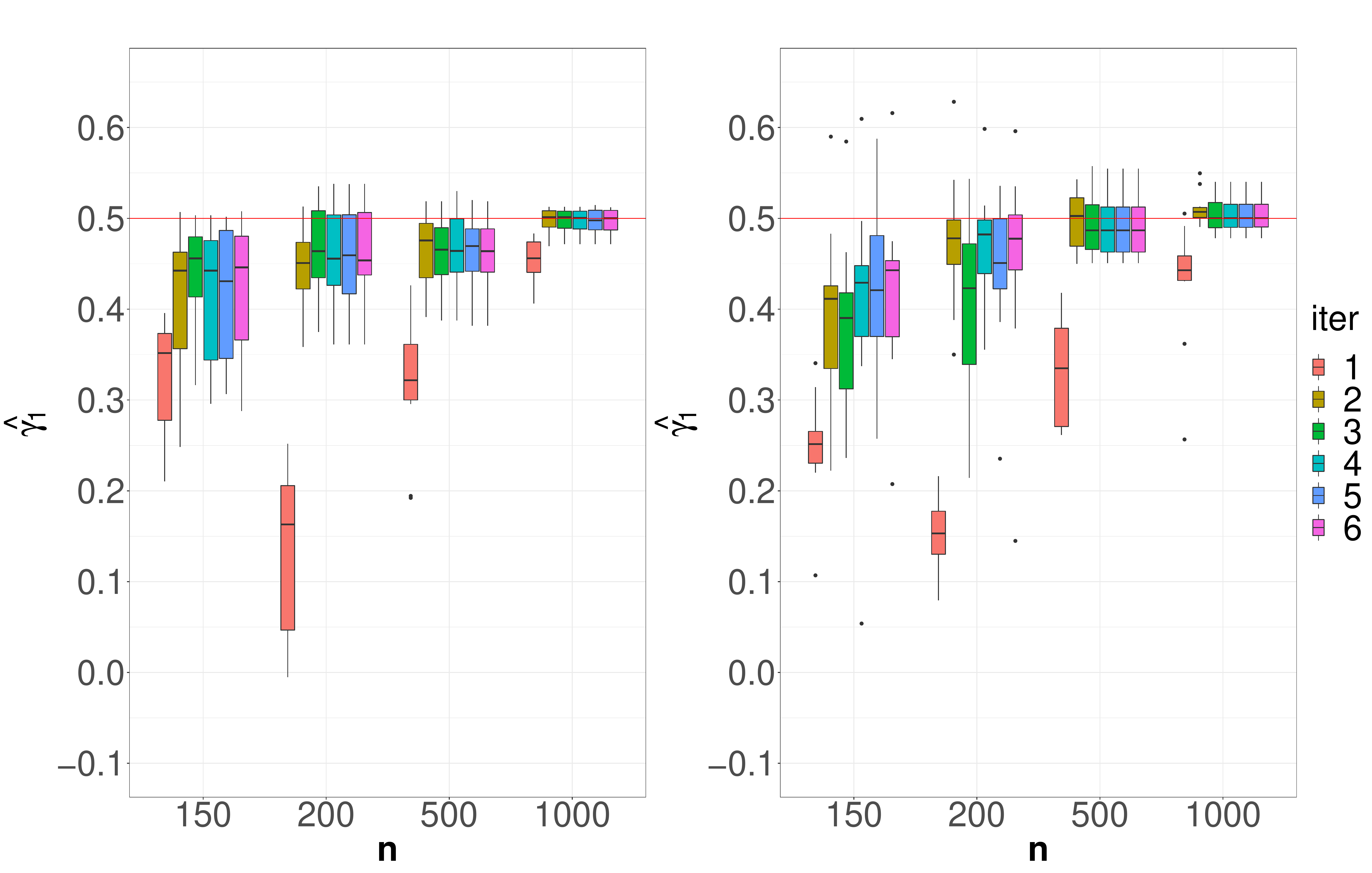}
  \caption{Boxplots for the estimations of $\boldsymbol{\gamma}^\star$ in Model (\ref{eq:mut_Wt}) for $q=1$, $p=100$, different values of $n$ and 
   sparsity levels (left: 5\%, right: 10\%) obtained by \texttt{ss\_min} with a threshold of 0.7 for different iterations (\texttt{iter}). 
 \label{fig:gamma:iter}}
\end{figure}

\subsection{Numerical performance}

Figure \ref{fig:time} displays the means of the computational times for \texttt{ss\_min} and \texttt{fast\_ss}. The performance
of \texttt{ss\_cv} are not displayed since they are similar to the one of \texttt{ss\_min}.
We can see from this figure that it takes around 1 minute to process observations $Y_1,\dots,Y_n$
satisfying Model (\ref{eq:Yt}) for a given threshold and one iteration, when $n=1000$ and $p=100$.
Moreover, we can observe that the computational burden of \texttt{fast\_ss} is slightly
smaller than the one of \texttt{ss\_min}. 

\begin{figure}[!htbp]
\includegraphics[scale=0.2]{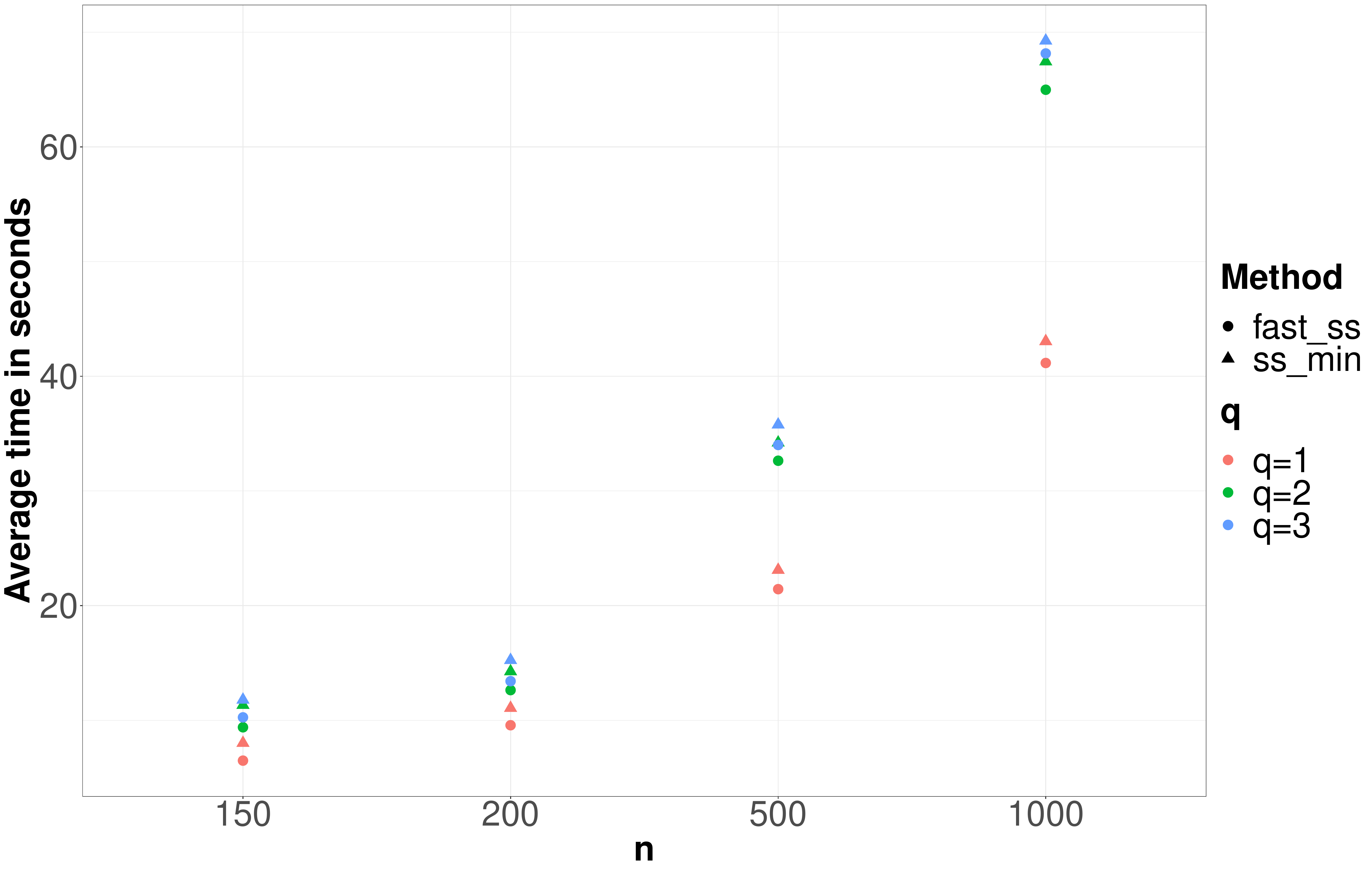}
\caption{Means of the computational times in seconds for \texttt{ss\_min} and \texttt{fast\_ss} in the case where $p=100$, and different values
 of $n$ and $q$, a given threshold and one iteration.\label{fig:time}}
\end{figure}

\section{Proofs}\label{sec:proofs}

\subsection{\textcolor{black}{Computation of the first and second derivatives of $W_t$ defined in (\ref{eq:Wt})}}

The computations given below are similar to those provided in \cite{davis:dunsmuir:street:2005} but are specific to the parametrization 
$\boldsymbol{\delta}=(\boldsymbol{\beta}',\boldsymbol{\gamma}')$ considered in this paper.

\subsubsection{\textcolor{black}{Computation of the first derivatives of $W_t$ }}\label{subsub:first_derive}

By the definition of $W_t$ given in (\ref{eq:Wt}), we get
\begin{equation*}
\frac{\partial W_t}{\partial \boldsymbol{\delta}}(\boldsymbol{\delta})=\frac{\partial\boldsymbol{\beta}' x_t}{\partial \boldsymbol{\delta}}+\frac{\partial Z_t}{\partial \boldsymbol{\delta}}
(\boldsymbol{\delta}),
\end{equation*}
where $\boldsymbol{\beta}$, $x_t$ and $Z_t$ are defined in (\ref{eq:Wt}). 
More precisely, for all $k\in\{0,\dots,p\}$, $\ell\in\{1,\dots,q\}$ and $t\in\{1,\dots,n\}$, by (\ref{eq:Et}),
\begin{align}\label{eq:gradW_beta}
\frac{\partial W_t}{\partial \beta_k}&=x_{t,k}+\frac{\partial Z_t}{\partial \beta_k}=x_{t,k}+\sum_{j=1}^{q\wedge (t-1)}\gamma_j\frac{\partial E_{t-j}}{\partial \beta_k}\nonumber\\
&=x_{t,k}-\sum_{j=1}^{q\wedge (t-1)}\gamma_j Y_{t-j}\frac{\partial W_{t-j}}{\partial \beta_k}\exp(-W_{t-j})=x_{t,k}-\sum_{j=1}^{q\wedge (t-1)}\gamma_j(1+E_{t-j})\frac{\partial W_{t-j}}{\partial \beta_k},\\
\frac{\partial W_t}{\partial \gamma_\ell}&=E_{t-\ell}+\sum_{j=1}^{q\wedge (t-1)} \gamma_j\frac{\partial E_{t-j}}{\partial\gamma_\ell}\nonumber\\\label{eq:gradW_gamma}
&=E_{t-\ell}-\sum_{j=1}^{q\wedge (t-1)}\gamma_j Y_{t-j}\frac{\partial W_{t-j}}{\partial \gamma_\ell}\exp(-W_{t-j})=E_{t-\ell}-\sum_{j=1}^{q\wedge (t-1)}\gamma_j(1+E_{t-j})\frac{\partial W_{t-j}}{\partial \gamma_\ell},
\end{align}
where we used that  $E_t=0,\; \forall t\leq 0$.

The first derivatives of $W_t$ are thus obtained from the following recursive expressions. For all $k\in\{0,\dots,p\}$ 
\begin{align*}
\frac{\partial W_1}{\partial \beta_k}&=x_{1,k},\\
\frac{\partial W_2}{\partial \beta_k}&=x_{2,k}-\gamma_1(1+E_{1})\frac{\partial W_{1}}{\partial \beta_k},
\end{align*}
where
\begin{equation}\label{eq:E1}
W_1=\boldsymbol{\beta}' x_1 \textrm{ and } E_1=Y_1\exp(-W_1)-1.
\end{equation}
Moreover,
\begin{equation*}
\frac{\partial W_3}{\partial \beta_k}=x_{3,k}-\gamma_1(1+E_{2})\frac{\partial W_{2}}{\partial \beta_k}-\gamma_2(1+E_{1})\frac{\partial W_{1}}{\partial \beta_k},
\end{equation*}
where
\begin{equation}\label{eq:E2}
W_2=\boldsymbol{\beta}' x_2  +\gamma_1 E_{1},\; E_2=Y_2\exp(-W_2)-1,
\end{equation}
and so on. In the same way, for all $\ell\in\{1,\dots,q\}$
\begin{align*}
\frac{\partial W_1}{\partial \gamma_\ell}&=0,\\
\frac{\partial W_2}{\partial \gamma_\ell}&=E_{2-\ell},\\
\frac{\partial W_3}{\partial \gamma_\ell}&=E_{3-\ell}-\gamma_1(1+E_{2})\frac{\partial W_{2}}{\partial \gamma_\ell}
\end{align*}
and so on, where $E_t=0,\; \forall t\leq 0$ and $E_1$, $E_2$ are defined in (\ref{eq:E1}) and (\ref{eq:E2}), respectively.

\subsubsection{\textcolor{black}{Computation of the second derivatives of $W_t$}}\label{subsub:second_derive}

Using (\ref{eq:gradW_beta}) and (\ref{eq:gradW_gamma}), we get that for all $j,k\in\{0,\dots,p\}$, $\ell,m\in\{1,\dots,q\}$ and $t\in\{1,\dots,n\}$,
\begin{align*}
\frac{\partial^2 W_t}{\partial \beta_j\partial \beta_k}&=-\sum_{i=1}^{q\wedge (t-1)}\gamma_i(1+E_{t-i})\frac{\partial^2 W_{t-i}}{\partial \beta_j\partial \beta_k}
-\sum_{i=1}^{q\wedge (t-1)}\gamma_i\frac{\partial E_{t-i}}{\partial\beta_j}\frac{\partial W_{t-i}}{\partial \beta_k}\\
&=-\sum_{i=1}^{q\wedge (t-1)}\gamma_i(1+E_{t-i})\frac{\partial^2 W_{t-i}}{\partial \beta_j\partial \beta_k}
+\sum_{i=1}^{q\wedge (t-1)}\gamma_i(1+E_{t-i})\frac{\partial W_{t-i}}{\partial \beta_j}\frac{\partial W_{t-i}}{\partial \beta_k},\\
\frac{\partial^2 W_t}{\partial \beta_k\partial\gamma_\ell}&=-(1+E_{t-\ell})\frac{\partial W_{t-\ell}}{\partial \beta_k}
-\sum_{i=1}^{q\wedge (t-1)}\gamma_i\left\{\frac{\partial W_{t-i}}{\partial \beta_k}\frac{\partial E_{t-i}}{\partial\gamma_\ell}
                                                            +(1+E_{t-i})\frac{\partial^2 W_{t-i}}{\partial \beta_k\partial\gamma_\ell}\right\}\\
&=-(1+E_{t-\ell})\frac{\partial W_{t-\ell}}{\partial \beta_k}
-\sum_{i=1}^{q\wedge (t-1)}\gamma_i\left\{-(1+E_{t-i})\frac{\partial W_{t-i}}{\partial\beta_k}\frac{\partial W_{t-i}}{\partial \gamma_\ell}
                                                            +(1+E_{t-i})\frac{\partial^2 W_{t-i}}{\partial \beta_k\partial\gamma_\ell}\right\},\\
\frac{\partial^2 W_t}{\partial \gamma_\ell\partial\gamma_m}&=\frac{\partial E_{t-\ell}}{\partial \gamma_m}
-(1+E_{t-m})\frac{\partial W_{t-m}}{\partial \gamma_\ell} 
-\sum_{i=1}^{q\wedge (t-1)}\gamma_i\left\{\frac{\partial W_{t-i}}{\partial \gamma_\ell} \frac{\partial E_{t-i}}{\partial \gamma_m}
+(1+E_{t-i})\frac{\partial^2 W_{t-i}}{\partial \gamma_\ell\partial \gamma_m}\right\}\\
&=-(1+E_{t-\ell})\frac{\partial W_{t-\ell}}{\partial \gamma_m}-(1+E_{t-m})\frac{\partial W_{t-m}}{\partial \gamma_\ell} \\
&-\sum_{i=1}^{q\wedge (t-1)}\gamma_i\left\{-(1+E_{t-i})\frac{\partial W_{t-i}}{\partial \gamma_\ell}\frac{\partial W_{t-i}}{\partial \gamma_m}
+(1+E_{t-i})\frac{\partial^2 W_{t-i}}{\partial \gamma_\ell\partial \gamma_m}\right\}.\\
\end{align*}

To compute the second derivatives of $W_t$, we shall use the following recursive expressions for all $j,k\in\{0,\dots,p\}$
\begin{align*}
\frac{\partial^2 W_1}{\partial \beta_j\partial \beta_k}&=0,\\
\frac{\partial^2 W_2}{\partial \beta_j\partial \beta_k}&=\gamma_1(1+E_1)x_{1,j}x_{1,k},
\end{align*}
where $E_1$ is defined in (\ref{eq:E1}) and so on. Moreover, for all $k\in\{0,\dots,p\}$ and $\ell\in\{1,\dots,q\}$
\begin{align*}
\frac{\partial^2 W_1}{\partial \beta_k\partial\gamma_\ell}&=0,\\
\frac{\partial^2 W_2}{\partial \beta_k\partial\gamma_\ell}&=-(1+E_{2-\ell})\frac{\partial W_{2-\ell}}{\partial \beta_k},
\end{align*}
where $E_t=0$ for all $t\leq 0$ and the first derivatives of $W_t$ are computed in (\ref{eq:gradW_beta}).
Note also that
\begin{align*}
\frac{\partial^2 W_1}{\partial \gamma_\ell\partial\gamma_m}&=0,\\
\frac{\partial^2 W_2}{\partial \gamma_\ell\partial\gamma_m}&=0
\end{align*}
and so on.

\subsection{Computational details for obtaining Criterion (\ref{eq:beta_hat})}\label{sub:var_sec}

By \eqref{eq:Ltilde},
\begin{align*}
\widetilde{L}(\boldsymbol{\beta})=\widetilde{L}(\boldsymbol{\beta}^{(0)})+\frac{\partial L}{\partial \boldsymbol{\beta}}(\boldsymbol{\beta}^{(0)},\widehat{\boldsymbol{\gamma}})
U(\boldsymbol{\nu}-\boldsymbol{\nu}^{(0)})-\frac12 (\boldsymbol{\nu}-\boldsymbol{\nu}^{(0)})' \Lambda (\boldsymbol{\nu}-\boldsymbol{\nu}^{(0)}),
\end{align*}
where $\boldsymbol{\nu}-\boldsymbol{\nu}^{(0)}=U'(\boldsymbol{\beta}-\boldsymbol{\beta}^{(0)})$.
Hence,
\begin{align*}
\widetilde{L}(\boldsymbol{\beta})&=\widetilde{L}(\boldsymbol{\beta}^{(0)})+\sum_{k=0}^p
\left(\frac{\partial L}{\partial \boldsymbol{\beta}}(\boldsymbol{\beta}^{(0)},\widehat{\boldsymbol{\gamma}}) U\right)_k (\nu_k-\nu_{k}^{(0)})
-\frac12\sum_{k=0}^p\lambda_k (\nu_k-\nu_{k}^{(0)})^2\\
&=\widetilde{L}(\boldsymbol{\beta}^{(0)})-\frac12\sum_{k=0}^p\lambda_k\left(\nu_k-\nu_{k}^{(0)}-\frac{1}{\lambda_k}
\left(\frac{\partial L}{\partial \boldsymbol{\beta}}(\boldsymbol{\beta}^{(0)},\widehat{\boldsymbol{\gamma}}) U\right)_k\right)^2
+\sum_{k=0}^p\frac{1}{2\lambda_k}\left(\frac{\partial L}{\partial \boldsymbol{\beta}}(\boldsymbol{\beta}^{(0)},\widehat{\boldsymbol{\gamma}}) U\right)_k^2,
\end{align*}
where the $\lambda_k$'s are the diagonal terms of $\Lambda$.

Since the only term depending on $\boldsymbol{\beta}$ is the second one in the last expression of $\widetilde{L}(\boldsymbol{\beta})$,
we define $\widetilde{L}_Q(\boldsymbol{\beta})$ appearing in Criterion (\ref{eq:beta_hat}) as follows:
\begin{eqnarray*}
-\widetilde{L}_Q(\boldsymbol{\beta})&=&\frac12\sum_{k=0}^p\lambda_k\left(\nu_k-\nu_{k}^{(0)}-\frac{1}{\lambda_k}
\left(\frac{\partial L}{\partial \boldsymbol{\beta}}(\boldsymbol{\beta}^{(0)},\widehat{\boldsymbol{\gamma}}) U\right)_k\right)^2\\
&=&\frac12 \left\|\Lambda^{1/2}\left(\boldsymbol{\nu}-\boldsymbol{\nu}^{(0)}-\Lambda^{-1} \left(\frac{\partial L}{\partial \boldsymbol{\beta}}(\boldsymbol{\beta}^{(0)},\widehat{\boldsymbol{\gamma}}) U\right)'
\right)\right\|_2^2\\
&=&\frac12 \left\|\Lambda^{1/2}U'(\boldsymbol{\beta}-\boldsymbol{\beta}^{(0)})-\Lambda^{-1/2} U' \left(\frac{\partial L}{\partial \boldsymbol{\beta}}(\boldsymbol{\beta}^{(0)},\widehat{\boldsymbol{\gamma}})\right)'
\right\|_2^2\\
&=&\frac12 \left\|\Lambda^{1/2}U'(\boldsymbol{\beta}^{(0)}-\boldsymbol{\beta})+\Lambda^{-1/2} U' \left(\frac{\partial L}{\partial \boldsymbol{\beta}}(\boldsymbol{\beta}^{(0)},\widehat{\boldsymbol{\gamma}})\right)'\right\|_2^2\\
&=&\frac12\|\mathcal{Y}-\mathcal{X}\boldsymbol{\beta}\|_2^2,
\end{eqnarray*}
where
\begin{equation*}
\mathcal{Y}=\Lambda^{1/2}U'\boldsymbol{\beta}^{(0)}
+\Lambda^{-1/2}U'\left(\frac{\partial L}{\partial \boldsymbol{\beta}}(\boldsymbol{\beta}^{(0)},\widehat{\boldsymbol{\gamma}})\right)' ,\;  \mathcal{X}=\Lambda^{1/2}U'.
\end{equation*}

\subsection{Proofs of Propositions \ref{prop1}, \ref{prop2} and \ref{prop3} and of Lemma \ref{lem:aperiodic_doeblin}}

This section contains the proofs of Propositions \ref{prop1}, \ref{prop2} and \ref{prop3}.

\subsubsection{\textcolor{black}{Proof of Proposition \ref{prop1}}}

\textcolor{black}{We first establish the following lemma for proving Proposition \ref{prop1}.}

\begin{lemma}\label{lem:aperiodic_doeblin}
$(W_t^\star)$ is an aperiodic Markov process satisfying Doeblin's condition.
\end{lemma}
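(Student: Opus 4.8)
The plan is to analyze the dynamics of the process $(W_t^\star)$ directly from its defining recursion. Recall that with $q=1$ and no covariates, Equations \eqref{eq:mut_simple} and \eqref{eq:Zt} give $W_t^\star = \beta_0^\star + \gamma_1^\star E_{t-1}^\star$ with $E_{t-1}^\star = Y_{t-1}\exp(-W_{t-1}^\star) - 1$. Since $Y_{t-1}\mid \mathcal{F}_{t-2} \sim \mathcal{P}(\exp(W_{t-1}^\star))$, the conditional law of $W_t^\star$ given $\mathcal{F}_{t-1}$ depends only on $W_{t-1}^\star$; more precisely, given $W_{t-1}^\star = w$, we have $W_t^\star = \beta_0^\star + \gamma_1^\star(N\exp(-w) - 1)$ where $N \sim \mathcal{P}(\exp(w))$. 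This exhibits $(W_t^\star)$ as a time-homogeneous Markov chain on $\mathbb{R}$ (indeed on a countable union of affine lattices, but valued in $\mathbb{R}$), with an explicit transition kernel $P(w,\cdot)$ supported on the countable set $\{\beta_0^\star - \gamma_1^\star + \gamma_1^\star k e^{-w} : k \in \mathbb{N}_0\}$ with weights given by Poisson probabilities. I would state this Markov structure first as the backbone of the proof.

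For \textbf{aperiodicity}: from any state $w$, the transition kernel assigns positive probability to the event $N = 0$, which sends $w \mapsto \beta_0^\star - \gamma_1^\star =: w_0$. Hence $P(w, \{w_0\}) = e^{-\exp(w)} > 0$ for every $w$, and in particular $P(w_0, \{w_0\}) > 0$, so the state $w_0$ has a self-loop and the chain is aperiodic (there is no need to discuss periodicity relative to a cycle structure — a positive-probability self-loop at an accessible atom suffices). For \textbf{Doeblin's condition}, I would exhibit a small set absorbing enough mass: take the singleton $C = \{w_0\}$, a probability measure $\varphi = \delta_{w_0}$, and show that there exist $\varepsilon > 0$ and an integer $m$ (here $m=1$ works) such that $P^m(w, \cdot) \geq \varepsilon\, \varphi(\cdot)$ uniformly in $w$. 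The obstruction is that $e^{-\exp(w)} \to 0$ as $w \to +\infty$, so the bound $P(w,\{w_0\}) \geq \varepsilon$ fails on all of $\mathbb{R}$. The key observation is that after one step the chain cannot be too large: whatever $w$ is, $W_{t}^\star = \beta_0^\star + \gamma_1^\star(N e^{-w} - 1)$, and since $N e^{-w}$ has conditional mean $1$, one step pushes the chain into a region where $W_t^\star$ is bounded above with high probability — e.g.\ by Markov's inequality $\mathbb{P}(N e^{-w} > K \mid W_{t-1}^\star = w) \le 1/K$, so $\mathbb{P}(W_t^\star \le \beta_0^\star + \gamma_1^\star(K-1) \mid W_{t-1}^\star = w) \ge 1 - 1/K$ when $\gamma_1^\star > 0$ (and a symmetric statement when $\gamma_1^\star < 0$, using that $N e^{-w} \ge 0$). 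Therefore on the set $\{W_{t-1}^\star \le M\}$ for a suitable threshold $M$ (chosen via this bound so that it is reached with probability bounded below from any starting point in one step), we have $e^{-\exp(w)} \ge e^{-\exp(M)} =: \varepsilon_0 > 0$. Combining: $P^2(w, \{w_0\}) \ge (1 - 1/K)\,\varepsilon_0 =: \varepsilon > 0$ uniformly in $w$, which is Doeblin's condition with $m = 2$ and small measure $\varepsilon\,\delta_{w_0}$.

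So the structure of the argument I would write is: (i) derive the Markov transition kernel explicitly from \eqref{eq:Yt}, \eqref{eq:mut_simple}, \eqref{eq:Zt}; (ii) observe the atom $w_0 = \beta_0^\star - \gamma_1^\star$ carries positive mass from every state, giving a self-loop hence aperiodicity; (iii) show a uniform-in-starting-point lower bound on the probability of landing, after one step, in a region where $W^\star$ is bounded (up to sign of $\gamma_1^\star$), using that $\mathbb{E}[Y_{t-1}\exp(-W_{t-1}^\star)\mid \mathcal{F}_{t-2}] = 1$ together with Markov's inequality; (iv) conclude $P^2(w,\{w_0\}) \ge \varepsilon$ uniformly, which is Doeblin's condition. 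The main obstacle is precisely step (iii): the one-step transition density degenerates as $w \to +\infty$, so a naive minorization on all of $\mathbb{R}$ fails, and one must instead use the mean-reverting behaviour of the working residuals $E_t^\star$ — the fact that $Ne^{-w}$ has mean $1$ — to show the chain self-corrects in one step with uniformly bounded-below probability. Once aperiodicity and Doeblin's condition are in hand, standard Markov chain theory yields the existence of a unique stationary distribution and geometric ergodicity, which is what is invoked in the statement of Theorem \ref{theo:MA1} and in Proposition \ref{prop1}.
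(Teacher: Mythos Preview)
Your proposal is correct. The Markov property and aperiodicity arguments match the paper's exactly: both derive the recursion $W_t^\star = (\beta_0^\star - \gamma_1^\star) + \gamma_1^\star Y_{t-1}\exp(-W_{t-1}^\star)$ and identify the atom $w_0 = \beta_0^\star - \gamma_1^\star$ with its self-loop via the event $\{Y_{t-1}=0\}$. For Doeblin's condition the paper gives no argument of its own and simply refers the reader to the proof of Proposition~2 in \cite{davis:dunsmuir:streett:2003}; your two-step minorization --- using Markov's inequality on $Y_{t-1}e^{-W_{t-1}^\star}$ (conditional mean $1$) to land in a bounded-above region with uniformly positive probability in one step, then hitting $w_0$ in the next --- is a clean self-contained substitute, and correctly handles both signs of $\gamma_1^\star$.
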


\begin{proof}[Proof of Lemma \ref{lem:aperiodic_doeblin}]
By (\ref{eq:mut_simple}) and (\ref{eq:Zt}), we observe that:
\begin{equation}\label{eq:Wtstar}
W_t^\star=(\beta_0^\star-\gamma_1^\star)+\gamma_1^\star Y_{t-1}\exp(-W_{t-1}^\star).
\end{equation}
Thus, $\mathcal{F}_{t-2}=\mathcal{F}_{t-1}^{W^\star}:=\sigma(W_s,s\leq t-1)$.
By (\ref{eq:Yt}), the distribution of $Y_{t-1}$ conditionally to $\mathcal{F}_{t-2}$ is $\mathcal{P}(\exp(W_{t-1}^\star))$.
Hence, the distribution of $W_t^\star$ conditionally to $\mathcal{F}_{t-1}^{W^\star}$
is the same as distribution of $W_t^\star$ conditionally to $W_{t-1}^\star$, which means that $(W_t^\star)$ has the Markov property.

Let us now prove that $(W_t^\star)$ is strongly aperiodic which implies that it is aperiodic.
$$
\mathbb{P}(W^\star_{t}=\beta_0^\star-\gamma_1^\star | W^\star_{t-1}=\beta_0^\star-\gamma_1^\star)
=\mathbb{P}(Y_{t-1}=0 | W^\star_{t-1}=\beta_0^\star-\gamma_1^\star)=\exp(-\exp(\beta_0^\star-\gamma_1^\star))>0,
$$
where the first equality comes from (\ref{eq:Wtstar}) and the last equality comes from (\ref{eq:Yt}) since $\mathcal{F}_{t-2}=\mathcal{F}_{t-1}^{W^\star}$.

To prove that $(W_t^\star)$ satisfies Doeblin's condition namely that there exists a probability measure $\nu$ with the property that, for some $m\geq 1$, $\varepsilon>0$
and $\delta >0$,
\begin{equation}\label{eq:doeblin}
\nu(B)>\varepsilon \Longrightarrow \mathbb{P}(W_{t+m-1}\in B,W_{t+m-2}\in B\dots,W_{t+1}\in B,W_{t}\in B|W_{t-1}=x)\geq\delta,
\end{equation}
for all $x$ in the state space $X$ of $W_t^\star$ and $B$ in the Borel sets of $X$, we refer the reader to the proof of Proposition 2 in \cite{davis:dunsmuir:streett:2003}.

\end{proof}

\begin{proof}[Proof of Proposition \ref{prop1}]
For proving Proposition \ref{prop1}, we shall use Theorems 1.3.3 and 1.3.5 of
\cite{taniguchibook:2012}. In order to apply these theorems it is enough to prove that $(W_t^\star)$ is a strictly stationary and ergodic process
since $Y_t W_t(\beta_0^\star,\gamma_1)-\exp(W_t(\beta_0^\star,\gamma_1))$ is a measurable function of $W_{t+1}^\star,W_t^\star,\dots,W_2^\star$. 
Note that the latter fact comes from (\ref{eq:mut_simple}) and (\ref{eq:Zt}) for $Y_t$ and from (\ref{eq:Wt}) with $q=1$ and $p=0$ for $W_t$.

In order to prove that $(W_t^\star)$ is a strictly stationary and ergodic process, we have first to prove that $(W_t^\star)$ is an aperiodic Markov process satisfying Doeblin's condition, 
see Lemma \ref{lem:aperiodic_doeblin}.
cv
The statement of Lemma \ref{lem:aperiodic_doeblin} corresponds to Assertion (iv) of Theorem 16.0.2 of \cite{meyn:tweedie}  which is equivalent to Assertion (i) of this theorem, 
and implies that $(W_t^\star)$ is uniformly ergodic.

Hence, by Definition (16.6) of uniform ergodicity given in \cite{meyn:tweedie}, there exists a unique stationary invariant measure for $(W_t^\star)$, see also the paragraph below Equation (1.3) 
of \cite{Sandric:2017} for an additional justification. Combining that existence of a unique stationary invariant measure for $(W_t^\star)$ with the following arguments shows that $(W_t^\star)$ 
is a strictly stationary process and also an ergodic Markov process.

By Theorem 3.6.3, Corollary 3.6.1 and Definition 3.6.6 of \cite{stout:1974}, if the process $(W_t^\star)$ is started with its unique stationary invariant distribution, $(W_t^\star)$ is a strictly stationary process.

By Definition 3.6.8 of  \cite{stout:1974}, the existence of a unique stationary invariant measure for $(W_t^\star)$ 
means that $(W_t^\star)$ is an ergodic Markov process, see also the paragraph below (b) \cite[p. 717]{Sandric:2017}.

Finally, by Theorem 3.6.5 of  \cite{stout:1974}, since $(W_t^\star)$ is an ergodic Markov process and a strictly stationary process, 
$(W_t^\star)$ is an ergodic and strictly stationary process in the sense of the assumption of Theorem 1.3.5 of \cite{taniguchibook:2012}.
\end{proof}

\subsubsection{\textcolor{black}{Proof of Proposition \ref{prop2}}}

Note that for all $\gamma_1$,
\begin{align*}
\mathcal{L}(\gamma_1)&=\mathbb{E}\left[Y_3 W_3(\beta_0^\star,\gamma_1)-\exp(W_3(\beta_0^\star,\gamma_1))\right]
=\mathbb{E}\left[\mathbb{E}\left[Y_3 W_3(\beta_0^\star,\gamma_1)-\exp(W_3(\beta_0^\star,\gamma_1))|\mathcal{F}_2\right]\right]\\
&=\mathbb{E}\left[\exp(W_3^\star) W_3(\beta_0^\star,\gamma_1)-\exp(W_3(\beta_0^\star,\gamma_1))\right]\\
&=\mathbb{E}\left[\exp(W_3^\star) \left(W_3(\beta_0^\star,\gamma_1)-W_3^\star+W_3^\star-\exp(W_3(\beta_0^\star,\gamma_1)-W_3^\star)\right)\right]\\
&\leq \mathbb{E}\left[\exp(W_3^\star) \left(W_3^\star-1\right)\right]=\mathcal{L}(\gamma_1^\star),
\end{align*}
where the inequality comes from the following inequality $x-\exp(x)\leq -1$, for all $x\in\mathbb{R}$. This inequality is an equality only when 
$x=0$ which means that $\gamma_1=\gamma_1^\star$.

\subsubsection{\textcolor{black}{Proof of Proposition \ref{prop3}}}

The proof of this proposition comes from Proposition \ref{prop1} and the stochastic equicontinuity of $n^{-1}L(\beta_0^\star,\gamma_1)$. Thus, it is enough to prove that
there exists a positive $\delta$ such that
$$
\sup_{|\gamma_1-\gamma_2|\leq\delta}\left|\frac{L(\beta_0^\star,\gamma_1)}{n}-\frac{L(\beta_0^\star,\gamma_2)}{n}\right|\stackrel{p}{\longrightarrow}0, \textrm{ as $n$ tecvnds to infinity.}
$$
Observe that, by (\ref{eq:L:beta_0}),
\begin{align*}
\left|\frac{L(\beta_0^\star,\gamma_1)}{n}-\frac{L(\beta_0^\star,\gamma_2)}{n}\right|
&\leq\frac1n\sum_{t=1}^n Y_t\left|W_t(\beta_0^\star,\gamma_1)-W_t(\beta_0^\star,\gamma_2)\right|\\
&+\frac1n\sum_{t=1}^n\left|\exp\left(W_t(\beta_0^\star,\gamma_1)\right)-\exp\left(W_t(\beta_0^\star,\gamma_2)\right)\right|.
\end{align*}
Let us first focus on bounding the following expression for $t\geq 2$ (since $W_1(\beta_0^\star,\gamma)=\beta_0^\star$, for all $\gamma$). By (\ref{eq:W_Z})
\begin{align*}
&\left|W_t(\beta_0^\star,\gamma_1)-W_t(\beta_0^\star,\gamma_2)\right|=\left|Z_t(\gamma_1)-Z_t(\gamma_2)\right|=\left|\gamma_1 E_{t-1}(\gamma_1)-\gamma_2 E_{t-1}(\gamma_2)\right|\\
&=\left|\gamma_1 \left[Y_{t-1}\exp(-W_{t-1}(\beta_0^\star,\gamma_1))-1\right]-\gamma_2 \left[Y_{t-1}\exp(-W_{t-1}(\beta_0^\star,\gamma_2))-1\right]\right|\\
&=\left|Y_{t-1}\textrm{e}^{-\beta_0^\star}\left[\gamma_1\exp(-Z_{t-1}(\gamma_1))-\gamma_2\exp(-Z_{t-1}(\gamma_2))\right]+\gamma_2-\gamma_1\right|\\
&\leq Y_{t-1}\textrm{e}^{-\beta_0^\star}\left[\left|\gamma_1-\gamma_2\right|\exp(-Z_{t-1}(\gamma_1))+|\gamma_2|\left|\exp(-Z_{t-1}(\gamma_1))-\exp(-Z_{t-1}(\gamma_2))\right|\right]
+\left|\gamma_2-\gamma_1\right|\\
&\leq Y_{t-1}\textrm{e}^{-\beta_0^\star}\left|\gamma_1-\gamma_2\right|\exp(-Z_{t-1}(\gamma_1))\\
&+Y_{t-1}\textrm{e}^{-\beta_0^\star}|\gamma_2|\exp(-Z_{t-1}(\gamma_1))\left|Z_{t-1}(\gamma_1)-Z_{t-1}(\gamma_2)\right|
\exp(|Z_{t-1}(\gamma_1)-Z_{t-1}(\gamma_2)|)\\
&+\left|\gamma_2-\gamma_1\right|,
\end{align*}
where we used in the last inequality that for all $x$ and $y$ in $\mathbb{R}$,
\begin{equation}\label{eq:exp_x_y}
|\textrm{e}^x-\textrm{e}^y|=\textrm{e}^x|1-\textrm{e}^{y-x}|\leq \textrm{e}^x |y-x|\textrm{e}^{|y-x|}.
\end{equation}
Observing that
\begin{equation}\label{eq:exp_Zt}
\exp(-Z_{t}(\gamma_1))=\exp\left(-\gamma_1\left[Y_{t-1}\textrm{e}^{-\beta_0^\star}\exp(-Z_{t-1}(\gamma_1))-1\right]\right),
\end{equation}
and $|Z_{2}(\gamma_1)-Z_{2}(\gamma_2)|\leq\delta[Y_1\textrm{e}^{-\beta_0^\star}+1]$ we get, for $\gamma_1$ and $\gamma_2$ such that $|\gamma_1-\gamma_2|\leq\delta$, that
\begin{equation}\label{eq:diff_W}
\left|W_t(\beta_0^\star,\gamma_1)-W_t(\beta_0^\star,\gamma_2)\right|
\leq\delta\; F(Y_{t-1},Y_{t-2},\dots,Y_1),
\end{equation}
where $F$ is a measurable function.
By (\ref{eq:exp_x_y}),
\begin{align*}
&\left|\exp\left(W_t(\beta_0^\star,\gamma_1)\right)-\exp\left(W_t(\beta_0^\star,\gamma_2)\right)\right|\\
&\leq \exp\left(W_t(\beta_0^\star,\gamma_1)\right)\left|W_t(\beta_0^\star,\gamma_1)-W_t(\beta_0^\star,\gamma_2)\right|\exp\left(\left|W_t(\beta_0^\star,\gamma_1)-W_t(\beta_0^\star,\gamma_2)\right|\right)\\
&\leq \delta G(Y_{t-1},Y_{t-2},\dots,Y_1)
\end{align*}
where the last inequality comes from (\ref{eq:diff_W}), (\ref{eq:exp_Zt}) and (\ref{eq:W_Z}) and where $G$ is a measurable function.
Thus, we get that 
$$
\left|\frac{L(\beta_0^\star,\gamma_1)}{n}-\frac{L(\beta_0^\star,\gamma_2)}{n}\right|\leq\frac{\delta}{n}\sum_{t=1}^n H(Y_t,Y_{t-1},\dots,Y_1),
$$
which gives the result by using similar arguments as those given in the proof of Proposition \ref{prop1} namely that $(Y_t)$ is strictly stationary and ergodic.
By Theorem 1.3.3 of \cite{taniguchibook:2012}, $H(Y_t,Y_{t-1},\dots,Y_1)$ is strictly stationary and ergodic since $(Y_t)$ has these properties. Thus, $\mathbb{E}[|H(Y_t,Y_{t-1},\dots,Y_1)|]<\infty$, which
concludes the proof by Theorem 1.3.5 of \cite{taniguchibook:2012}.



\clearpage

\section*{Appendix}

This appendix contains additional results for the support recovery of $\boldsymbol{\beta}^\star$ and for the estimation of $\boldsymbol{\gamma}^\star$ discussed in
Section \ref{sec:sparse_estim}.


\begin{figure}[!h]
  \includegraphics[scale=0.28]{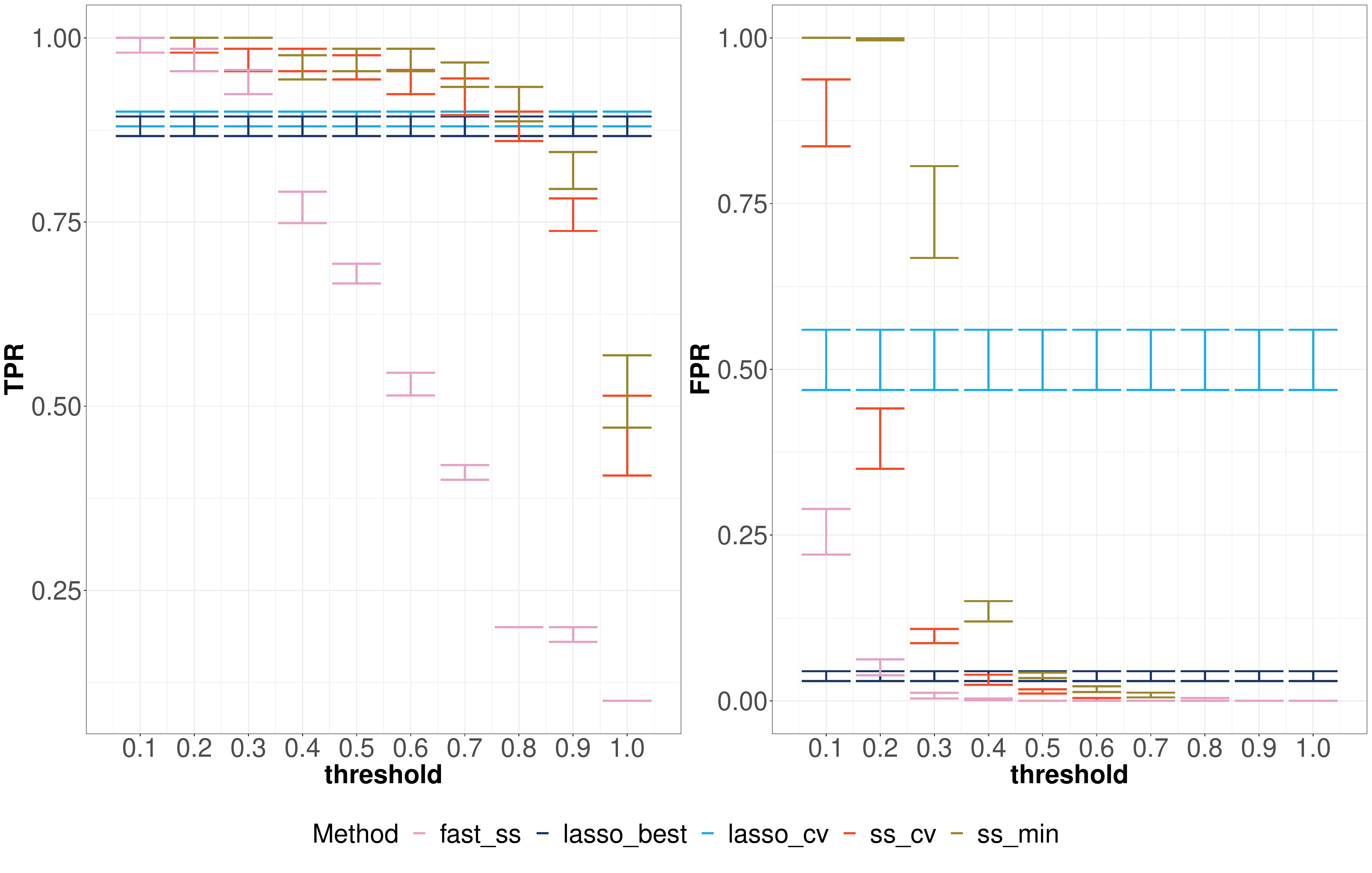}
  \caption{Error bars of the TPR and FPR associated to the support recovery of $\boldsymbol{\beta}^\star$ for five methods with respect to the thresholds when $n=1000$, $q=1$, $p=100$ and a 10\% sparsity level. All the $\beta_i^\star=0$ except for ten of them: $\beta_1^\star=1.73$, $\beta_3^\star=1.2$, $\beta_5^\star=0.67$, $\beta_{10}^\star=0.5$, $\beta_{14}^\star=-0.38$, $\beta_{17}^\star=0.29$,
$\beta_{30}^\star=-0.64$, $\beta_{33}^\star=-0.13$, $\beta_{38}^\star=-0.1$ and $\beta_{44}^\star=-0.07$. \label{fig:TPR:FPR:1:10}}
\end{figure}

\begin{figure}[!h]
  \includegraphics[scale=0.28]{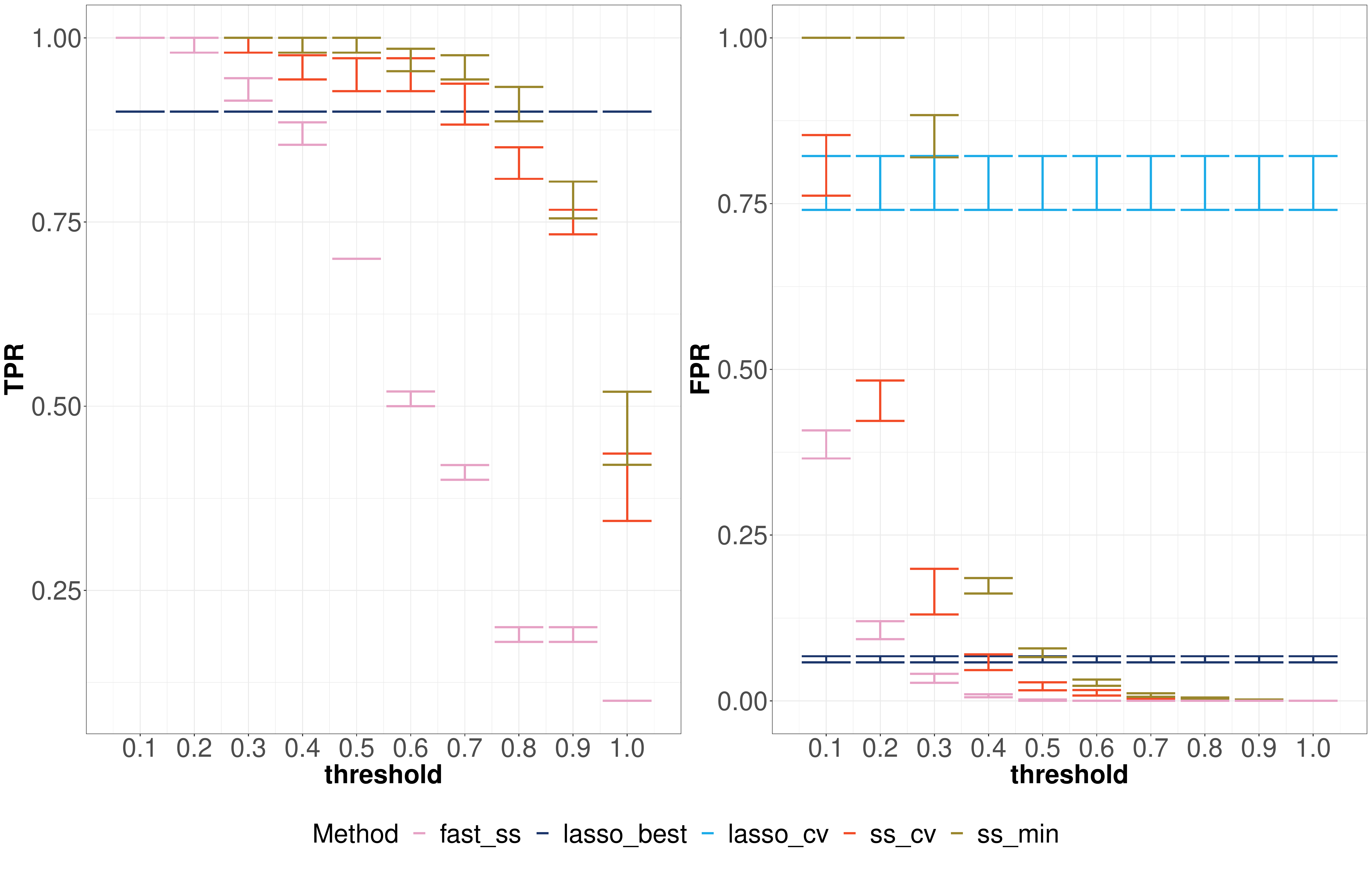}
  \caption{Error bars of the TPR and FPR associated to the support recovery of $\boldsymbol{\beta}^\star$ for five methods with respect to the thresholds when $n=1000$, $q=2$, $p=100$ and a 10\% sparsity level. All the $\beta_i^\star=0$ except for ten of them: $\beta_1^\star=1.73$, $\beta_3^\star=1.2$, $\beta_5^\star=0.67$, $\beta_{10}^\star=0.5$, $\beta_{14}^\star=-0.38$, $\beta_{17}^\star=0.29$,
$\beta_{30}^\star=-0.64$, $\beta_{33}^\star=-0.13$, $\beta_{38}^\star=-0.1$ and $\beta_{44}^\star=-0.07$.\label{fig:TPR:FPR:2:10}}
\end{figure}

\begin{figure}[!h]
 \includegraphics[scale=0.28]{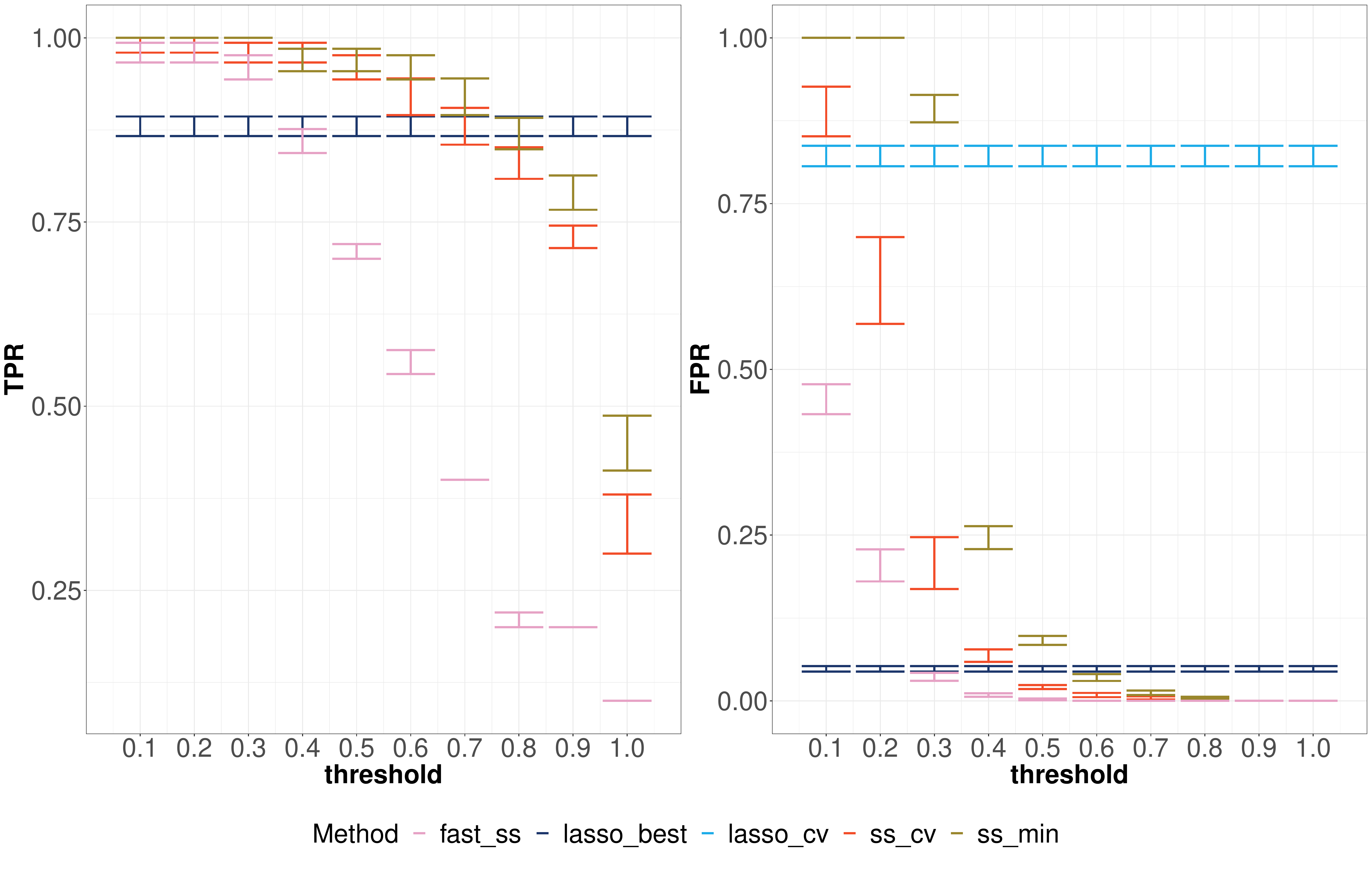}
  \caption{Error bars of the TPR and FPR giving the corresponding final sets of selected variables for five methods with respect to the thresholds when $n=1000$, $q=3$, $p=100$ and a 10\% sparsity level. All the $\beta_i^\star=0$ except for ten of them: $\beta_1^\star=1.73$, $\beta_3^\star=1.2$, $\beta_5^\star=0.67$, $\beta_{10}^\star=0.5$, $\beta_{14}^\star=-0.38$, $\beta_{17}^\star=0.29$,
$\beta_{30}^\star=-0.64$, $\beta_{33}^\star=-0.13$, $\beta_{38}^\star=-0.1$ and $\beta_{44}^\star=-0.07$.\label{fig:TPR:FPR:3:10}}
\end{figure}


\begin{figure}[!h]
  \begin{center}
\begin{tabular}{ccc}
  \includegraphics[width=0.32\textwidth, height=4.5cm]{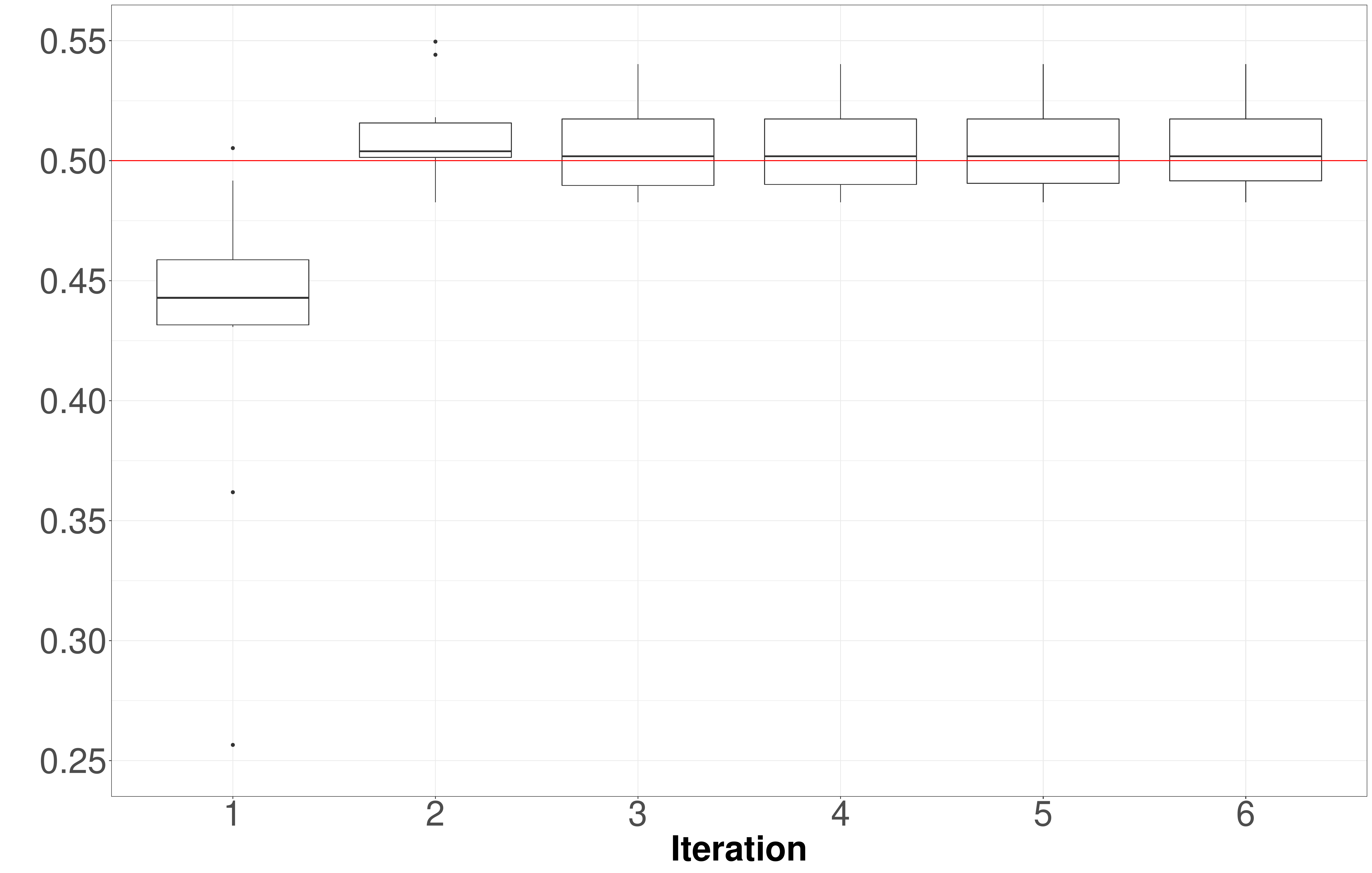}
  &  \includegraphics[width=0.32\textwidth, height=4.5cm]{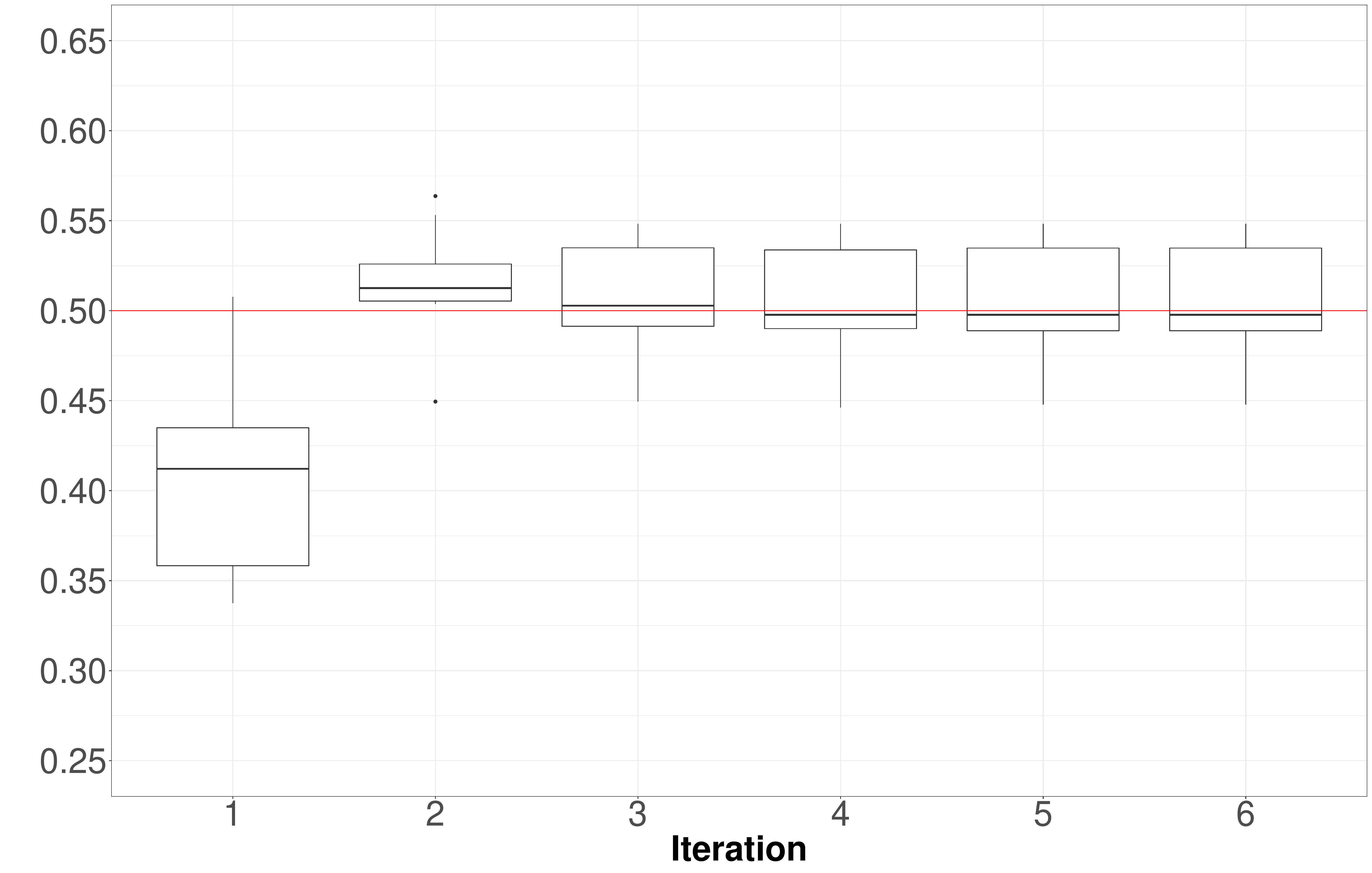} 
  & \includegraphics[width=0.32\textwidth, height=4.5cm]{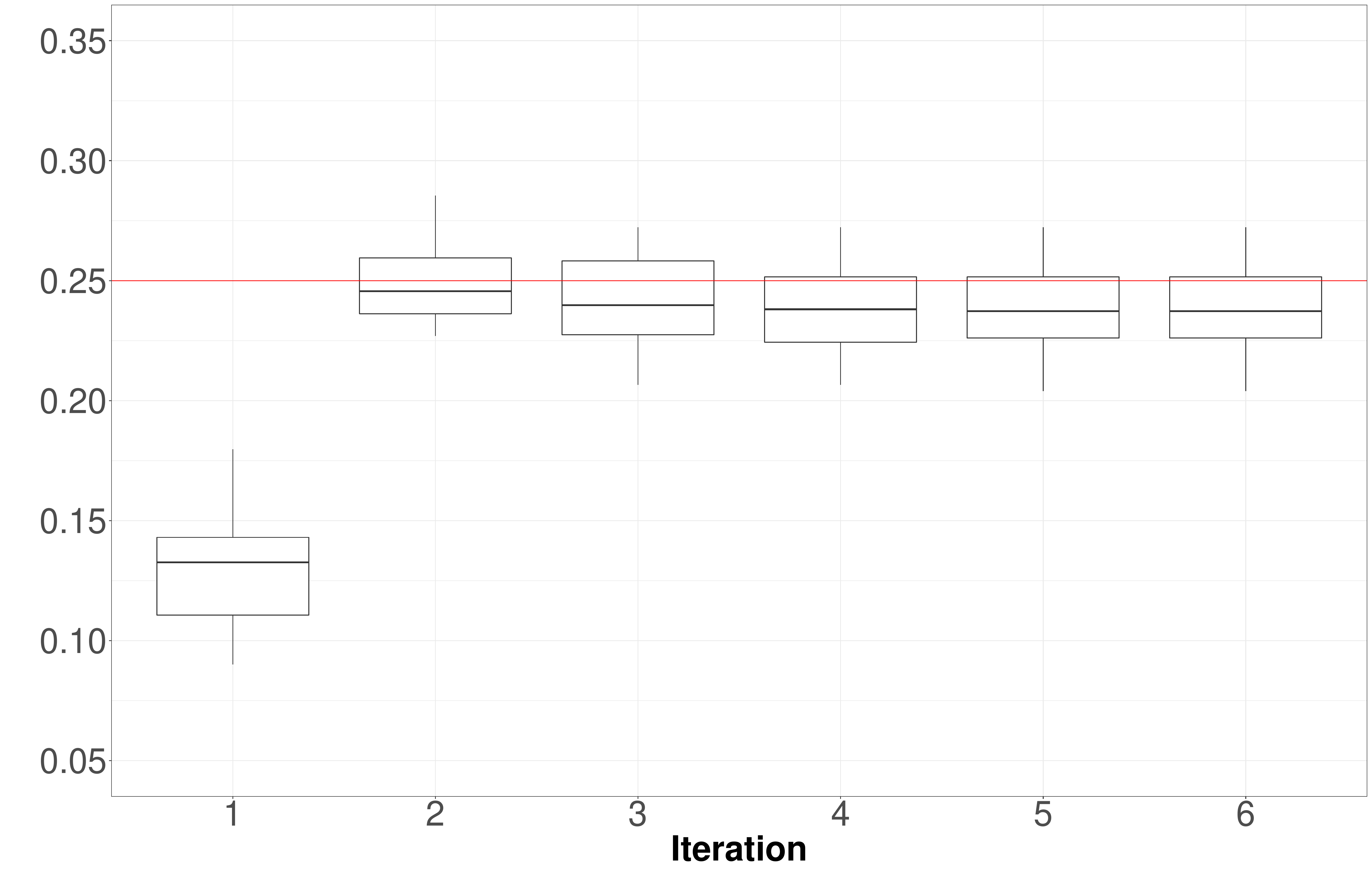}\\
\includegraphics[width=0.32\textwidth, height=4.5cm]{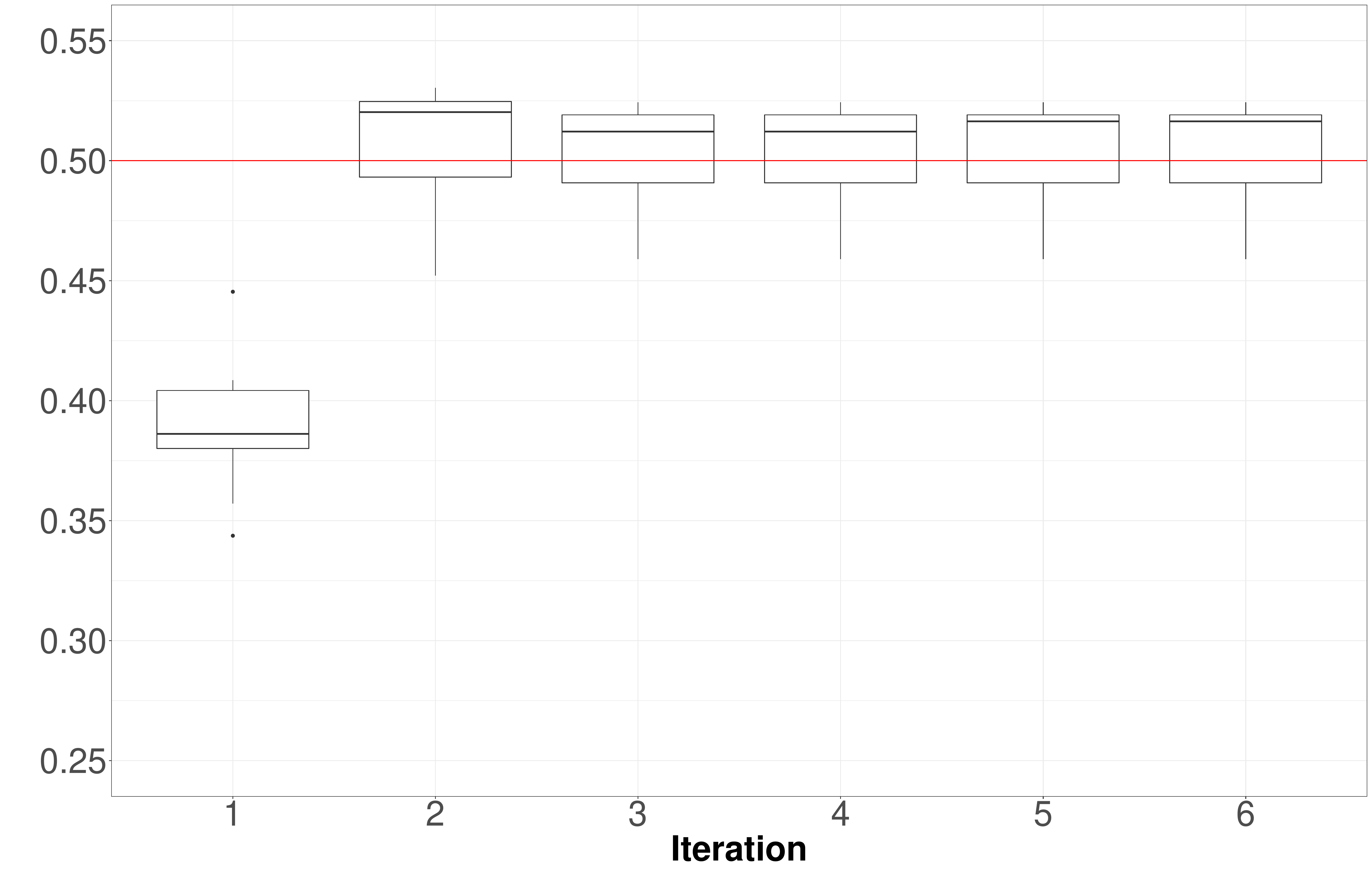}
  &  \includegraphics[width=0.32\textwidth, height=4.5cm]{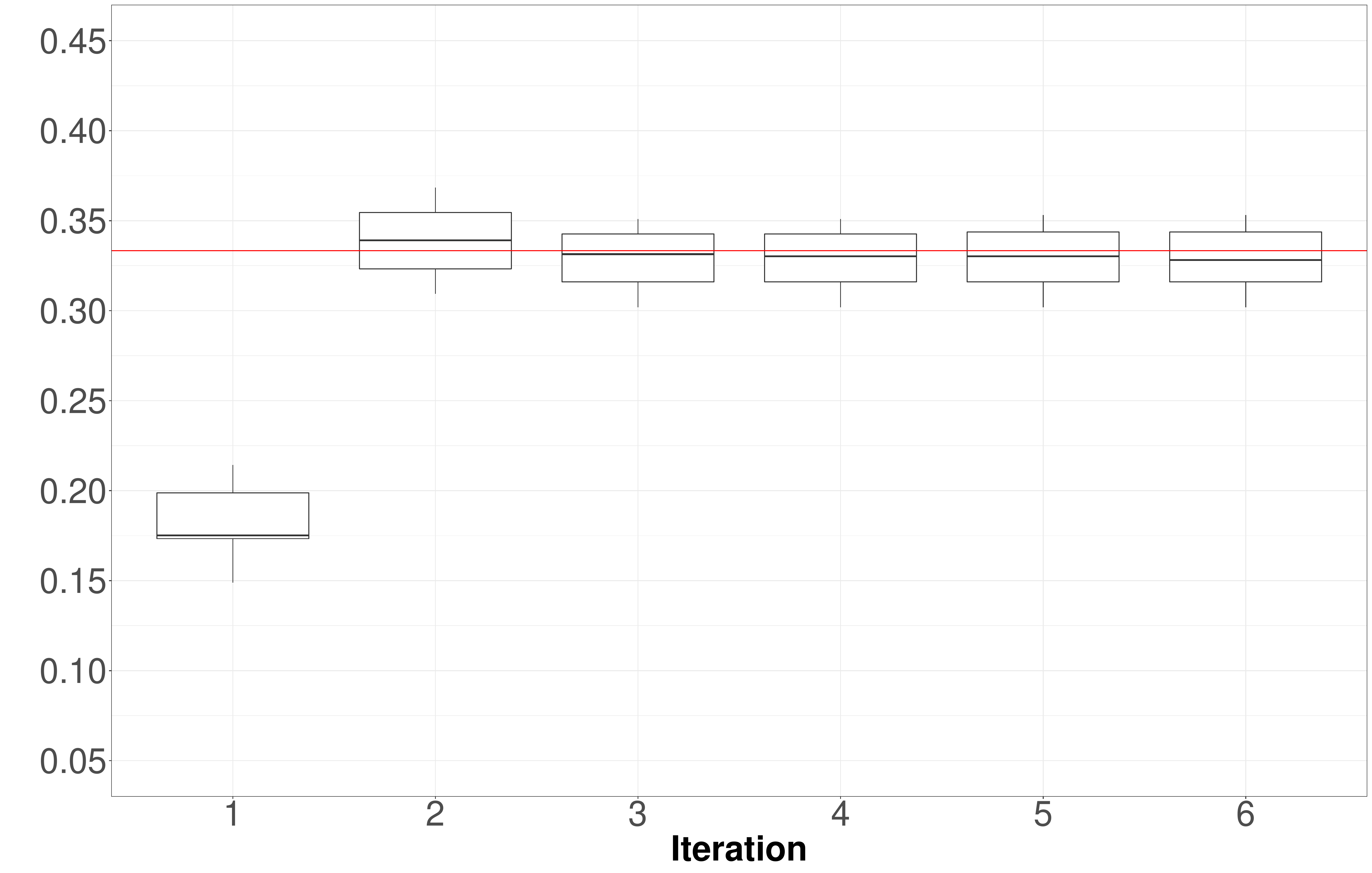} 
  & \includegraphics[width=0.32\textwidth, height=4.5cm]{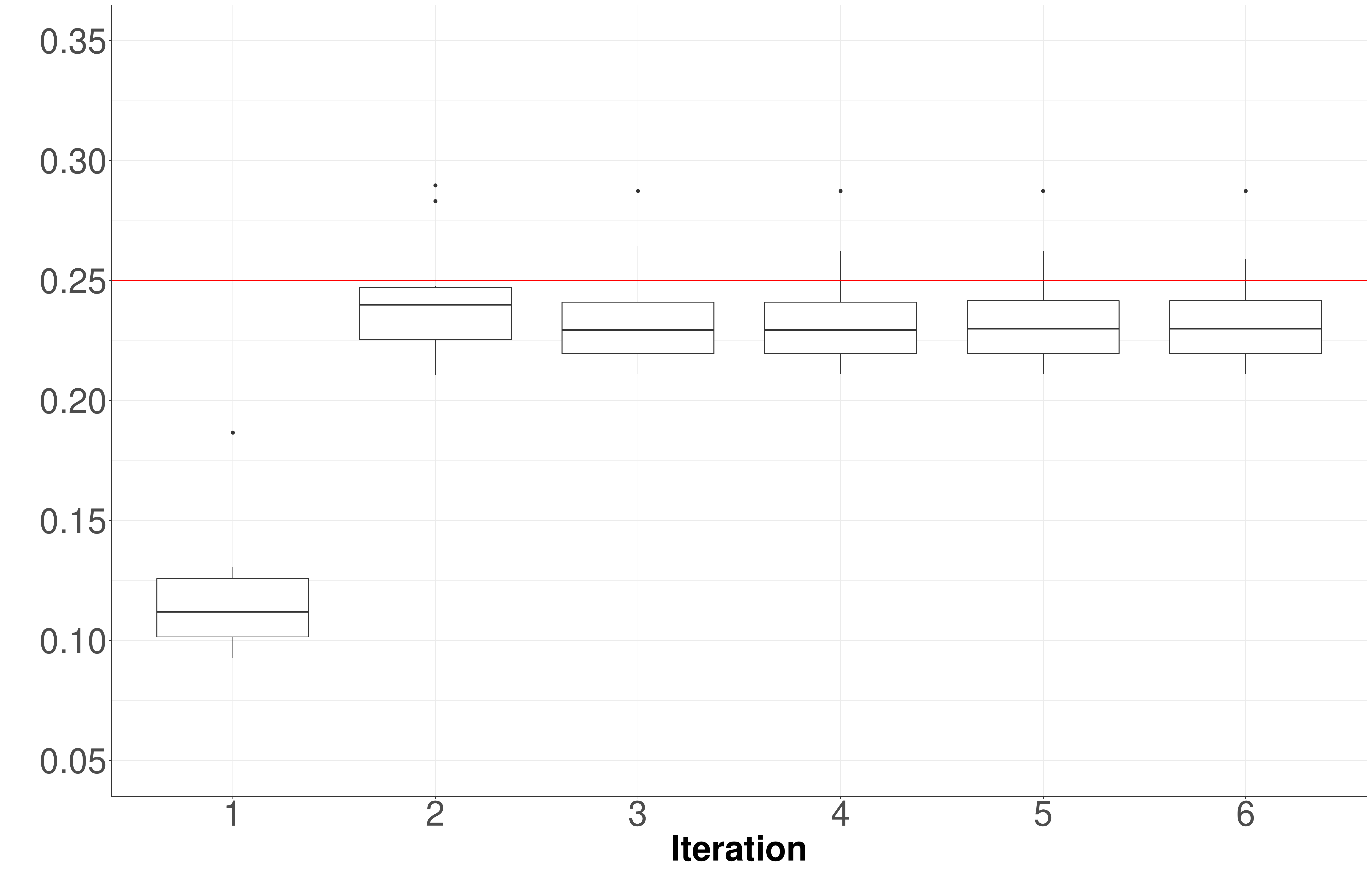}\\
\end{tabular}  
\caption{Boxplots for the estimations of $\boldsymbol{\gamma}^\star$ in Model (\ref{eq:mut_Wt}) with a 10\% sparsity level and $q=1,2,3$ obtained by
  \texttt{ss\_cv}.
Top: $q=1$ and $\gamma_1^\star=0.5$ (left), $q=2$ and $\gamma_1^\star=0.5$ (middle), $q=2$ and $\gamma_2^\star=0.25$ (right). Bottom: $q=3$ and $\gamma_1^\star=0.5$ (left), $q=3$ and  $\gamma_2^\star=1/3$ (middle), $q=3$ and $\gamma_3^\star=0.25$ (right).
The horizontal lines correspond to the values of the $\gamma_i^\star$'s. \label{fig:gamma:10:cv}}
 \end{center}
\end{figure}

\begin{figure}[!h]
  \begin{center}
\begin{tabular}{ccc}
  \includegraphics[width=0.32\textwidth, height=4.5cm]{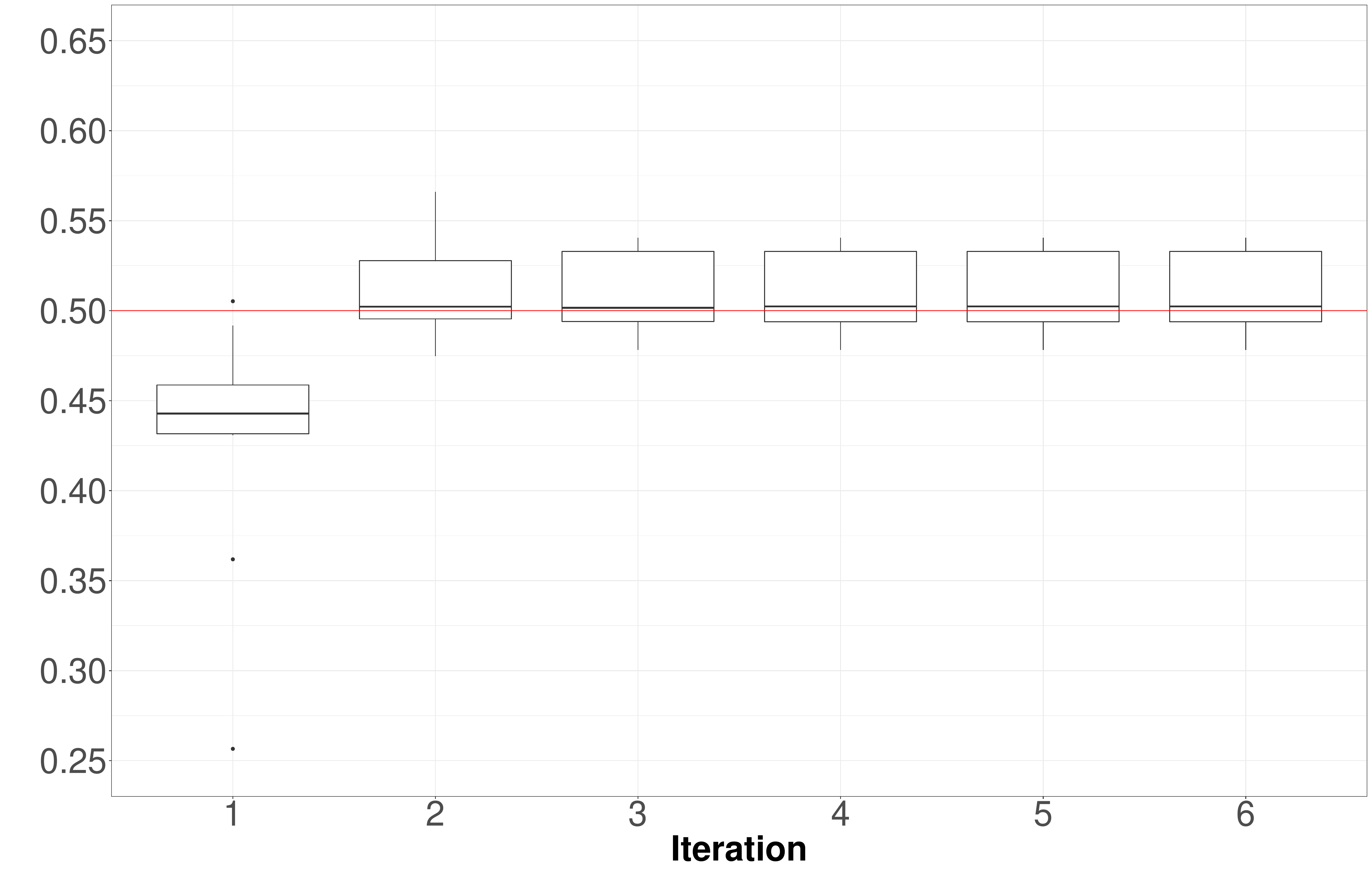}
  &  \includegraphics[width=0.32\textwidth, height=4.5cm]{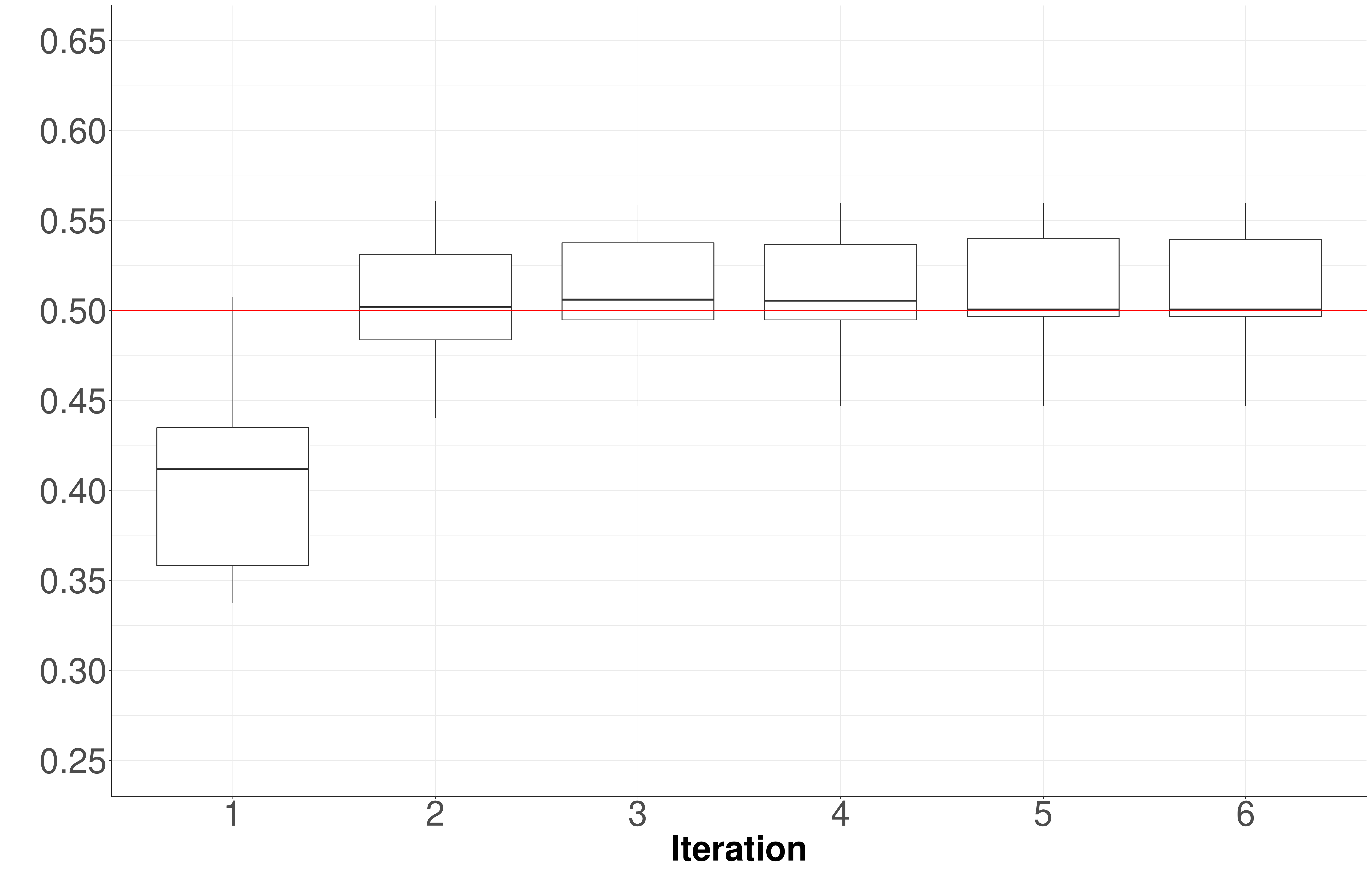} 
  & \includegraphics[width=0.32\textwidth, height=4.5cm]{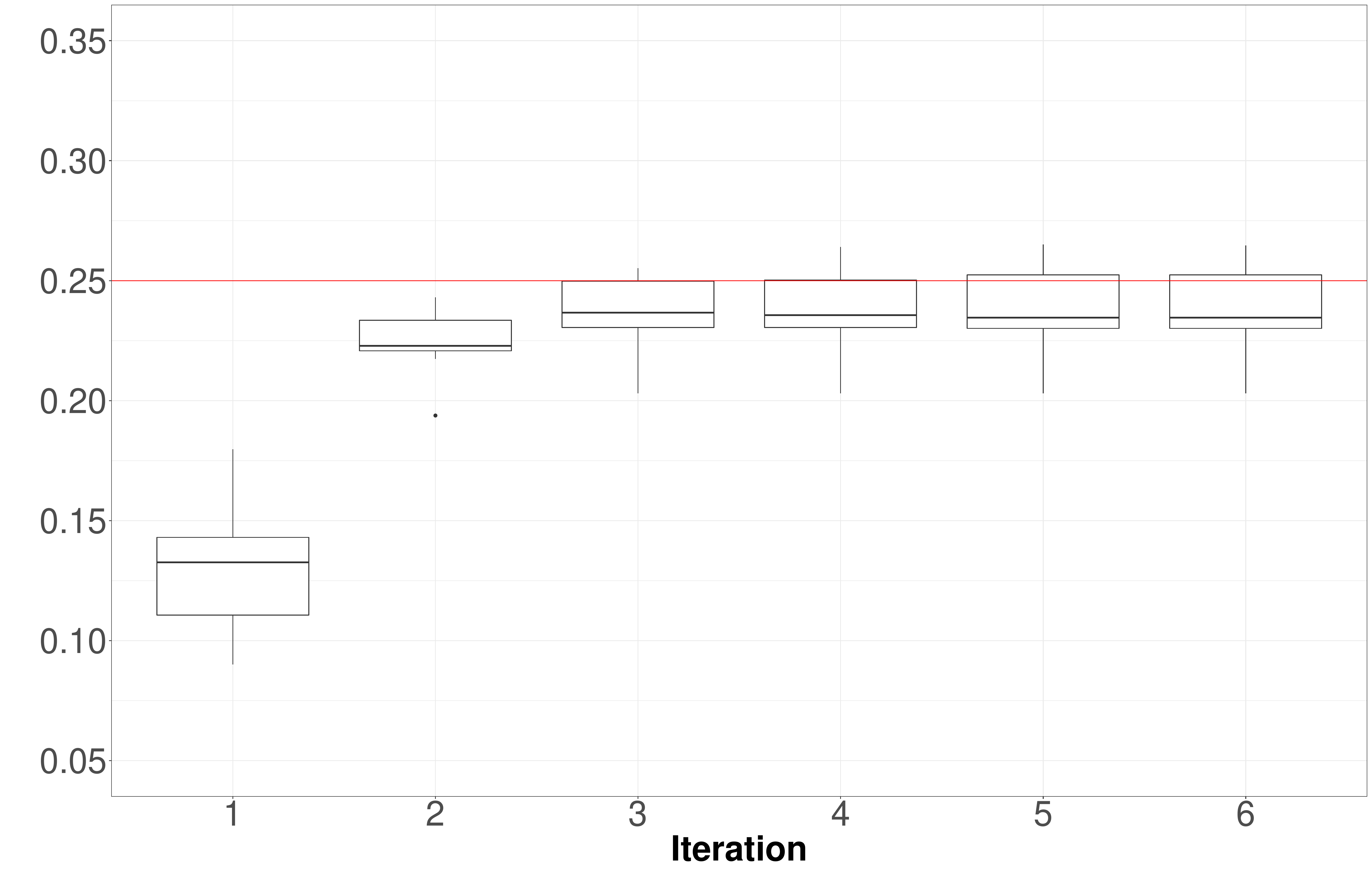}\\
\includegraphics[width=0.32\textwidth, height=4.5cm]{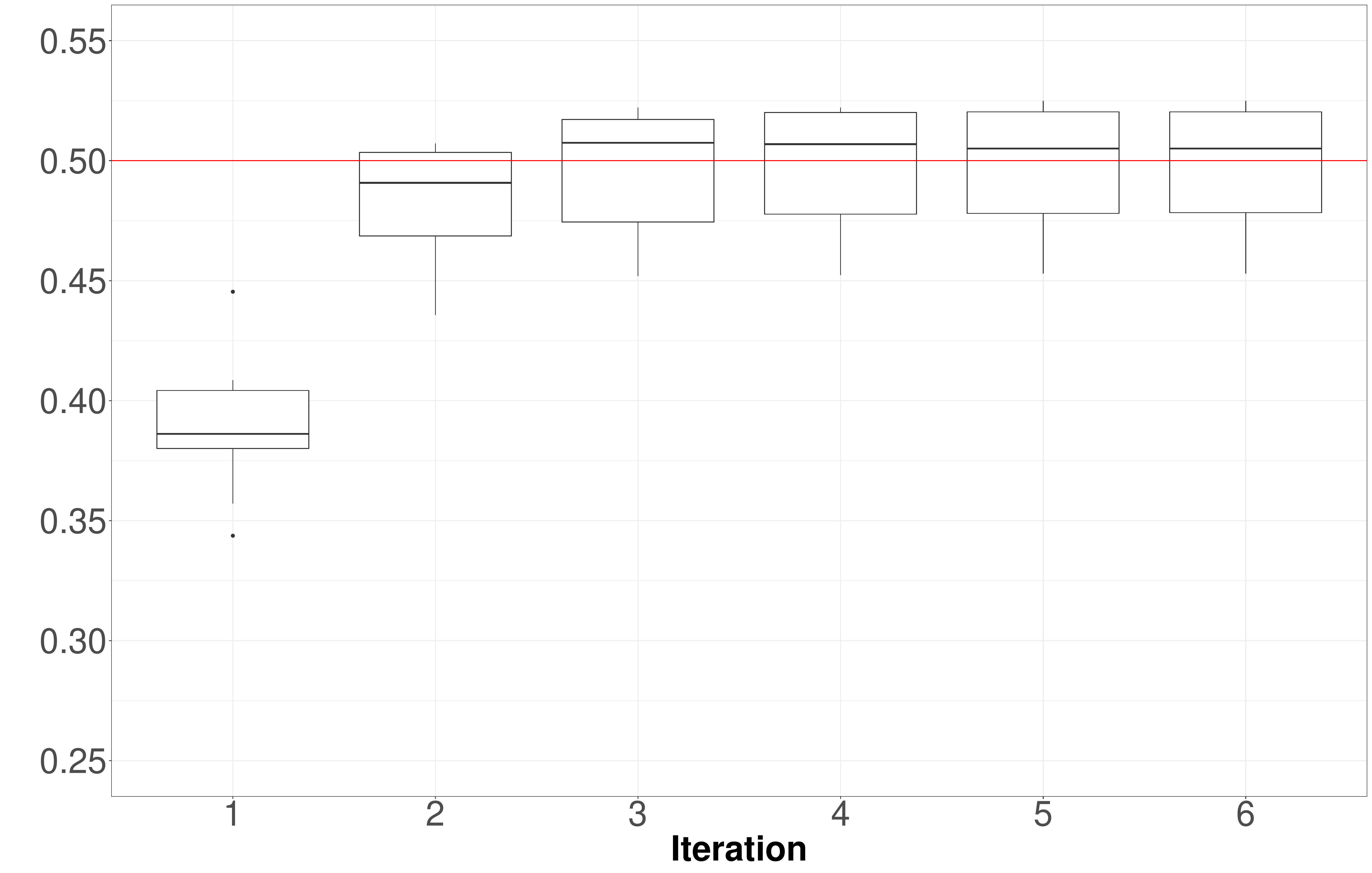}
  &  \includegraphics[width=0.32\textwidth, height=4.5cm]{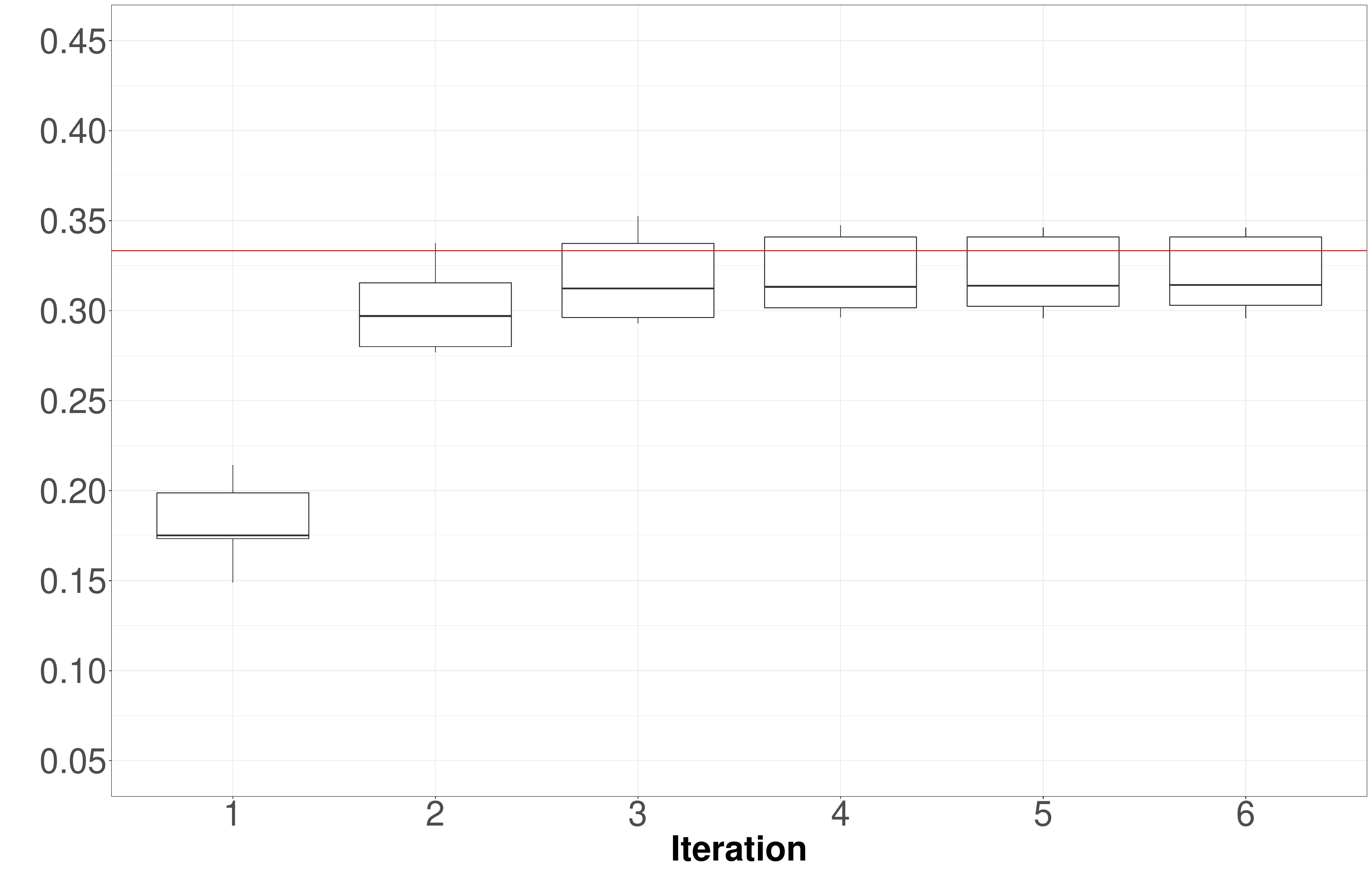} 
  & \includegraphics[width=0.32\textwidth, height=4.5cm]{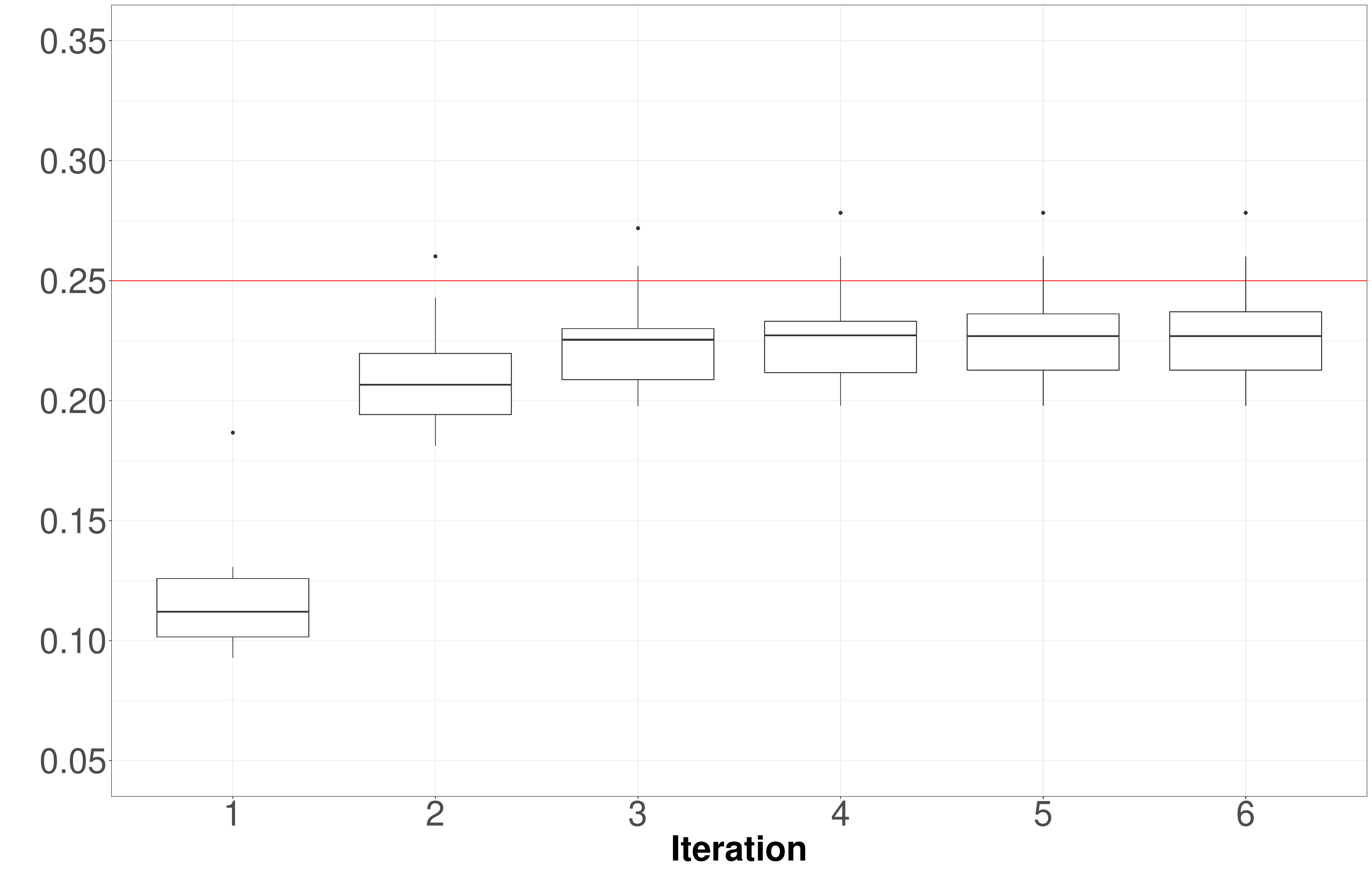}\\
\end{tabular}  
\caption{Boxplots for the estimations of $\boldsymbol{\gamma}^\star$ in Model (\ref{eq:mut_Wt}) with a 10\% sparsity level and $q=1,2,3$ obtained by
  \texttt{fast\_ss}.
 Top: $q=1$ and $\gamma_1^\star=0.5$ (left), $q=2$ and $\gamma_1^\star=0.5$ (middle), $q=2$ and $\gamma_2^\star=0.25$ (right). Bottom: $q=3$ and $\gamma_1^\star=0.5$ (left), $q=3$ and  $\gamma_2^\star=1/3$ (middle), $q=3$ and $\gamma_3^\star=0.25$ (right).
   The horizontal lines correspond to the values of the $\gamma_i^\star$'s.\label{fig:gamma:10:fast}}
 \end{center}
\end{figure}

\begin{figure}[!h]
  \begin{center}
\begin{tabular}{ccc}
  \includegraphics[width=0.32\textwidth, height=4.5cm]{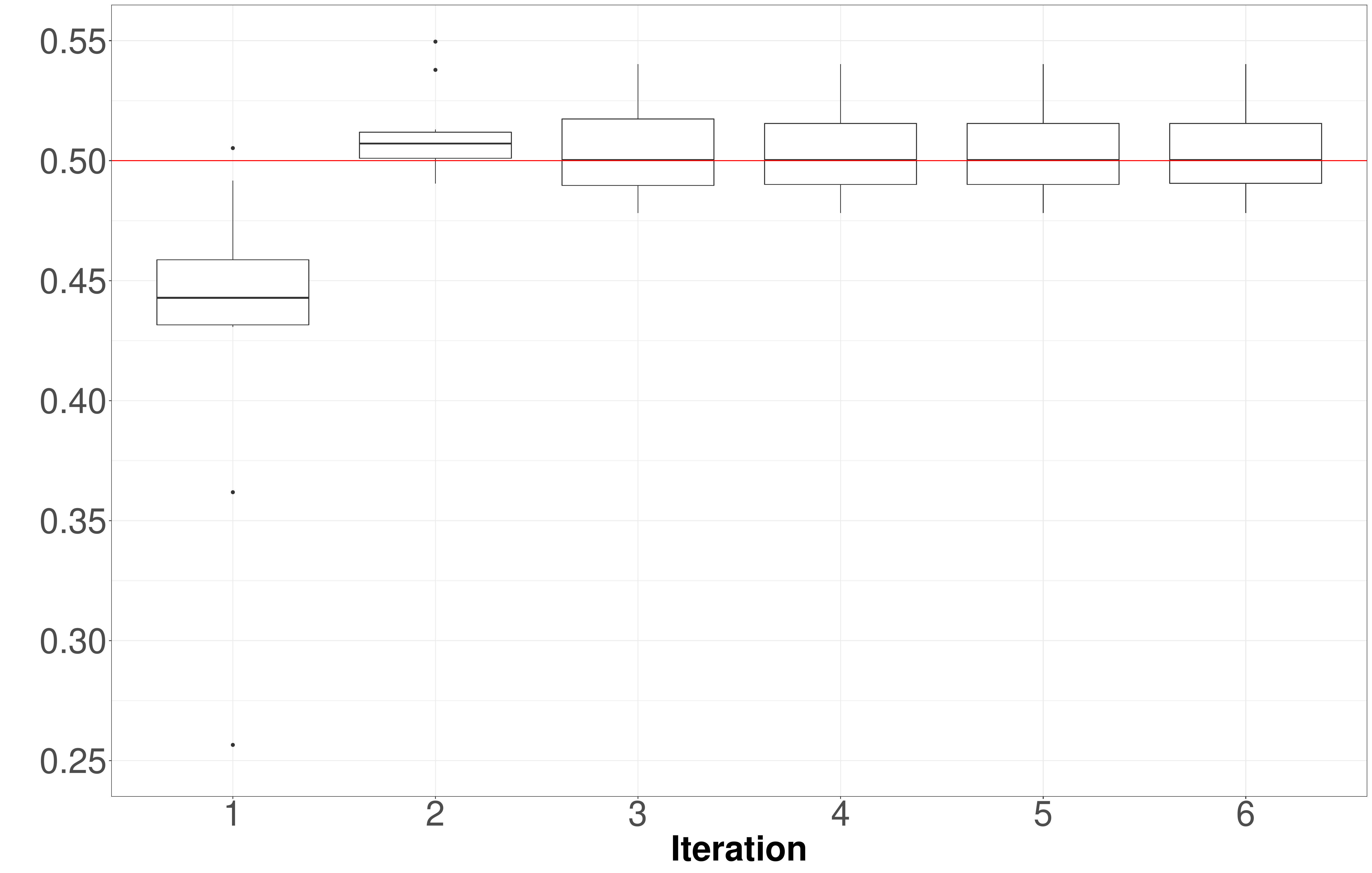}
  &  \includegraphics[width=0.32\textwidth, height=4.5cm]{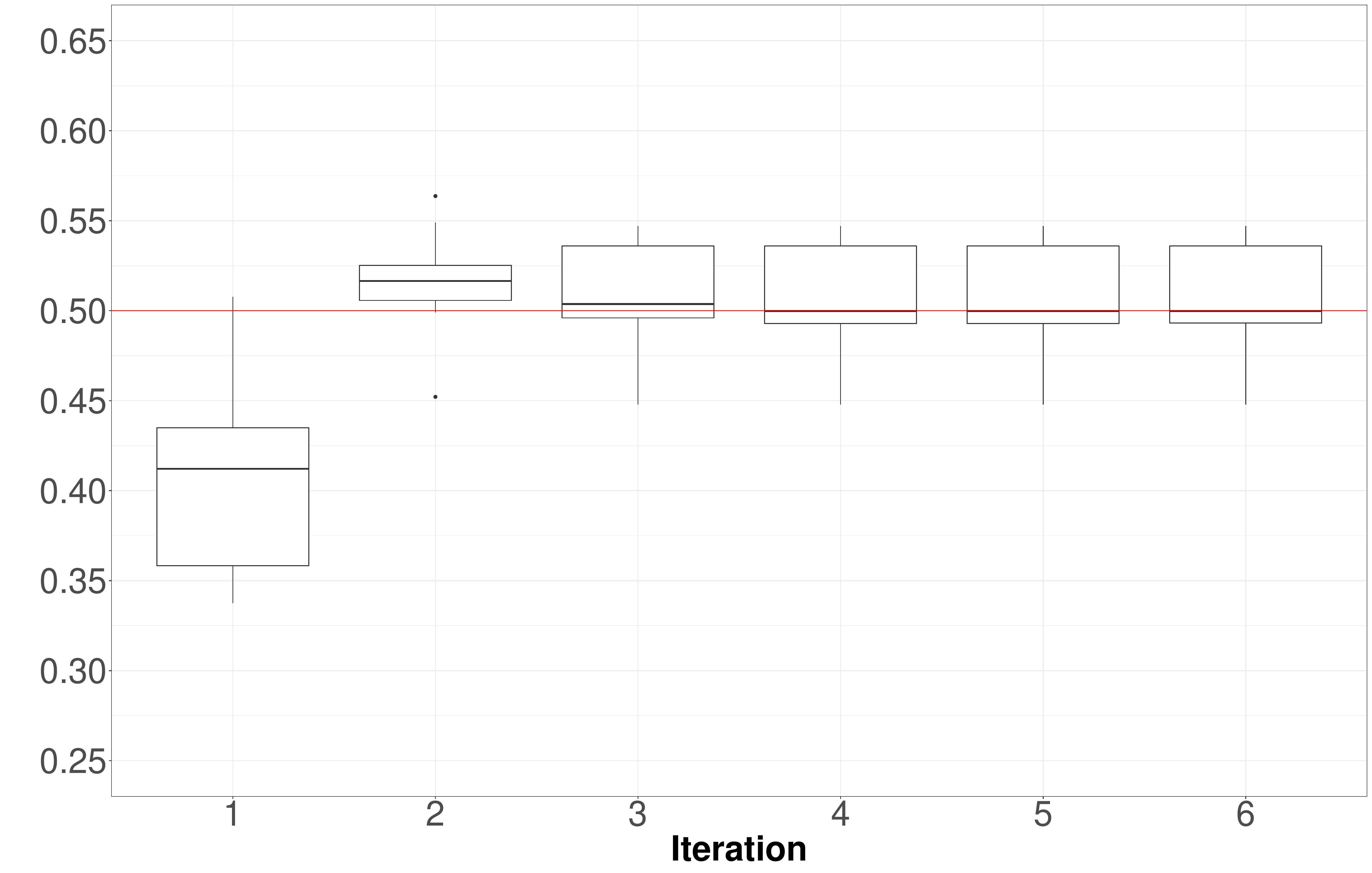} 
  & \includegraphics[width=0.32\textwidth, height=4.5cm]{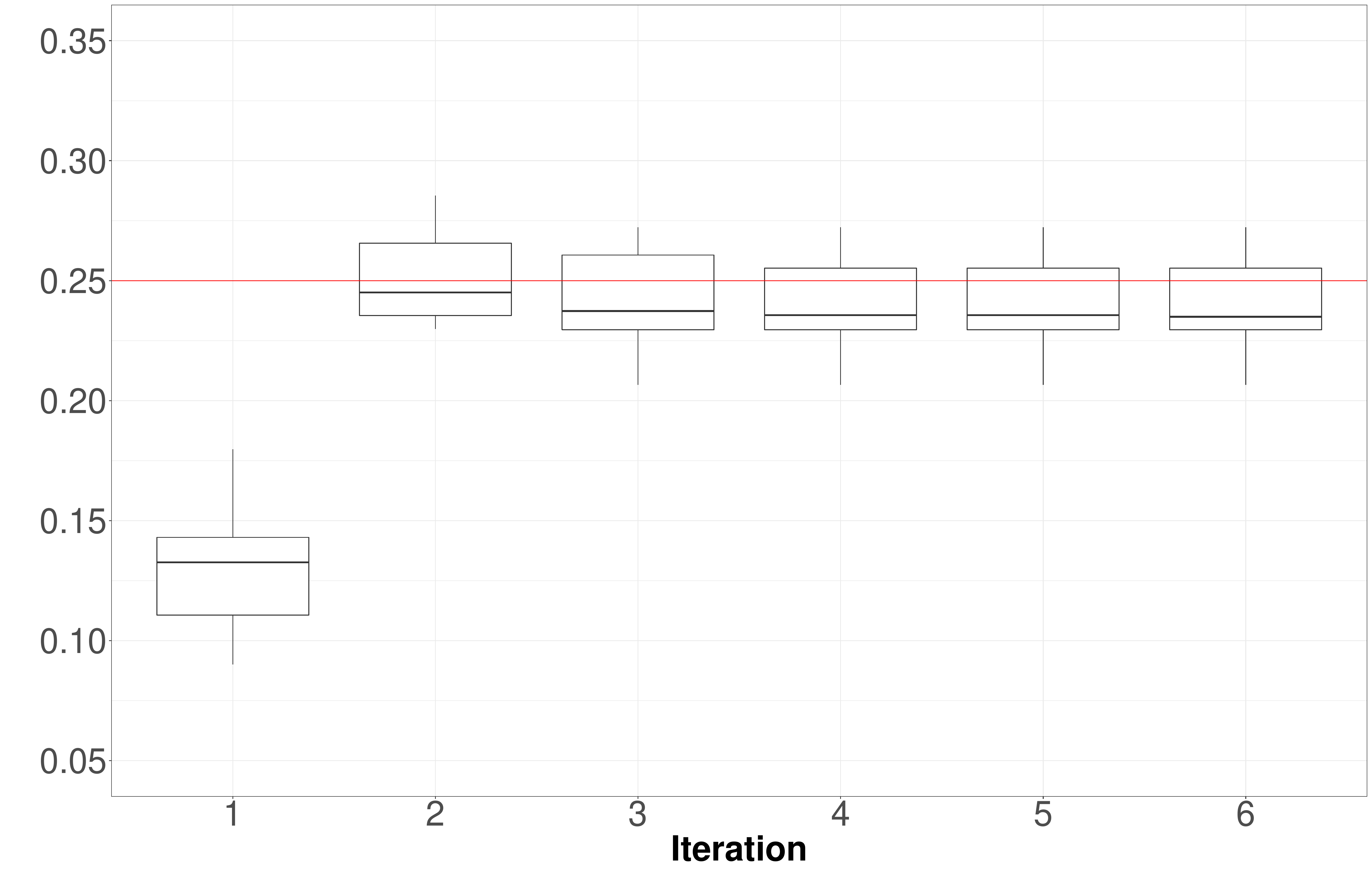}\\
\includegraphics[width=0.32\textwidth, height=4.5cm]{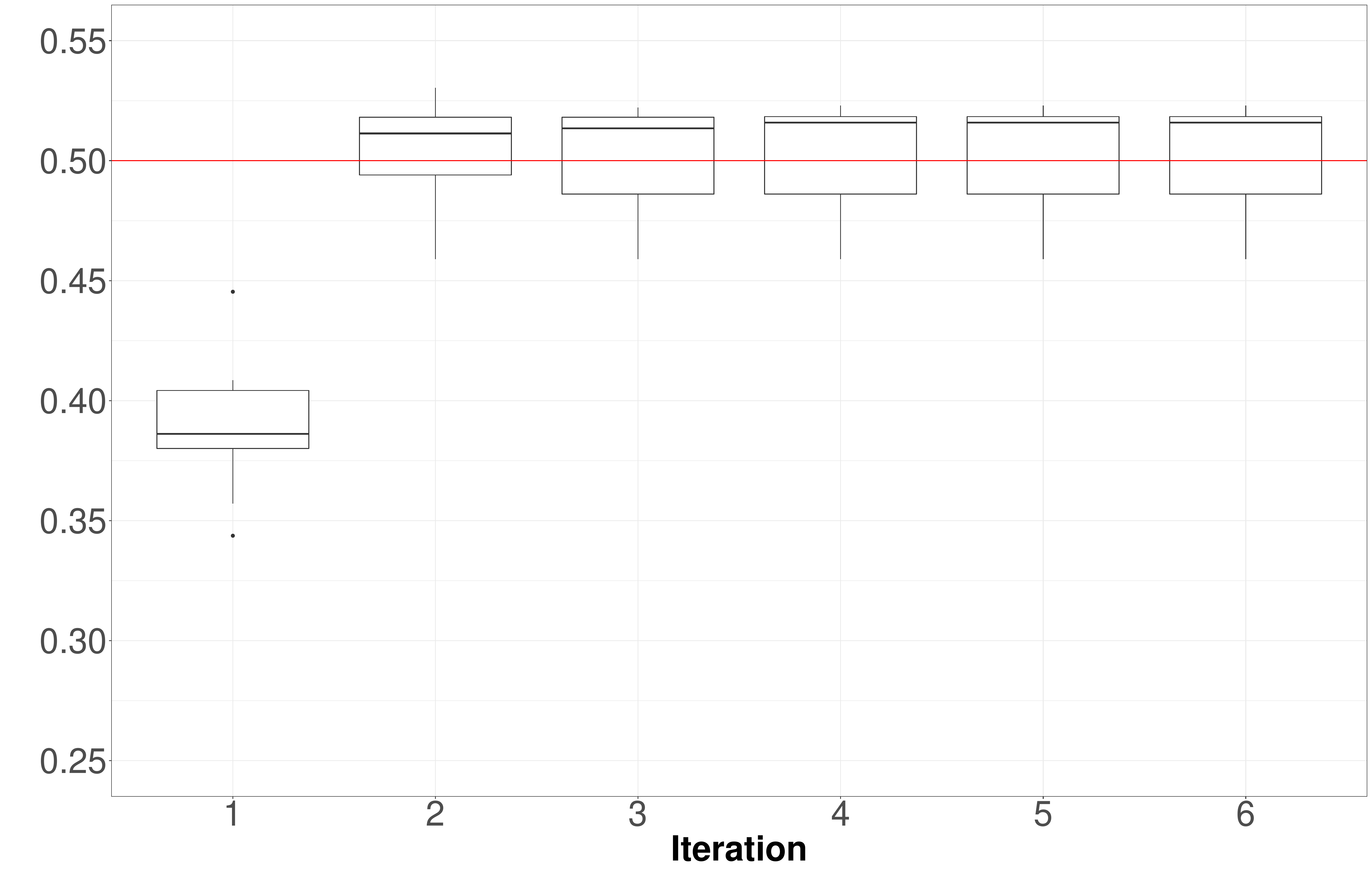}
  &  \includegraphics[width=0.32\textwidth, height=4.5cm]{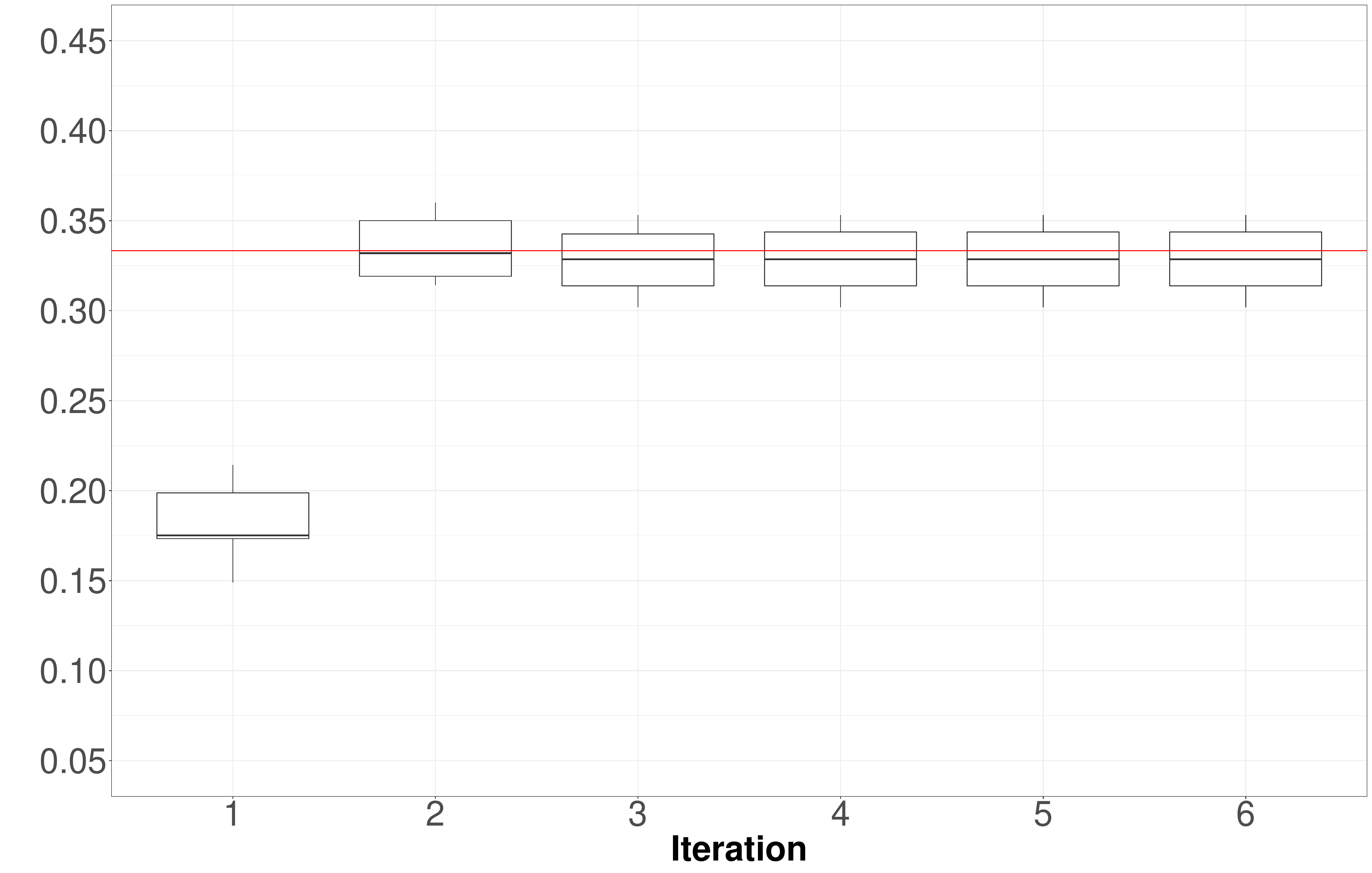} 
  & \includegraphics[width=0.32\textwidth, height=4.5cm]{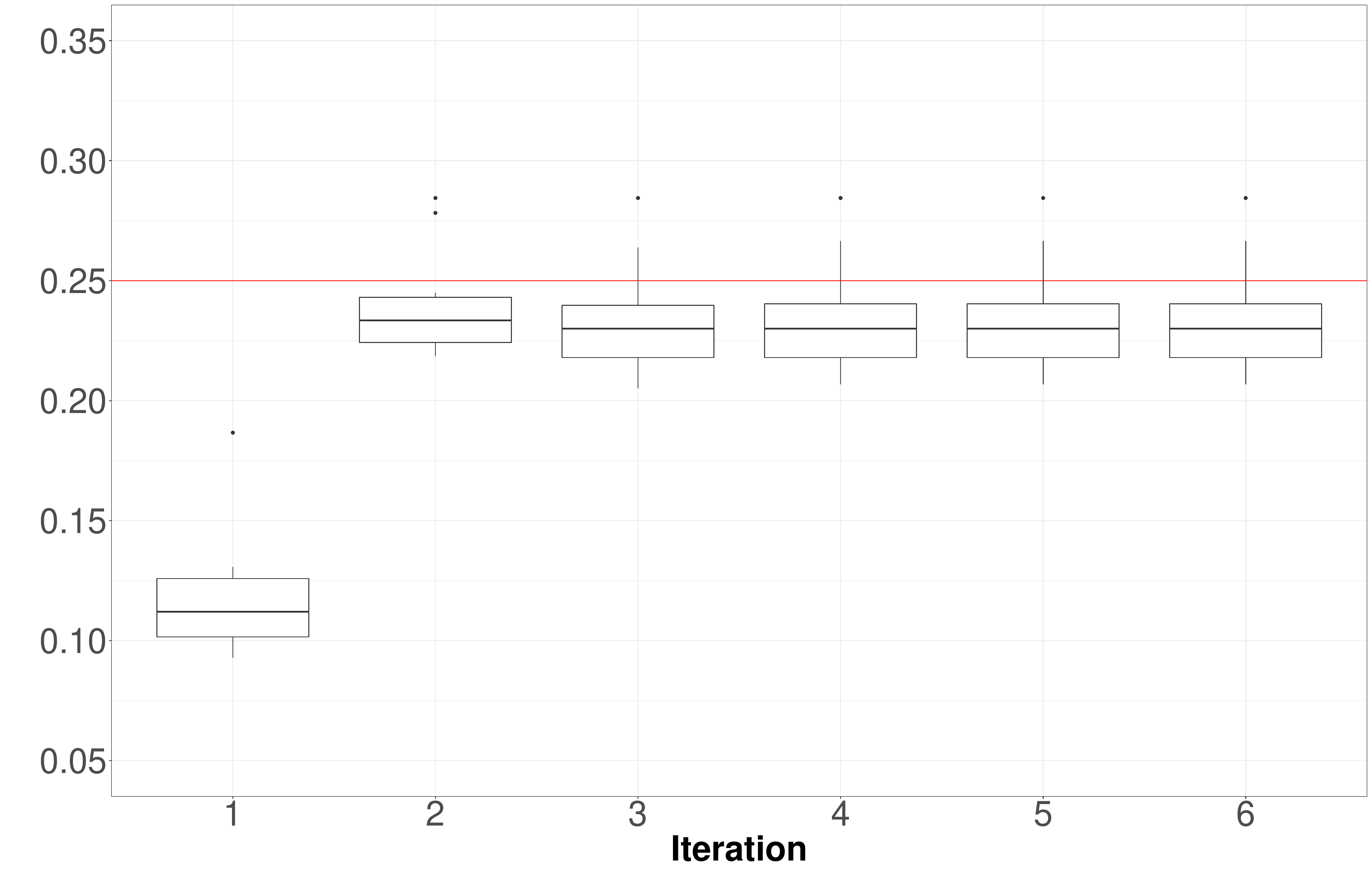}\\
\end{tabular}  
\caption{Boxplots for the estimations of $\boldsymbol{\gamma}^\star$ in Model (\ref{eq:mut_Wt}) with a 10\% sparsity level and $q=1,2,3$ obtained by
  \texttt{ss\_min}.
 Top: $q=1$ and $\gamma_1^\star=0.5$ (left), $q=2$ and $\gamma_1^\star=0.5$ (middle), $q=2$ and $\gamma_2^\star=0.25$ (right). Bottom: $q=3$ and $\gamma_1^\star=0.5$ (left), $q=3$ and  $\gamma_2^\star=1/3$ (middle), $q=3$ and $\gamma_3^\star=0.25$ (right).
   The horizontal lines correspond to the values of the $\gamma_i^\star$'s.\label{fig:gamma:10:min}}
 \end{center}
\end{figure}

\end{document}